\newcommand{\BLUE}[1]{#1}
\newtheorem{lemma}{Lemma}[section]
\newtheorem{remark}{Remark}[section]
\newtheorem{corollary}{Corollary}[section]
\newtheorem{proposition}{Proposition}[section]
\newtheorem{theorem}{Theorem}[section]
\newtheorem{definition}{Definition}[section]
\newtheorem{example}{Example}[section]
\newcommand{\Reals}{\mathbb{R}}
\DeclareMathOperator*{\SP}{SP}
\DeclareMathOperator*{\GSP}{GSP}
\newcommand{\dee}{\mathrm{d}}
\def\[#1\]{\begin{align}#1\end{align}}
\def\*[#1\]{\begin{align*}#1\end{align*}}
\def\EE{\mathbb{E}}
\newcommand{\bigO}{\mathcal{O}}
\DeclareMathOperator*{\diag}{Diag}	
\DeclareMathOperator{\ESJD}{ESJD}
\DeclareMathOperator*{\var}{var}
\def\bone{\mathbf{1}}
\def\Ind{\bone}
\renewcommand{\Pr}{\mathbb{P}}
\def\citeps{\citep}
\def\citets{\citets}
\def\cites{\cite}
\def\TITLE{Stereographic Markov Chain Monte Carlo}
\begin{document}

\begin{frontmatter} 
\title{\TITLE{}}
\runtitle{Stereographic Markov Chain Monte Carlo}

\begin{aug} 
\author[A]{\fnms{Jun} \snm{Yang}\ead[label=e1]{jy@math.ku.dk}}, 
\author[B]{\fnms{Krzysztof} \snm{{\L}atuszyński}\ead[label=e2]{k.g.latuszynski@warwick.ac.uk}}, 
\and \author[B]{\fnms{Gareth O.} \snm{Roberts}\ead[label=e3]{gareth.o.roberts@warwick.ac.uk}}
\address[A]{Department of Mathematical Sciences, University of Copenhagen, Denmark.}
\address[B]{Department of Statistics,	University of Warwick, United Kingdom.}
\end{aug}

\begin{abstract} 
		High-dimensional distributions, especially those with heavy tails, are notoriously difficult for off-the-shelf MCMC samplers: the combination of unbounded state spaces, diminishing gradient information, and local moves results in empirically observed ``stickiness'' and poor theoretical mixing properties -- lack of geometric ergodicity. In this paper, we introduce a new class of MCMC samplers that map the original high-dimensional problem in Euclidean space onto a sphere and remedy these notorious mixing problems. In particular, we develop random-walk Metropolis type algorithms as well as versions of the Bouncy Particle Sampler that are uniformly ergodic for a large class of light and heavy-tailed distributions and also empirically exhibit rapid convergence in high dimensions. 
	In the best scenario, the proposed samplers can enjoy the ``blessings of dimensionality'' that \BLUE{the convergence is faster in higher dimensions.}
\end{abstract}

\begin{keyword}[class=MSC2020]
\kwd[Primary ]{60J20}
\kwd{60J25}
\kwd{65C05} 
\end{keyword}

\begin{keyword}
\kwd{random walk Metropolis}
\kwd{piecewise deterministic Markov processes}
\kwd{stereographic projection}
\kwd{uniform ergodicity}
\kwd{heavy tailed distributions}
\kwd{blessings of dimensionality}
\end{keyword}
\end{frontmatter}
 
%\begin{center}
%		\begin{minipage}{1\linewidth}
%			\setcounter{tocdepth}{2}
%			\tableofcontents
%		\end{minipage}
%\end{center}

    \section{Introduction}\label{section_intro}
    
    Bayesian analysis relies heavily on Markov chain Monte Carlo (MCMC) methods to explore complex posterior distributions. In most typical settings, such distributions have support contained within a subset $S$ of $\Reals^{d}$ for some $d>0$, and then it is natural to construct appropriate MCMC algorithms directly on $S$. In practice, this is how the vast majority of algorithms are constructed, although there are intrinsic problems with this approach. For instance, it is now well-established \citep{Mengersen1996} that the popular vanilla MCMC workhorse, the random-walk Metropolis (RWM) algorithm fails to be uniformly ergodic for any target density $\pi $ when $S$ is unbounded. All existing generic MCMC methods are built upon local proposal mechanisms and are similarly afflicted. Lack of uniform ergodicity results in sensitivity of the algorithm's convergence to its starting value, and potentially long burn-in periods. 
    
    Moreover, these convergence problems are exacerbated for target distributions with heavy (heavier than exponential) tails. For instance, in such cases RWM and the Metropolis-adjusted Langevin Algorithm (MALA) fail to be even geometrically ergodic (i.e., they converge at a rate slower than any geometric rate) \citep{Roberts1996,jarner2000geometric,roberts1996exponential}. In practice, this manifests itself on the algorithm trajectories by the presence of infrequent excursions of heavy-tailed duration into the target distribution tails.
    This can lead to further theoretical and practical problems, e.g., the absence of a Central Limit Theorem (CLT) for all $L^2$ functions, for example \citep{jarner2007convergence}, and instability of Monte Carlo estimators with large and difficult-to-quantify mean square errors which are highly sensitive to initial values.
    
    Important modern innovations in MCMC algorithms have come from Piecewise Deterministic Markov Processes (PDMPs) \citep{bernard2009event,bouchard2018bouncy,bierkens2019zig} which offer for the first time generic recipes for the construction of non-reversible MCMC and often yield substantial gains in algorithm efficiency as a result. Although not completely necessary, it turns out to be convenient and natural to construct PDMPs as continuous time algorithms. While the practical use of PDMPs for posterior exploration is still in its infancy, these methods offer substantial promise. However PDMPs are still localised algorithms and as such also suffer from the lack of uniform and/or geometric ergodicity on unfounded state spaces and/or heavy-tailed targets \citep{vasdekis2021note,vasdekis2021speed,andrieu2021subgeometric,andrieu2021hypocoercivity}.
   
   In contrast, when the target density tails are exponential or lighter, RWM, MALA, and related algorithms
   are generally geometrically ergodic under weak regularity conditions \citep{Roberts1996,jarner2000geometric}.
   Some transformation strategies for achieving this are described in \citep{sherlock2010random,johnson2012variable} with closely related strategies being proposed in \citep{kamatani2018efficient}, although these approaches are unlikely to regain uniform ergodicity. When $S$ is bounded, then MCMC algorithms are generally easily shown to be uniformly ergodic (see for example \citep{Mengersen1996} for RWM). This naturally suggests that transformations designed to compactify $S$ might facilitate the construction of more robust families of MCMC algorithms. However, finding generic solutions to the construction of such transformations which lead to well-behaved densities on the transformed space is challenging.
   
   It is, however, far easier to construct general transformations to transform $\Reals^d$ to a {\em pre-}compact space. The most celebrated of such transformations is \emph{stereographic projection} which was known to the ancient Egyptians, see \citep{coxeter1961introduction} for a modern description.
  It maps $\Reals^d\to \mathbb{S}^d\setminus N$ (where $N$ denotes the {\em North Pole}), i.e., the $d$-sphere excluding its North Pole.
   This paper will explore the use of stereographically projected algorithms. One iteration of such an algorithm
   takes the current state ${\bf x} \in  {\Reals}^d$, transforms to $\mathbb{S}^d\setminus N$ through inverse stereographic projection, carrying out an appropriately constructed MCMC step on $\mathbb{S}^d\setminus N$ before returning to $\Reals^d$ by stereographic projection. The contributions of our work are as follows.
   \begin{enumerate}
       \item 
       We present the Stereographic Projection Sampler (SPS), an efficient and practical implementation of the above programme for RWM. It employs a simple reprojection step to ensure the Markov chain remains on $\mathbb{S}^d\setminus N$. We provide a dimension and scale dependent recipe for choosing the radius of $S$. See \cref{subsec_SPS}.
       \item
        We prove that for continuous positive densities, $\pi $ on $\Reals^d$ with tails no heavier than those of a $d$-dimensional multivariate student's $t$ distribution with $d$ degrees of freedom, SPS is uniformly ergodic. The tail conditions on $\pi$ are in fact necessary for geometric ergodicity. See \cref{thm-uniform-ergodic,remark_uniform_ergodic}.
       \item
       We give a high-dimensional analysis of SPS for the stylised family of product i.i.d.\ form targets, by maximizing the expected squared jumping distance (ESJD) as well as by showing for a variant of SPS that each component converges (as $d\to \infty $) to a Langevin diffusion. This affords a direct and uniformly favorable comparison with the Euclidean RWM algorithm. See \cref{section_scaling}.
             \item
       We also introduce the Stereographic Bouncy Particle Sampler (SBPS) which replaces the random walk move in SPS with a PDMP algorithm that follows great circle trajectories interspersed with abrupt direction changes. See \cref{subsec_SBPS}.
       \item
       We prove the uniform ergodicity of SBPS under weaker conditions than those for SPS, specifically requiring tails to be lighter than those of a $d$-dimensional multivariate student's $t$ distribution with $d-1/2$ degrees of freedom. See \cref{thm_PDMP,coro_PDMP}.
       \item %\RED{JY: on high dimensions:} 
       \BLUE{For isotropic targets (i.e., spherically symmetric targets), both light- and heavy-tailed, we (informally) demonstrate that the proposed SPS and SBPS can enjoy a \emph{blessings of dimensionality} effect which
       implies that 
       %such that the convergence of 
       SPS and SBPS converge arbitrarily faster in higher dimensions than their Euclidean analogues. See \cref{section_isotropic} and \cref{section_simu}.} %\textcolor{red}{[KL: revisit later]}
      % \RED{To Krys/Gareth: clarify ``blessings of dimension''?}
       \item  
       We introduce generalisations of stereographic projection which are suitable for elliptical targets. See \cref{section_ellipse}. The framework we established in this paper opens opportunities for developing other mappings from $\Reals^d$ to $\mathbb{S}^d$ for other classes of targets. See \cref{section_discuss}. %\textcolor{red}{[KL: revisit later]}
   \end{enumerate}

In very recent work, \citep{lie2021dimensionindependent} presents a related methodology for exploring distributions defined directly on a manifold, providing supporting theory showing that under suitable conditions their method's spectral gap is dimension-independent. However, the focus of their work is very different. Their work uses a different reprojection scheme based on \citep{cotter2013mcmc} and does not consider distributions defined directly on $\Reals^d$. Moreover, they consider densities that are absolutely continuous with respect to a Gaussian measure in the infinite-dimensional limit. This is arguably a very restrictive class. So their dimension-independent results are not really comparable with our results. An alternative reprojection scheme is introduced in \citep{zappa2018monte}. Compared to that method, the reprojection scheme used in this paper has the advantage that reprojection from a fixed point on the tangent space is always possible and is a much more natural approach for the hypersphere. The \citep{zappa2018monte} method does have advantages for use in more general manifolds, but this is not of any use to us here.

      There are strong theoretical reasons for wanting to construct the algorithm dynamics directly on a manifold of positive curvature such as a hypersphere \citep{mangoubi2018rapid,ollivier2009ricci,mijatovic2018projections}. However, quite naturally, the existing literature has concentrated primarily on the case of Brownian motion which clearly has the uniform invariant distribution on $S$, or on the geodesic walk designed specifically to target uniform distributions on manifolds.
      The uniform distribution on $S$ maps via stereographic projection to the student's $t$ distribution with $d$ degrees of freedom on $\Reals^d$. Therefore, we can immediately lift theory from existing theory to \BLUE{(informally) show that SPS on such target densities has a dimension-free convergence time.} %\RED{To Krys/Gareth: clarify ``blessings of dimensionality''?} 
      In our paper, we go further. Rapid convergence extends to a large family of spherically symmetric target densities (essentially excluding only very heavy-tailed distributions).

{\BLUE{
However the proposed algorithms are not only applicable for stylized classes of target distributions. To showcase the practical value of stereographic samplers in more realistic statistical contexts, %we compare the proposed SPS with Euclidean RWM in a simple but realistic Bayesian Cauchy regression in the following example.
we shall demonstrate the utility of our methodology on a Bayesian analysis of a Cauchy regression model. Full specification of the model, its parameter values and other details will be given in \cref{subsec_example_Cauchy}. 

\begin{example}\label{example_Cauchy}
Consider $Y_i\sim \textrm{Cauchy}(\alpha+\beta^TX_i,\gamma)$, where $\{X_i,Y_i\}_{i=1}^n$ are the design matrices and responses, respectively, $\alpha\in\Reals$, $\beta\in\Reals^{d-2}$, $\gamma\in\Reals^+$ are the parameters. Assume a flat prior for $\alpha$ and $\beta$, a $\text{Gamma}(a,b)$ prior for $\gamma$ (i.e., $\pi_0(\gamma)\propto \gamma^{a-1}\exp(-b\gamma)$),
the posterior can be written by
\[
\pi(\alpha,\beta,\gamma)\propto \frac{\gamma^{a-1-n}\exp(-b\gamma)}{\prod_{i=1}^n\left\{1+\left[\frac{Y_i-(\alpha+\beta^TX_i)}{\gamma}\right]^2\right\}}.
\]
% Taking RWM as an example (similar calculation can be done for SPS), if the proposal after a random walk from $(\alpha,\beta,\log(\gamma))$ is $(\alpha',\beta',\log(\gamma'))$, the acceptance probability of this proposal is adjusted to $\min\left\{1,\frac{\pi(\alpha',\beta',\gamma')}{\pi(\alpha,\beta,\gamma)}\frac{\gamma'}{\gamma}\right\}$ after taking into account the Jacobian, which equals to
% \[
% \min\left\{1,(\gamma'/\gamma)^{a+n}\exp(-b(\gamma'-\gamma))\prod_{i=1}^n\frac{(Y_i-(\alpha+\beta^T X_i))^2+\gamma^2}{(Y_i-(\alpha'+(\beta')^T X_i))^2+(\gamma')^2}
% \right\}.
% \]
For sampling the posterior, we compare SPS and RWM in \cref{fig-Cauchy-Regression}, where we plot the traceplots for $\alpha$, $\beta_1$ (the first coordinate of $\beta$), and $\log(\gamma)$, as well as $(\alpha,\beta_1)$ for both algorithms. From the figure, one can see clearly that: (1) RWM which is not geometric ergodic completely failed; (2) the proposed SPS which is uniformly ergodic converged extremely fast.
\begin{figure}
		\centering
\includegraphics[width=\textwidth]{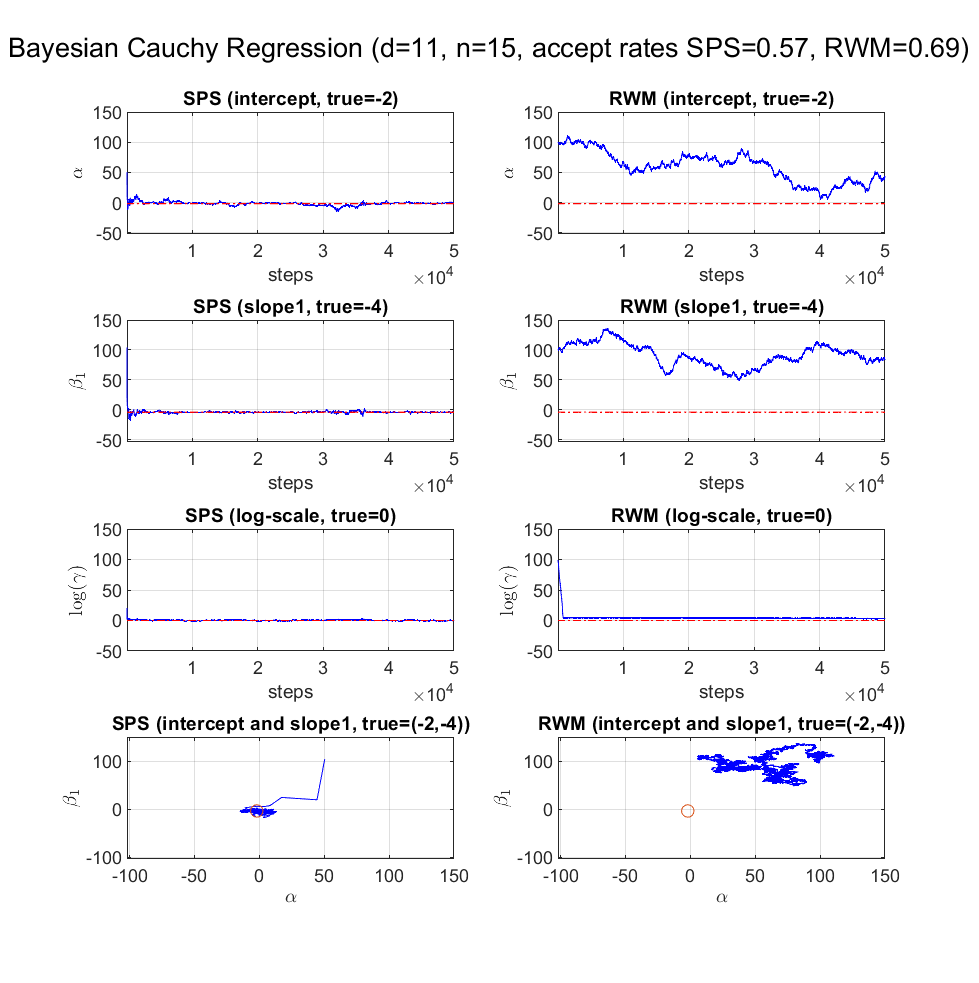}
		\caption{\BLUE{Bayesian Cauchy Regression. Red dotted lines and circles represent the parameters of the true sampling distribution.
 }}
		\label{fig-Cauchy-Regression}
  %(MATLAB code: NEW\_RWM\_TOY\_Example\_Trace.m)
	\end{figure}
\end{example}
} }

   Our paper is structured as follows. \cref{section_formulate} introduces both SPS and SBPS algorithms and presents formal ergodicity and uniform ergodicity results for both methods. \cref{section_isotropic,section_ellipse,section_scaling} provide additional results for SPS and its generalisations. In \cref{section_isotropic} a detailed analysis of SPS for isotropic densities is provided, while in \cref{section_ellipse}, a generalised version of the algorithm is introduced. \cref{section_ellipse} also gives robustness results to departures from isotropy. In \cref{section_scaling} we provide a high-dimensional analysis of  SPS for the stylised family of product i.i.d.\ target densities and give positive comparisons to the Euclidean RWM algorithm. Numerical studies on both SPS and SBPS are provided in \cref{section_simu} to illustrate our theory and we conclude with the discussions on SPS and SBPS in \cref{section_discuss}. %\RED{[KL: SBPS is only in the first sentence but not only in Sec 2 of the paper]}
   %\RED{[KL: Short par about what is in supplementary materials?] 
   The supplementary materials contain all the technical proofs and some additional simulations. %\cref{section_proof} contains all the technical proofs and \cref{sec_more_simu} includes additional simulations.

\section{Stereographic Markov Chain Monte Carlo}  \label{section_formulate}
	\subsection{Stereographic Projection Sampler (SPS)}	\label{subsec_SPS}
	
	We begin by giving a brief description of stereographic projection and then describe in detail two novel MCMC samplers that exploit the properties of stereographic projection.
	
	Let $\mathbb{S}^{d}$ denote the  \emph{unit sphere} in $\Reals^{d+1}$ centered at the origin.
A stereographic projection describes a bijection
from  $\mathbb{S}^{d}\setminus \{(0,\ldots 0,1)\}$ to $\Reals^d$. Within this paper, we shall restrict attention to projections indexed by a single parameter $R \in \Reals ^+$ and described by
the mapping
\*[
	x=\SP(z):=\left(R\frac{z_1}{1-z_{d+1}},\dots,R\frac{z_d}{1-z_{d+1}}\right)^T,
	\]
with Jacobian determinant at $x \in \Reals^d$ satisfying
\begin{equation}\label{eq_jacobian}
    J_{\SP}(x) \propto (R^2+ \|x\|^2)^{d},
\end{equation}
and inverse $\SP^{-1}: \Reals^d \to \mathbb{S}^{d}\setminus \{(0,\ldots 0,1)\}$ given by 
\[\label{eq_SP}
z_i = {2Rx_i \over \| x\|^2 + R^2}, \quad \forall 1\le i \le d,\qquad 
z_{d+1}={\| x\|^2 - R^2 \over \| x\|^2 + R^2}.
\]
See \cref{fig-stereo} for a geometric illustration of this stereographic projection
	(in the case $d=1$ for simplicity) and \cref{proof_jacobian} for the proof of the Jacobian determinant \cref{eq_jacobian}.

	%Denote the state of the Markov chain by $x=(x_1,\dots,x_d)^T\in\Reals^{d}$.
	%Suppose $z=(z_1,\dots,z_{d+1})^T$ is the coordinates of a \emph{unit sphere $\mathbb{S}^{d}$ in $\Reals^{d+1}$} (that is, $\|z\|=1$). We map $z\in \mathbb{S}^{d}$ to $x\in \Reals^{d}$ by the stereographic projection (SP)
	%\[
	%x=\SP(z):=\left(R\frac{z_1}{1-z_{d+1}},\dots,R\frac{z_d}{1-z_{d+1}}\right)^T,
	%\]
	%where $R$ is the radius parameter. See \cref{fig-stereo} for an illustration of stereographic projection.

\begin{figure}
\centering
\begin{minipage}{.49\linewidth}
  \includegraphics[width=\linewidth]{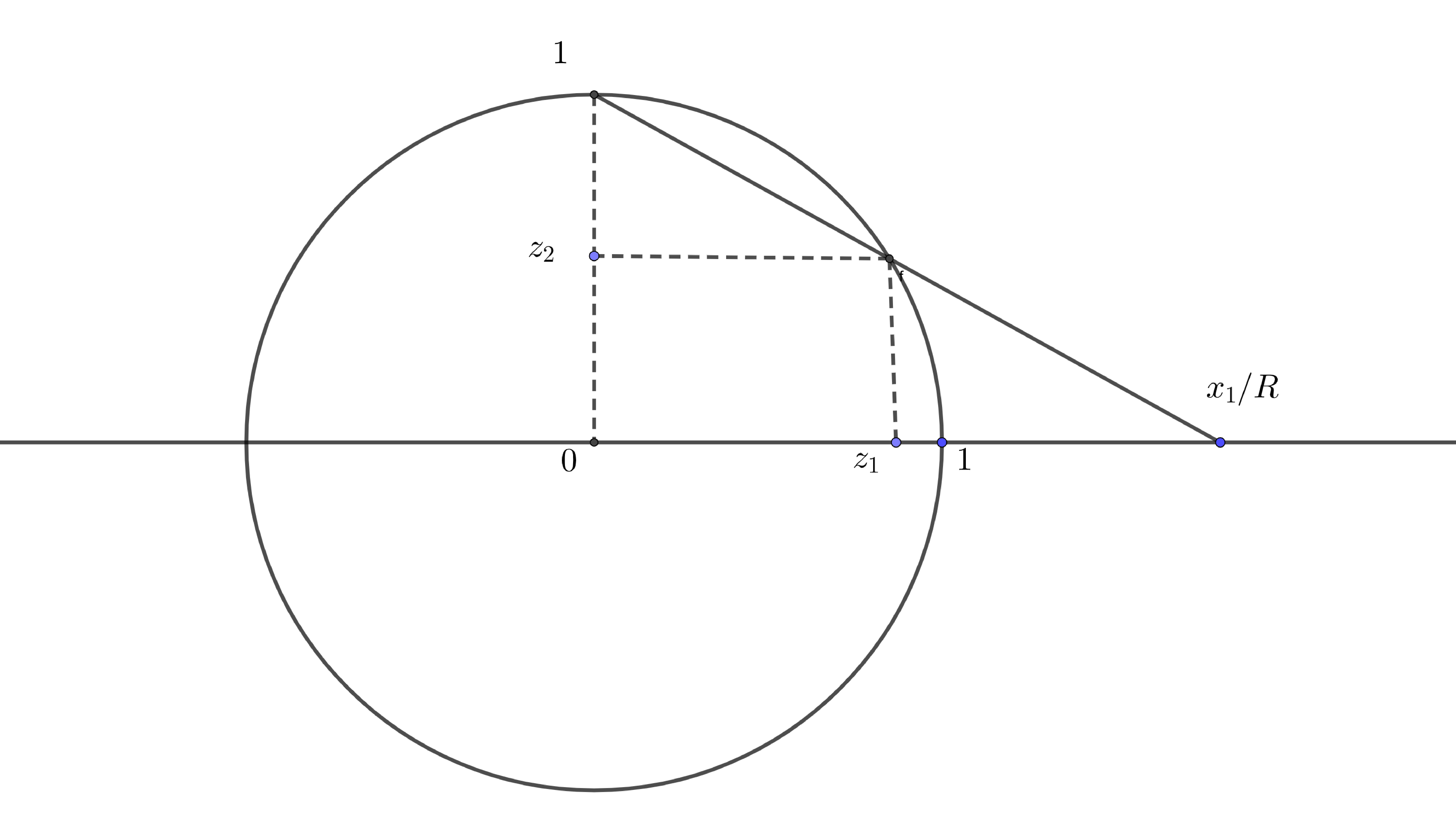}
 \caption{Illustration of Stereographic Projection $\SP: \mathbb{S} \to \Reals$.}
	\label{fig-stereo}
\end{minipage}
%\hspace{.05\linewidth}
\begin{minipage}{.49\linewidth}
  \includegraphics[width=\linewidth]{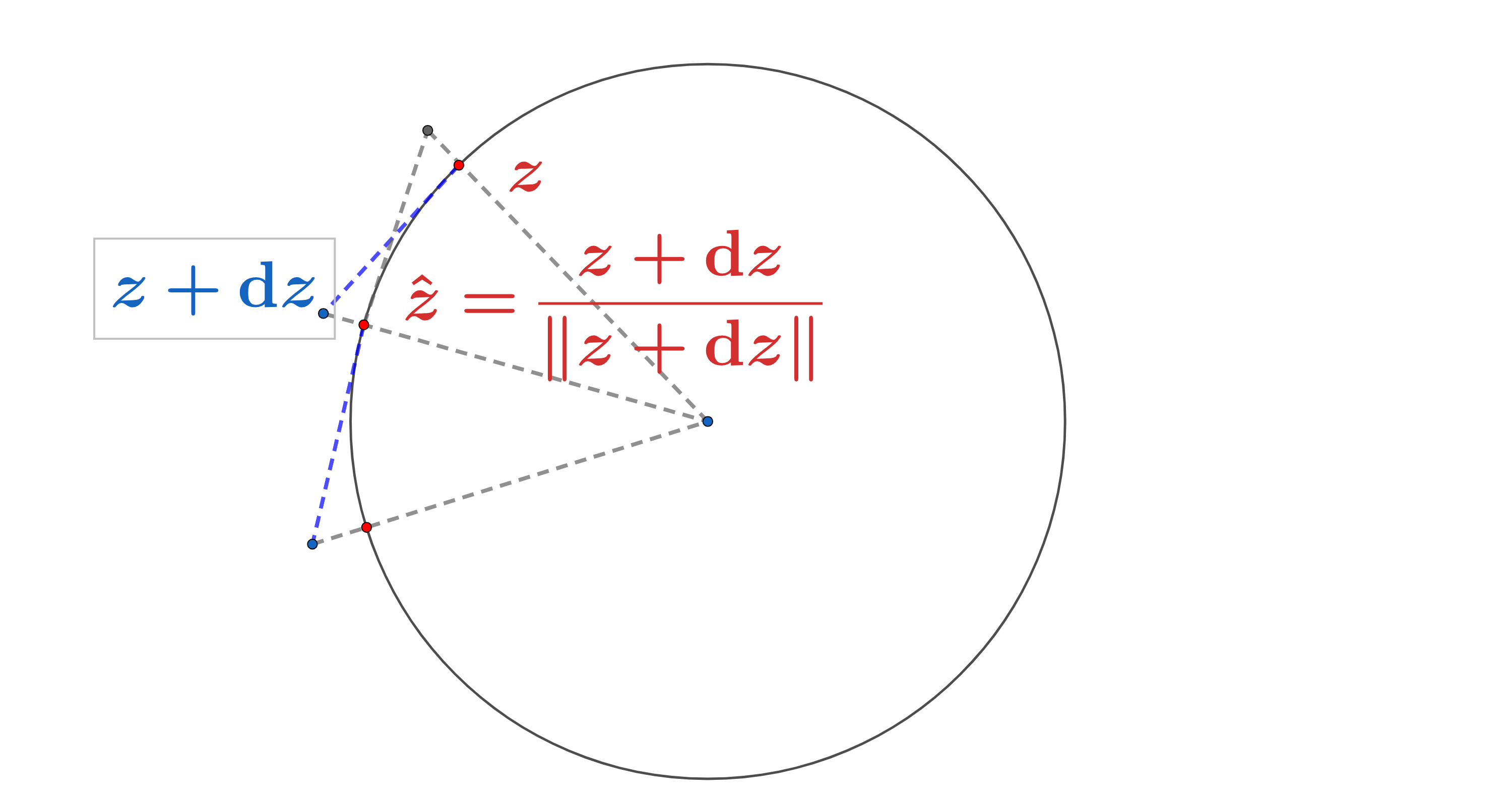}
	\caption{Illustration of SPS proposing $\hat{z}=\SP^{-1}(\hat{X})$ from $z=\SP^{-1}(x)$. By symmetricity, the proposal distribution is the same as proposing $z$ from $\hat{z}$.}
		\label{fig-algo}
\end{minipage}
\end{figure}

	% \begin{figure}[b]
	% 	\centering
	% 	\includegraphics[width=0.7\textwidth]{images/SPS-proposal.png}
	% 	\caption{Illustration of SPS proposing $\hat{z}=\SP^{-1}(\hat{X})$ from $z=\SP^{-1}(x)$. By symmetricity, the proposal distribution is the same as proposing $z$ from $\hat{z}$.}
	% 	\label{fig-algo}
	% \end{figure}

% 		\begin{figure}[b]
% 	\centering
% 	\includegraphics[width=0.7\textwidth]{}
% 	\caption{Illustration of Stereographic Projection $\SP: \mathbb{S} \to \Reals$.}
% 	\label{fig-stereo}
% \end{figure}
	
	Now suppose we wish to sample from a target density $\pi (x)$ where $x \in \Reals^d$.
	Our aim is to take advantage of our stereographic bijection to construct an MCMC sampler directly on $\mathbb{S}^{d}$, projecting the output back onto $\Reals ^d$. 
	We denote the transformed target as $\pi_S(z)$ then for $x=\SP(z)$ we have
	\[\label{eq_pi_S}
	\pi_S(z)&\propto \pi(x)(R^2+\|x\|^2)^d.
	\]
	
	First, we just consider a random-walk Metropolis algorithm 
	\BLUE{with a step size $h$ on the unit sphere. Note that we describe our algorithms on the unit sphere. The radius $R$ will only be taken into account when projecting to $\Reals^d$.} \cref{fig-algo} illustrates how the algorithm moves are constructed on the unit sphere. (Note that we could very easily have constructed more general Metropolis--Hastings algorithms.)

	\begin{algorithm}\label{our_algo_SPS}
		\caption{Stereographic Projection Sampler (SPS)}
	\begin{itemize}
		\item Let the current state be $X^d(t)=x$;
		\item Compute the proposal $\hat{X}$:
		\begin{itemize}
			\item Let $z:=\SP^{-1}(x)$;
			\item Sample independently $\dee\tilde{z}\sim \mathcal{N}(0,h^2I_{d+1})$;
			\item Let $\dee z:= \dee \tilde{z}-\frac{(z^T\cdot\dee\tilde{z})z}{\|z\|^2}$ and $\hat{z}:=\frac{z+\dee z}{\|z+\dee z\|}$;
			\item The proposal $\hat{X}:=\SP(\hat{z})$.
		\end{itemize}
		\item $X^d(t+1)=\hat{X}$ with probability $1\wedge\frac{\pi(\hat{X})(R^2+\|\hat{X}\|^2)^d}{\pi(x)(R^2+\|x\|^2)^d}$; otherwise $X^d(t+1)=x$.
	\end{itemize}
	\end{algorithm}
	
	The symmetry of the re-projection mechanism in the proposal of the algorithm means that SPS is indeed a valid MCMC algorithm for $\pi$. More formally, we have the following result.
	
	\begin{proposition}
	\label{prop-SPSergodic} If $\pi(x)$ is positive and continuous in $\Reals^d$, then
	SPS gives rise to an ergodic Markov chain on $\Reals^d$ with invariant distribution $\pi $.
		\end{proposition}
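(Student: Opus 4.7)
The plan is to lift the dynamics to the sphere, verify that SPS is a standard Metropolis--Hastings chain on $\mathbb{S}^d$ with the correct invariant distribution $\pi_S$ from \eqref{eq_pi_S}, and then push the conclusion back to $\Reals^d$ via the stereographic bijection. Since $\SP$ is a diffeomorphism between $\Reals^d$ and $\mathbb{S}^d \setminus \{N\}$, and the Jacobian \eqref{eq_jacobian} was designed precisely so that $\pi_S$ is the pushforward of $\pi$ under $\SP^{-1}$, invariance of $\pi$ for the chain on $\Reals^d$ is equivalent to invariance of $\pi_S$ for the induced chain on $\mathbb{S}^d\setminus\{N\}$.

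First, I would analyse the proposal kernel $q(z,\hat z)$ on $\mathbb{S}^d$. The construction samples $\dee\tilde z \sim \mathcal{N}(0,h^2 I_{d+1})$, subtracts its radial component to obtain a tangent vector $\dee z\in T_z\mathbb{S}^d$, and then normalises $z+\dee z$ back onto the sphere. The key step is verifying symmetry: $q(z,\hat z) = q(\hat z, z)$ with respect to the uniform surface measure on $\mathbb{S}^d$. This can be established by a rotational argument: the Gaussian $\mathcal{N}(0,h^2 I_{d+1})$ is invariant under the orthogonal group, and the orthogonal transformation that swaps $z$ and $\hat z$ (a reflection through the bisecting great circle of the geodesic connecting them) maps the tangent-and-renormalise construction at $z$ into the analogous construction at $\hat z$. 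This invariance under such a reflection, together with the Jacobian of the map $\dee\tilde z \mapsto \hat z$ being a function only of $\|z+\dee z\|$ (hence symmetric in the two endpoints), gives the desired symmetry.

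Once symmetry is established, the algorithm's acceptance ratio $1 \wedge \pi(\hat X)(R^2+\|\hat X\|^2)^d/[\pi(x)(R^2+\|x\|^2)^d]$ is exactly $1 \wedge \pi_S(\hat z)/\pi_S(z)$ by \eqref{eq_pi_S}, so the induced chain on $\mathbb{S}^d$ is the standard Metropolis--Hastings chain for $\pi_S$ with symmetric proposal, and is therefore $\pi_S$-reversible. Pushing back through $\SP$, the chain $(X^d(t))$ is $\pi$-reversible, hence $\pi$ is invariant.

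For ergodicity, I would verify $\pi$-irreducibility and aperiodicity. The proposal density on the sphere is strictly positive on all of $\mathbb{S}^d \setminus \{-z\}$ (the only unreachable point is the antipode, a null set), since the Gaussian tangent component has full support in $T_z\mathbb{S}^d$; transporting back, the one-step proposal on $\Reals^d$ has strictly positive density with respect to Lebesgue measure on $\Reals^d$. Combined with positivity and continuity of $\pi$, this yields irreducibility with respect to Lebesgue measure on $\Reals^d$ and also the standard small-set/strong-Feller conditions. Aperiodicity follows because the Metropolis acceptance probability is strictly less than one on a set of positive measure (by continuity of $\pi$ and of $(R^2+\|x\|^2)^d$), so the chain has a positive holding probability at every such state. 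Invoking the standard Harris recurrence criterion (e.g.\ Tierney's theorem) gives ergodicity with unique invariant distribution $\pi$. The main obstacle is the symmetry verification of the reprojection proposal; once that is pinned down by the rotational-invariance argument above, the rest is routine Metropolis--Hastings theory combined with the change-of-variables formula.
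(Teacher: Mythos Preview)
Your overall strategy---establishing proposal symmetry via rotational invariance, then invoking standard Metropolis--Hastings reversibility, then irreducibility and aperiodicity---is sound and aligns with the paper. However, there is a concrete error in your irreducibility step: the one-step proposal does \emph{not} have support on all of $\mathbb{S}^d \setminus \{-z\}$. Since $\dee z \perp z$ and $\|z\|=1$, we have $(z+\dee z)\cdot z = 1$, so after normalisation $\hat z \cdot z = 1/\|z+\dee z\| > 0$. Thus from $z$ the proposal only reaches the \emph{open hemisphere} $\{w \in \mathbb{S}^d : w\cdot z > 0\}$, whose complement has positive surface measure. Consequently your claim that ``the one-step proposal on $\Reals^d$ has strictly positive density with respect to Lebesgue measure on $\Reals^d$'' is false, and the one-step irreducibility/small-set argument as written does not go through.

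The repair is straightforward: irreducibility still holds via multi-step transitions, since three hemispherical steps suffice to connect any pair of points on $\mathbb{S}^d$. The paper handles this precisely by borrowing the three-step minorization $P^3(z,A)\ge \epsilon\,Q(A)$ established in the proof of \cref{thm-uniform-ergodic}, which simultaneously delivers $\pi$-irreducibility and positive Harris recurrence. Your route via Tierney-style arguments would also work once the one-step support claim is replaced by the correct multi-step one.
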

		\begin{proof}
		See \cref{proof-prop-SPSergodic}. 
	\end{proof}

    \def\state{E}
\BLUE{
We shall address the chain's uniform ergodicity properties.
Recall that a Markov chain $X$ on state space $\state$ with transition kernel $P$ is {\em uniformly ergodic} if $\forall \epsilon >0$, there exists, $N\in {\Bbb N}$ such that $\|
P^N(x, \cdot ) - \pi
\|_{\textrm{TV}}\le \epsilon, \forall x\in\state$, where $\pi$ denotes the chain's (unique) invariant distribution and $\| \cdot \|_{\textrm{TV}}$ represents total variation distance. Intuitively, for uniformly ergodic chains, we cannot have ``arbitrarily bad'' starting values. }

	It is long recognised that random-walk Metropolis (RWM) algorithms on bounded spaces are usually uniformly ergodic, while the same algorithms on unbounded spaces are never uniformly ergodic \citep{Mengersen1996}. Therefore, given the compactness of $\mathbb{S}^{d}$, it is reasonable to hope that SPS might be uniformly ergodic under mild regularity conditions on $\pi $. Since the actual state space of SPS is in fact $\mathbb{S}^{d}\setminus N$ 
	which is not compact, this question is more complicated than in the Euclidean state space case as we need to consider the properties of the transformed density near $N$. However, our first main result confirms that we do get uniform ergodicity if and only if the transformed density on the sphere is bounded at $N$.
	
%		\RED{To Krys/Gareth: define uniform ergodicity for (discrete) Markov chains here?}
	
	\begin{theorem}\label{thm-uniform-ergodic}
		If $\pi(x)$ is positive and continuous in $\Reals^d$, then SPS is uniformly ergodic if and only if
		\[\label{cond_uniform_ergodic}
		\sup_{x\in \Reals^d} \pi(x)(R^2+\|x\|^2)^d<\infty.
		\]
	\end{theorem}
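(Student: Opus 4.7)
By the construction in \cref{our_algo_SPS} and the change-of-variable \cref{eq_pi_S}, I view SPS as a symmetric Metropolis--Hastings chain on $\mathbb{S}^d\setminus\{N\}$ with target density $\pi_S$ against the surface measure $\sigma$. The plan starts by observing that the rotational invariance of $\dee\tilde z\sim\mathcal{N}(0,h^2 I_{d+1})$ and of the projection-plus-normalization step implies that the proposal density $q(z,\hat z)$ depends only on the angle $\theta=\angle(z,\hat z)\in[0,\pi/2)$; a direct computation gives $q(\theta)=(2\pi h^2)^{-d/2}e^{-\tan^2\theta/(2h^2)}\sec^{d+1}\theta$, which is continuous and strictly positive on $[0,\pi/2)$, vanishes as $\theta\to\pi/2$, and attains a finite maximum $q_{\max}$. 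Let $Z:=\int_{\mathbb{S}^d}\pi_S\,\dee\sigma$.

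For the \emph{sufficiency} direction, assume $M:=\sup\pi_S<\infty$. I would establish Doeblin's condition
\[
P^n(z,\cdot)\ \geq\ \varepsilon\,\pi(\cdot)\qquad\text{for all }z\in\mathbb{S}^d\setminus\{N\},
\]
for some integer $n$ and constant $\varepsilon>0$, which is well known to be equivalent to uniform ergodicity. Fix $\theta_0\in(0,\pi/2)$; then $c_0:=\min_{\theta\in[0,\theta_0]}q(\theta)>0$, and for any $\hat z$ with $\angle(z,\hat z)\leq\theta_0$ I have $q(z,\hat z)\geq c_0$ and $\alpha(z,\hat z)\geq\pi_S(\hat z)/M$. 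Taking $n=\lceil\pi/\theta_0\rceil+1$ I would chain these local moves along a geodesic from $z$ to an arbitrary target $y$, integrating over intermediate positions inside tubular angular corridors, to produce a uniform lower bound of the form $P^n(z,\dee y)\geq\varepsilon\,\pi_S(y)\,\dee\sigma(y)$.

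For the \emph{necessity} direction, I would argue by contradiction: suppose $\pi_S$ is unbounded and pick $z_n\in\mathbb{S}^d\setminus\{N\}$ with $M_n:=\pi_S(z_n)\to\infty$. Define $A_n:=\{z:\pi_S(z)\geq M_n/2\}$. On $A_n^c$ the Metropolis ratio at $z_n$ is at most $\pi_S(\hat z)/M_n$, so
\[
P(z_n,A_n^c)\ \leq\ \int_{A_n^c}q(z_n,\hat z)\frac{\pi_S(\hat z)}{M_n}\dee\sigma(\hat z)\ \leq\ \frac{q_{\max}}{M_n}\int_{\mathbb{S}^d}\pi_S\,\dee\sigma\ =\ \frac{q_{\max}Z}{M_n}\ \to\ 0,
\]
whence $P(z_n,A_n)\to 1$. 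Markov's inequality yields $\sigma(A_n)\leq 2Z/M_n\to 0$, and by absolute continuity of $\pi$ with respect to $\sigma$ also $\pi(A_n)\to 0$. For any fixed $k$ this forces $\|P^k(z_n,\cdot)-\pi\|_{TV}\geq P(z_n,A_n)^k-\pi(A_n)\to 1$, contradicting uniform ergodicity.

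The hard part will be the sufficiency direction, specifically making the multi-step minorization uniform in $z\in\mathbb{S}^d\setminus\{N\}$ including as $z$ approaches the north pole. The plan is to route the chosen geodesic from $z$ to $y$ so that its tubular corridors avoid a fixed small neighborhood of $N$; since $\pi_S$ is continuous, positive, and bounded on the complement of that neighborhood, the products of acceptance ratios along intermediate points can then be controlled uniformly, giving an absolutely continuous lower bound against $\pi$.
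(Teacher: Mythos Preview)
Your overall strategy is sound and closely parallels the paper's argument, with two notable differences worth highlighting.

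\textbf{Sufficiency.} Both you and the paper establish a multi-step minorization by lower-bounding the proposal density on angular caps and the acceptance ratio via $\pi_S(\hat z)/M$, while routing intermediate points away from a fixed neighborhood of the north pole so that $\pi_S$ is bounded below there. The paper exploits one feature you do not: since the proposal density is strictly positive on the \emph{entire} open hemisphere $\{\hat z:\angle(z,\hat z)<\pi/2\}$, three steps suffice to connect any pair of points, and the paper works directly with hemispheres $\mathrm{HS}(z,\epsilon')$ rather than thin geodesic tubes. This buys a cleaner and shorter argument; your $n=\lceil\pi/\theta_0\rceil+1$ geodesic-corridor construction is correct in principle but is more delicate to make uniform (in particular, when $z$ itself lies in the excluded polar cap you must ensure the first corridor already lands in the safe region, which works because the cap can be taken of angular radius $<\theta_0$). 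If you want to streamline, simply take $\theta_0$ close to $\pi/2$ and use three hemisphere-sized steps as the paper does.

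\textbf{Necessity.} Here you go further than the paper, which simply cites a result of Roberts and Tweedie to conclude non-geometric ergodicity when $\pi_S$ is unbounded. Your direct argument is a nice self-contained alternative, but there is a small gap: the inequality $P^k(z_n,A_n)\geq P(z_n,A_n)^k$ is not valid as written, since after the first step the chain is at some $z\in A_n$ that need not equal $z_n$. The fix is immediate: your one-step bound actually holds uniformly over $A_n$, namely for every $z\in A_n$ one has $P(z,A_n^c)\leq 2q_{\max}Z/M_n$ (using $\pi_S(z)\geq M_n/2$), so $P^k(z_n,A_n)\geq(1-2q_{\max}Z/M_n)^k\to 1$, and your contradiction goes through.
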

	\begin{proof}
		See \cref{proof-thm-uniform-ergodic}.
	\end{proof}
	\begin{example}
	If the target $\pi(x)$ where $x\in \Reals^d$ is multivariate student's $t$ distribution with degrees of freedom no smaller than $d$, then the SPS algorithm is \emph{uniformly ergodic}.
\end{example}
	\begin{remark}\label{remark_uniform_ergodic}
		We make the following remarks:
	\begin{enumerate}
		\item The condition \cref{cond_uniform_ergodic} is necessary: if $\sup_{x\in \Reals^d} \pi(x)(R^2+\|x\|^2)^d=\infty$, then the chain is not even geometrically ergodic \citep[Proposition 5.1]{Roberts1996}; %\textcolor{red}{[KL: Are we able to reference more precisely e.g. Thm x.y?]} 	
		\item The traditional RWM algorithm is \emph{not} uniformly ergodic if the support of $\pi$ is $\Reals^d$ \citep[Theorem 3.1]{Mengersen1996} and \emph{not} geometrically ergodic for any heavy-tailed target distribution \citep[Corollary 3.4]{jarner2000geometric};
		\item The condition that $\pi(x)$ is positive and continuous in $\Reals^d$ in both \cref{prop-SPSergodic} and \cref{thm-uniform-ergodic} can be relaxed. We used it here just for the simplicity of the proof. 
	\end{enumerate}
	\end{remark}

    \subsection{Stereographic Bouncy Particle Sampler (SBPS)}\label{subsec_SBPS}

Many recent innovations in MCMC algorithm construction have focused on non-reversible methods, most particularly those described by {\em piecewise deterministic Markov processes} (PDMPs)
	(see for example \citep{bouchard2018bouncy,bierkens2019zig} and \citep{MR790622} for theoretical background). PDMPs are continuous-time processes that have stochastic jumps at event times of a point process, but where the state evolves deterministically between the event times.
	In this subsection, we shall demonstrate that we can readily incorporate these methods within our projective framework.
	We shall concentrate on a version of the {\em Bouncy Particle Sampler} (BPS) \citep{bouchard2018bouncy} as this adapts naturally to our context. The algorithm is described as follows.
	
	PDMPs utilise an auxiliary random variable $v$, which in the case of the Stereographic Bouncy Particle Sampler (SBPS) has stationary distribution uniformly distributed on $\mathbb{S}^{d}$ and independently of $x$. One feature of PDMP algorithms such as SBPS is the option to include {\em refresh} moves that contribute to the intensity of their constituting point process (according to some possibly $x$-dependent hazard rate) and which independently refresh $v$ by sampling it from its invariant distribution (uniform on $\mathbb{S}^{d}$). In the description below, we restrict ourselves to the case where this refresh rate is constant.

\begin{algorithm}\label{our_algo_SBPS}
	\caption{Stereographic Bouncy Particle Sampler (SBPS)}
	\begin{itemize}
		\item Initialize $z^{(0)}\in\mathbb{S}^{d}$ and $v^{(0)}$ such that $v^{(0)}\cdot z^{(0)}=0$ and $\|v^{(0)}\|=1$.
		\item Simulate BPS on unit sphere: for $i=1,2,\dots$
		\begin{itemize}
			\item Simulate bounce time $\tau_{\textrm{bounce}}$ of a Poisson process of intensity
			\*[
			\chi(t)=\lambda(\sin(t)v^{(i-1)}+\cos(t)z^{(i-1)},\cos(t)v^{(i-1)}-\sin(t)z^{(i-1)}),
			\]
			where 
			\*[
			\lambda(z,v):=\max\{0,\left[-v\cdot\nabla_z\log\pi_S(z)\right]\}.
			\]
			\item Simulate refreshment time $\tau_{\textrm{refresh}}\sim \textrm{Exponential}(\lambda_{\textrm{refresh}})$.
			\item Let $\tau_i=\min\{\tau_{\textrm{bounce}},\tau_{\textrm{refresh}}\}$ and
			\*[
			z^{(i)}&=\sin(\tau_i)v^{(i-1)}+\cos(\tau_i)z^{(i-1)}.
			\]
			\item If $\tau_i=\tau_{\textrm{refresh}}$, sample new $v^{(i)}$ independently
			\*[
			v^{(i)}\sim \textrm{Uniform}\{v: z^{(i)}\cdot v=0, \|v\|=1\}
			\] 
			\item If $\tau_i=\tau_{\textrm{bounce}}$, compute
			\*[
			v^{(i)}=v_{\textrm{temp}}-2\left[\frac{v_{\textrm{temp}}\cdot\tilde{\nabla}_z\log\pi_S(z^{(i)})}{\tilde{\nabla}_z\log\pi_S(z^{(i)})\cdot \tilde{\nabla}_z\log\pi_S(z^{(i)})}\right] \tilde{\nabla}_z\log\pi_S(z^{(i)}),
			\]
			where 
			\*[
			v_{\textrm{temp}}&=\cos(\tau_i)v^{(i-1)}-\sin(\tau_i)z^{(i-1)}\\
			\tilde{\nabla}_z\log\pi_S(z^{(i)})&=\nabla_z\log\pi_S(z^{(i)})-\left[z^{(i)}\cdot\nabla_z\log\pi_S(z^{(i)})\right]z^{(i)}.
			\]
			\item If $\sum_{j=1}^i \tau_j\ge T$ (where $T$ is some constant time), exit.
		\end{itemize}
  \item Return $x=\SP(z)$ where $z$ denotes BPS on unit sphere.
	\end{itemize}
\end{algorithm}

Again, we demonstrate that SBPS is indeed a valid MCMC algorithm, at least under certain conditions on $\pi $ and the refresh rate in the algorithm. 	Note that the conditions on $\pi $ and $\lambda_{\textrm{refresh}}$ in \cref{prop-SBPSergodic} are not required for invariance, only for ensuring $\phi$-irreducibility of the algorithm.

\begin{proposition}
	\label{prop-SBPSergodic}
	Suppose that $\lambda_{\textrm{refresh}}>0$ and $\pi >0$ for all $x \in \Reals^d$.
	Then SBPS gives rise to an ergodic Markov chain on $\Reals^d$ with invariant distribution for $(x,v)$ with joint density on $\mathbb{R}^d\times \mathbb{S}^{d}$. The marginal density on $\mathbb{R}^d$ is proportional to $\pi (x)$. 
		\end{proposition}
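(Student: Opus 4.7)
The plan is to split the proof into two parts: verifying that the joint distribution $\mu(\dee z, \dee v) \propto \pi_S(z)\,\sigma_d(\dee z)\,\nu_z(\dee v)$ on the unit tangent bundle $\{(z,v)\in\mathbb{S}^d\times\mathbb{S}^d \st v\cdot z=0\}$ is invariant for SBPS (where $\sigma_d$ is the surface measure on $\mathbb{S}^d$ and $\nu_z$ is uniform on the unit tangent circle at $z$), and then verifying $\phi$-irreducibility so that ergodicity follows from standard Meyn--Tweedie theory. The claim about the marginal on $\Reals^d$ is immediate from the construction: pushing $\pi_S\cdot\sigma_d$ forward through $\SP$ and using the Jacobian identity \cref{eq_jacobian} cancels the $(R^2+\|x\|^2)^d$ factor in $\pi_S$ defined by \cref{eq_pi_S}, returning $\pi(x)$.

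For invariance, I would write down the extended generator of SBPS as the sum of a deterministic piece corresponding to the great circle flow $(z,v)\mapsto(\sin(t)v+\cos(t)z,\cos(t)v-\sin(t)z)$ and two jump pieces for the bounce and the refresh. Then I would verify $\int \mathcal{L} f\, \dee\mu=0$ for a sufficiently rich class of test functions. The geodesic flow on the sphere is Hamiltonian and preserves the Liouville measure $\sigma_d\otimes\nu_z$, so the drift term contributes $\int (v\cdot\nabla_z f)\,\pi_S\, \dee\sigma_d\dee\nu_z$, which by integration by parts on $\mathbb{S}^d$ equals $-\int f\,(v\cdot\nabla_z\log\pi_S)\,\pi_S\, \dee\sigma_d\dee\nu_z$. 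This is then cancelled by the bounce term: the specific choice $\lambda(z,v)=\max\{0,-v\cdot\nabla_z\log\pi_S(z)\}$ together with the reflection $R_z(v):=v-2\frac{v\cdot\tilde\nabla_z\log\pi_S}{\tilde\nabla_z\log\pi_S\cdot\tilde\nabla_z\log\pi_S}\tilde\nabla_z\log\pi_S$ (which is an involution on the tangent circle at $z$ preserving $\nu_z$, since it reflects across the orthogonal complement of $\tilde\nabla_z\log\pi_S$ within $T_z\mathbb{S}^d$) satisfies the standard identity $\lambda(z,v)-\lambda(z,R_z(v))=-v\cdot\nabla_z\log\pi_S(z)$; this is the manifold analogue of the classical BPS balance in \citep{bouchard2018bouncy}. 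The refresh contribution vanishes identically on the tangent bundle since $\nu_z$ is its invariant measure.

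For $\phi$-irreducibility, positivity of $\lambda_{\textrm{refresh}}$ ensures that in any fixed time window a refresh occurs with positive probability, and positivity and continuity of $\pi$ on $\Reals^d$ translates into continuity and positivity of $\lambda(z,v)$ on the open set $\mathbb{S}^d\setminus\{N\}$ (which the process almost surely never hits, since its trajectory is confined to smooth great-circle arcs of length $o(1)$ between jumps). Using two successive refresh events combined with the fact that any two points $z,z'\in\mathbb{S}^d\setminus\{N\}$ can be connected by a great circle, one can exhibit a tube of positive $\mu$-measure around any prescribed target state that is reached with positive probability from any starting state, yielding $\mu$-irreducibility; aperiodicity follows from the exponential distribution of refresh times. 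Meyn--Tweedie then delivers ergodicity.

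The main obstacle is the generator computation on the sphere, specifically keeping track of the tangency constraint $v\cdot z=0$ throughout integration by parts and being careful that the refreshment and bounce kernels preserve $\nu_z$ in the proper measure-theoretic sense. The singularity of $\pi_S$ at $N$ is not a genuine problem because the process avoids $N$ with probability one and the invariance identity can be verified against test functions supported away from $N$; however, this does need to be said explicitly.
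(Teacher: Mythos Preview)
Your proposal is correct and follows essentially the same route as the paper's sketch: decompose the generator into the geodesic-flow piece and the jump piece, observe that the drift contributes $-\pi(z,v)\,v\cdot\nabla_z\log\pi_S(z)$ (you phrase this via integration by parts, the paper via the adjoint $\mathcal{A}_D^*$), and then cancel it against the bounce term using the identity $\lambda(z,v)-\lambda(z,R_z(v))=-v\cdot\nabla_z\log\pi_S(z)$, with the refresh term vanishing trivially. One small inaccuracy to clean up: the reason the process avoids $N$ is not that the arcs have length $o(1)$ (they need not), but that the set of $(z,v)$ whose great circle passes through $N$ has measure zero and is avoided almost surely after any refresh.
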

		\begin{proof}
		The proof of this result is routine (but tedious), following the lines of existing results for PDMPs in the literature.
		We give a sketch proof in \cref{proof-prop-SBPSergodic}. %\RED{There is no technical difficulty for writing down a full proof but it will be unnecessarily tedious.}
	\end{proof}

\BLUE{
 We shall prove a uniform ergodicity result analogous to that of 
	Theorem \ref{thm-uniform-ergodic}. For a Markov process $X$ with state space $\state$, transition semi-group $P$, and invariant distribution $\pi$, it is uniformly ergodic
 if $\forall \epsilon >0$ there exists $T$ such that $\| P^T(x, \cdot )-\pi \|_{\textrm{TV}}\le \epsilon, \forall x\in \state$.
}
  
   % \RED{To Krys/Gareth: define uniform ergodicity for (continuous) Markov processes here?}
    
	\begin{theorem}\label{thm_PDMP}
		If $\pi(x)$ is positive in $\Reals^d$ with continuous first derivative in all components,
	then SBPS is \emph{uniformly ergodic} if
	{
		\*[
		\limsup_{\{x: \|x\|\to \infty\}} \sum_{i=1}^d \left(\frac{\partial \log \pi(x)}{\partial x_i} x_i\right)+2d<\frac{1}{2}.
		\]
	}
\end{theorem}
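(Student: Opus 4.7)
The plan is to establish uniform ergodicity of SBPS by combining a Foster--Lyapunov drift estimate capturing the behaviour near the north pole $N$ with a Doeblin minorization on compact subsets away from $N$, following the continuous-time Meyn--Tweedie framework as adapted to PDMPs in the bouncy-particle literature. The state space is naturally identified with the unit tangent bundle of $\mathbb{S}^{d}\setminus\{N\}$, and invariance of the joint distribution $\pi_{S}(z)\otimes\mathrm{Unif}$ is already given by \cref{prop-SBPSergodic}; what remains is stability near $N$ and regeneration in the bulk.

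First I would pull everything back to the sphere and compute $\nabla_{z}\log\pi_{S}(z)$ near $N$. Using the identity $R^{2}+\|x\|^{2}=2R^{2}/(1-z_{d+1})$, the transformed log-density decomposes as $\log\pi_{S}(z)=\log\pi(x)-d\log(1-z_{d+1})+\mathrm{const}$, and differentiating through the chain rule (the Jacobian $\partial x_{i}/\partial z_{d+1}$ contributes a factor $x_{i}/(1-z_{d+1})$) shows that the outward radial component of $\nabla_{z}\log\pi_{S}$ behaves asymptotically like $-\bigl[\sum_{i}x_{i}\partial_{x_{i}}\log\pi(x)+d\bigr]/(1-z_{d+1})$ as $z\to N$. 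Together with an additional $+d$ term from $d\log(R^{2}+\|x\|^{2})$, the hypothesis $\limsup[\sum_{i}x_{i}\partial_{i}\log\pi+2d]<1/2$ ensures that this radial component is bounded below by $(1/2+\eta)/(1-z_{d+1})$ in absolute value for some $\eta>0$, so the gradient points strongly away from $N$ with magnitude of order $(1-z_{d+1})^{-1}$.

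Next I would take $V(z,v)=(1-z_{d+1})^{-s}$ for $s\in(0,1)$ small and analyse the SBPS generator $\mathcal{L}$. The great-circle flow contributes $v\cdot\nabla_{z}V$, which is positive of order $(1-z_{d+1})^{-s-1}$ when $v$ points toward $N$. The bounce contribution is $\lambda(z,v)[V(z,R_{z}v)-V(z,v)]$, where $R_{z}$ reflects $v$ in $\tilde{\nabla}_{z}\log\pi_{S}$; near $N$ this reflection essentially flips the sign of the radial component of $v$, and the rate $\lambda(z,v)$ blows up like $(1-z_{d+1})^{-1}$ precisely on the set of $v$ pointing toward $N$. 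The strict inequality $<1/2$ is the threshold that allows some $s>0$ to be chosen so that the bounce contribution strictly dominates the flow contribution, giving the inequality $\mathcal{L}V\le -cV$ on a neighbourhood $\{1-z_{d+1}\le\varepsilon_{0}\}$ of $N$.

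Finally, on the compact set $\{1-z_{d+1}\ge\varepsilon_{0}\}$ the constant refresh rate $\lambda_{\mathrm{refresh}}>0$ together with smoothness of $\pi_{S}$ yields a Doeblin minorization in some bounded time $T$: a refresh followed by a deterministic flow of length $O(1)$ spreads any initial distribution over an open region with density bounded below by a fixed reference measure, by the argument used in \cref{prop-SBPSergodic}. Combining this local minorization with the drift estimate via the standard continuous-time Meyn--Tweedie theorem yields uniform ergodicity. The main obstacle will be the bookkeeping in the drift computation: $R_{z}$ only reflects $v$ along $\tilde{\nabla}_{z}\log\pi_{S}$, which near $N$ is close to but not exactly the radial direction, so one must track both the radial and tangential components of $v$ on the tangent sphere simultaneously, and verify that after averaging the reflection over admissible $v$ the effective rate of decrease of $V$ matches the $1/2-\eta$ budget coming from the hypothesis — this is where the precise constant $1/2$ (rather than some larger threshold) is pinned down.
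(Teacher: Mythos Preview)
Your Lyapunov function $V(z,v)=(1-z_{d+1})^{-s}$ depends only on $z$, not on $v$. For a PDMP this is fatal: the bounce and refresh events act on $v$ alone, so $V(z,R_{z}v)-V(z,v)=0$ and $\int[V(z,w)-V(z,v)]\psi(\dee w)=0$. The generator collapses to the pure flow term $\mathcal{L}V=v\cdot\nabla_{z}V$, which is strictly positive whenever $v$ points toward $N$, and no amount of bouncing can correct this. The sentence ``the bounce contribution is $\lambda(z,v)[V(z,R_{z}v)-V(z,v)]$'' is literally zero, so the mechanism you describe --- bounces dominating the flow --- never engages.

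The paper avoids this by taking a Lyapunov function that depends on the velocity, namely $f(z,v)=2+v_{d+1}$. Along a great circle $\dot v_{d+1}=-z_{d+1}$, so the flow part of the generator is $-z_{d+1}$; the refresh contributes $-\lambda_{\mathrm{refresh}}v_{d+1}$; and the bounce contributes a term proportional to $(\tilde\nabla_{z}\log\pi_{S}(z))_{d+1}$. The crucial computation is that this last quantity stays \emph{bounded} as $z_{d+1}\to1$ --- indeed its $\limsup$ equals exactly $\sum_{i}x_{i}\partial_{x_i}\log\pi(x)+2d$ --- so the hypothesis $<1/2$ makes the whole generator strictly negative near $N$. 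Because $f$ is bounded, Down--Meyn--Tweedie Theorem~5.2(c) gives uniform ergodicity directly, with no separate Doeblin step needed. Your asymptotic picture of the gradient is also off: the tangentially projected $(d{+}1)$-component of $\nabla_{z}\log\pi_{S}$ does \emph{not} blow up like $(1-z_{d+1})^{-1}$; after projection it has a finite limit, and that limit is precisely the quantity appearing in the theorem's hypothesis.
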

\begin{proof}
	See \cref{proof_thm_PDMP}.
\end{proof}

\begin{corollary}\label{coro_PDMP}
	If the target $\pi(x)$ where $x\in \Reals^d$ is multivariate student's $t$ distribution with degrees of freedom larger than $d-\frac{1}{2}$, then the SBPS algorithm is \emph{uniformly ergodic}.
\end{corollary}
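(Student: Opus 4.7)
The plan is simply to verify the hypothesis of \cref{thm_PDMP} for the multivariate student's $t$ target, since \cref{coro_PDMP} claims nothing beyond what that theorem delivers once its sufficient condition is checked. So I would fix notation: let $\pi(x)\propto(1+\|x\|^2/\nu)^{-(\nu+d)/2}$ on $\Reals^d$, which is positive and smooth, and note that the regularity hypothesis of \cref{thm_PDMP} (positive density with continuous first derivatives) is immediately satisfied. It then remains to compute the quantity
\*[
\sum_{i=1}^d \oof{\frac{\partial \log \pi(x)}{\partial x_i}\, x_i}+2d
\]
and take its limsup as $\|x\|\to\infty$.

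First I would differentiate $\log\pi(x)=-\tfrac{\nu+d}{2}\log(1+\|x\|^2/\nu)+\text{const}$ to obtain
\*[
\frac{\partial \log \pi(x)}{\partial x_i}=-\frac{(\nu+d)\, x_i/\nu}{1+\|x\|^2/\nu},
\]
so that Euler's identity gives
\*[
\sum_{i=1}^d \frac{\partial \log \pi(x)}{\partial x_i}\, x_i=-\frac{(\nu+d)\|x\|^2/\nu}{1+\|x\|^2/\nu}.
\]
As $\|x\|\to\infty$ this expression converges to $-(\nu+d)$, independently of the direction, so
\*[
\limsup_{\|x\|\to\infty}\Of{\sum_{i=1}^d \oof{\frac{\partial \log \pi(x)}{\partial x_i}\, x_i}+2d}= -(\nu+d)+2d=d-\nu.
\]

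Finally I would read off the condition $d-\nu<1/2$, which is precisely $\nu>d-\tfrac{1}{2}$, i.e., the hypothesis of the corollary. Applying \cref{thm_PDMP} then yields uniform ergodicity of SBPS. There is no real obstacle here: the argument is a one-line derivative computation plus a limit, and the only thing one has to be careful about is that the limit is uniform in direction (which it manifestly is, since the expression depends on $x$ only through $\|x\|$), so the limsup equals the limit and no subtle exceptional directions need to be ruled out.
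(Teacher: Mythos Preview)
Your proposal is correct and matches the paper's own first proof essentially line for line: compute $\sum_i x_i\,\partial_{x_i}\log\pi(x)$ for the student's $t$ density, pass to the limit $\|x\|\to\infty$ to obtain $-(\nu+d)$, and read off $d-\nu<\tfrac12$ from the hypothesis of \cref{thm_PDMP}. The paper additionally offers an alternative derivation that works directly with $\tilde{\nabla}_z\log\pi_S(z)$ on the sphere, but your route is the one the authors present first and label as the simpler check.
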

\begin{proof}
  See \cref{proof_coro_PDMP}.
\end{proof}

\begin{remark}\label{conjecture}
We make the following remarks:
\begin{enumerate}
\item This is the first known PDMP algorithm that is uniformly ergodic for a large family of target distributions including heavy-tailed targets. For comparison, the Euclidean BPS is only known to be geometrically ergodic under certain restrictive conditions on the target \citep{deligiannidis2019exponential,durmus2020geometric} and is known not to be geometrically ergodic for any heavy-tailed target distribution \citep{vasdekis2021speed}.
\item We {conjecture} that the best possible condition for \cref{thm_PDMP} is
		\*[
		\limsup_{\{x: \|x\|\to \infty\}} \sum_{i=1}^d \left(\frac{\partial \log \pi(x)}{\partial x_i} x_i\right)+2d<{1}.
		\]
We explain the reason for our {conjecture} in \cref{proof_conjecture}.
Therefore,  we {conjecture} that \cref{coro_PDMP} is only loose by $\frac{1}{2}$ degree. That is, if the target $\pi(x)$ where $x\in \Reals^d$ is multivariate student's $t$ distribution with degree of freedom larger than $d-1$, then the SBPS algorithm is {conjectured} to be \emph{uniformly ergodic}. 
\end{enumerate}
\end{remark}

	To finish this section, we present some typical sample paths obtained by implementing SBPS.

%\RED{GR: Jun I suggest we just present the two simulations from figs 4 and 5 which are for BPS. Put them both in one concise figure side by side. }

\begin{example}
In \cref{fig-toy-sphere-BPS}, we show the proposed SBPS without and with refreshment for standard Gaussian target in two dimensions.  Note that the SBPS without refreshment is not irreducible in this case. The same issue exists for the Euclidean BPS for standard Gaussian targets. This issue can be fixed by adding the refreshment.
%	\begin{figure}[b]
%	\centering
%	\includegraphics[width=0.45\textwidth]{}
%	\caption{Traditional BPS without refreshment (reducible) (Target $\mathcal{N}(0,I_2)$)}
%	\label{fig-toy-BPS-reducible}
%\end{figure}
%\begin{figure}
%	\centering
%	\includegraphics[width=0.45\textwidth]{}
%	\includegraphics[width=0.55\textwidth]{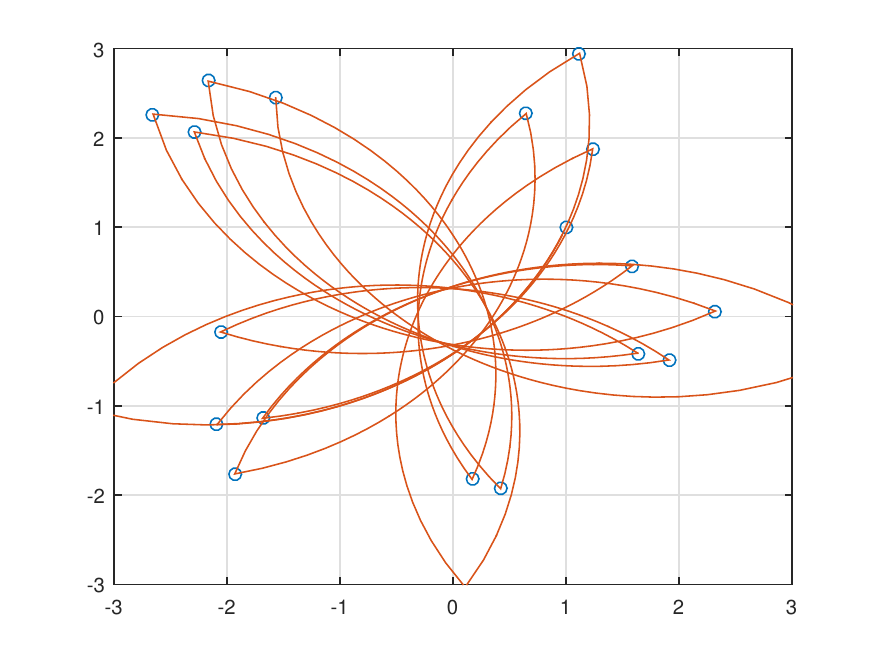}
%	\caption{BPS without (left) and with (right) refreshment for target distribution $\mathcal{N}(0,I_2)$}
%	\label{fig-toy-sphere-BPS-reducible}
%\end{figure}

	\begin{figure}[h]
	%\centering
	\includegraphics[width=0.45\textwidth]{images/toy_sphere_BPS_no_refresh}
	%\caption{Traditional BPS (Target $\mathcal{N}(0,I_2)$)}
	%\label{fig-toy-BPS}
	%\end{figure}
	%\begin{figure}[b]
	%\centering
	\includegraphics[width=0.45\textwidth]{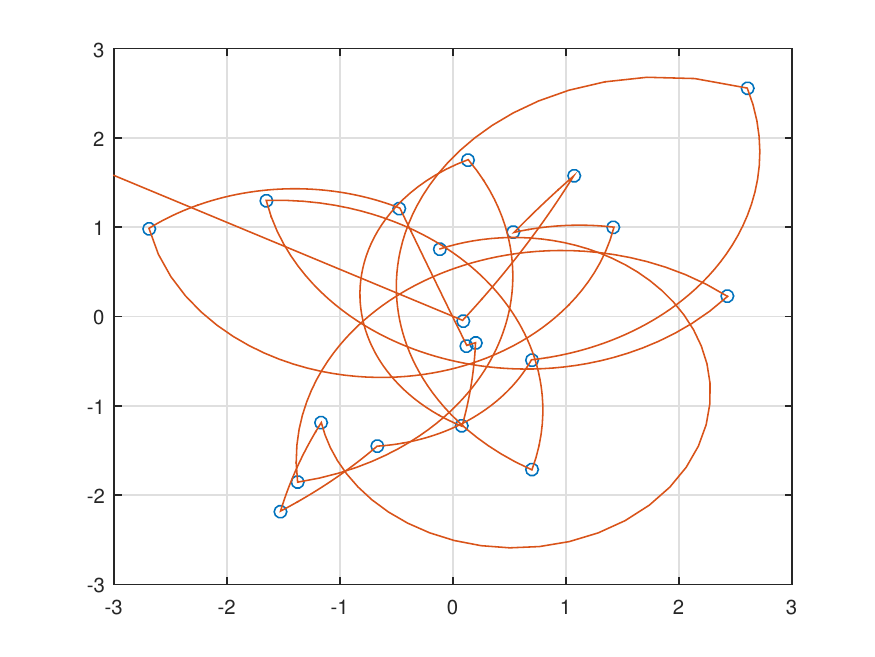}
	\caption{SBPS without (left) and with (right) refreshment for target distribution $\mathcal{N}(0,I_2)$. Note that the SBPS chain is not irreducible without refreshment.}
	\label{fig-toy-sphere-BPS}
    \end{figure}
\end{example}

	\section{SPS: Isotropic Targets }\label{section_isotropic}
	In this section, we consider \emph{isotropic targets}, which is also called spherical symmetric targets\footnote{Note that in some literature, isotropic distribution could be used to denote a distribution with zero mean and identity covariance matrix, which is different from our definition in this paper.}. That is, $\pi(x)$ is only a function of $\|x\|$. Recall the mapping and \cref{eq_SP} and the transformed target $\pi_S(z)$ in \cref{eq_pi_S}. We can see that, for any isotropic target, the transformed target
	$\pi_S(z)$ is only a function of the ``latitude'' $z_{d+1}$. \BLUE{In this sense, isotropic targets are the ``best'' targets for stereographically projected algorithms.}
	
	We shall assume that all second moments exist, and without loss of further generality, we suppose $\pi(x)$ satisfies that
	\*[
	\EE_{X\sim \pi}[\|X\|^2]=d.
	\]
	\BLUE{Under this assumption, the ``optimal'' radius should be chosen as
	$R=\sqrt{d}$, since it maps the concentration region of $\pi$ to the neighborhood of the ``equator'' of the sphere. Throughout this section, we study this best scenario. We will study the robustness to $R$ and the optimal scaling of SPS for any given $R$ in \cref{section_scaling} for another family of targets.}
	
	{Under the above assumptions, it suffices to study the path of the absolute value of the ``latitude'' $z_{d+1}$ of SPS. Informally, the ``stationary phase'' of SPS is the period in which $z_{d+1}=\bigO(d^{-1/2})$ and the ``transient phase'' is the period in which $|z_{d+1}|$ is larger than $\bigO(d^{-1/2})$. \BLUE{Somehow surprisingly, by analyzing the proposed ``latitude'' $\hat{z}_{d+1}$, we can (informally) show that the SPS enjoys the \emph{blessings of dimensionality}: the number of iterations for the ``latitude'' of SPS to decrease to $\bigO(d^{-1/2})$ \emph{decreases} with the dimension $d$. This implies (informally) that the ``transient phase'' of SPS is $\bigO(1)$ and it decreases with dimension.}

    %\RED{To Krys/Gareth: clarify  ``blessings of dimensionality''?}
 
	\subsection{Analysis of the proposal distribution}
	We assume the chain starts from either the North Pole or the South Pole.
	By the assumptions, the chain is in the  ``stationary phase'' once it is around the ``equator''.
	
	In the following, we give useful approximations for the proposed ``latitude'' $\hat{z}_{d+1}$ in both the transient phase and stationary phase. This can be used to analyze the behavior of isotropic targets. See \cref{sec-fig-latitute-approximation-simulation} for some simulations using the result from \cref{lemma-lati}.
	
	\begin{lemma}\label{lemma-lati}
		Let $z_i$ be the current $i$-th coordinate and $\hat{z}_i$ be the $i$-th coordinate of the proposal, where $i=1,\dots,d+1$. Then we have the following expression
		\[\label{eq_approx_large_h}
		\hat{z}_i&=\frac{1}{\sqrt{1+h^2(U^2+U_{\perp}^2)}}\left(z_i+\sqrt{1-z_i^2}hU\right),
		\]
		where $U\sim\mathcal{N}(0,1)$ and $U_{\perp}^2\sim \chi^2_{d-1}$ which is independent with $U$.
		Furthermore, if $h=\bigO(d^{-1/2})$, then we have the following coordinate-wise approximation
		\[\label{eq_approx_whole}
		\hat{z}_{i}= \frac{1}{\sqrt{1+h^2 U_{\perp}^2}}\left[\left(1-\frac{1}{2} h^2 U^2\right)z_{i}-\sqrt{1-z_{i}^2}h U\right]+\bigO_{\Pr}(h^3+dh^4z_i).
		\]

		As special cases of \cref{eq_approx_whole}, $z_{d+1}$ is the current ``latitude'' and $\hat{z}_{d+1}$ be the proposed ``latitude''. Then, in the transient phase, if $z_{d+1}^2=1-o(h^2)$, we have
		\[\label{eq_approx_transient}
		\hat{z}_{d+1}= \frac{1}{\sqrt{1+h^2(d-1)}}\left(1-\frac{1}{2}h^2U^2\right)z_{d+1}+\bigO_{\Pr}(d^{-1/2}),
		\]
		and in the stationary phase, if $z_{d+1}=\bigO(d^{-1/2})$, we have
		\[\label{eq_approx_stationary}
		\hat{z}_{d+1}=\frac{1}{\sqrt{1+h^2(d-1)}}\left(z_{d+1}-hU\right)+\bigO_{\Pr}(d^{-1}).
		\]
	\end{lemma}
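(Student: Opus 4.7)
The plan is to exploit the rotational symmetry of the Gaussian proposal and reduce the analysis to the joint law of two independent variables: $U\sim\normal(0,1)$ capturing the $i$-th coordinate, and $Y\sim\chi^2_{d-1}$ capturing the remaining tangential directions. Since $\|z\|=1$, writing $W := \dee\tilde z \sim \normal(0,h^2 I_{d+1})$ and $\xi := z\cdot W$ gives $\dee z = W-\xi z$, which is orthogonal to $z$; hence $\|z+\dee z\|^2=1+\|\dee z\|^2$, so
\*[
\hat{z}_i = \frac{z_i + (\dee z)_i}{\sqrt{1+\|\dee z\|^2}}.
\]
By the rotational invariance of $W/h$, the tangential component $V_\perp := (W-\xi z)/h$ is a standard Gaussian on the $d$-dimensional tangent space at $z$, independent of $\xi$. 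To isolate the $i$-th coordinate I would further decompose $V_\perp$ along the unit vector $(e_i - z_iz)/\sqrt{1-z_i^2}$, i.e.\ the $L^2$-normalised projection of $e_i$ onto $T_z\mathbb{S}^d$. Setting $U:=V_\perp\cdot(e_i-z_iz)/\sqrt{1-z_i^2}$, rotational invariance on the tangent space gives $U\sim\normal(0,1)$, independent of a residual standard Gaussian $V_\perp^\perp$ on a $(d-1)$-dimensional subspace. This produces the identities $(V_\perp)_i=\sqrt{1-z_i^2}\,U$ and $\|V_\perp\|^2=U^2+Y$ with $Y:=\|V_\perp^\perp\|^2\sim\chi^2_{d-1}$ independent of $U$, and hence the clean representation
\*[
\hat{z}_i = \frac{z_i + h\sqrt{1-z_i^2}\,U}{\sqrt{1+h^2U^2+h^2Y}}.
\]

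From here, both approximations reduce to Taylor-expanding the denominator using the concentration bound $Y=(d-1)+\bigO_{\Pr}(\sqrt{d})$ (e.g.\ via Laurent--Massart). In the $h=\bigO(1)$ regime the denominator is dominated by $h^2Y\sim h^2 d$, so the multiplicative errors from $Y-(d-1)$ and from the lower-order $h^2 U^2$ are of relative order $d^{-1/2}$, yielding \cref{eq_approx_large_h}. In the $h=\bigO(d^{-1/2})$ regime both ``$1$'' and ``$h^2(d-1)$'' contribute at the same order, so the expansion
\*[
(1+h^2U^2+h^2Y)^{-1/2} = (1+h^2(d-1))^{-1/2}\,(1-\tfrac12 h^2U^2 + \cdots)
\]
must keep the $U^2$ term explicit; combining with the numerator delivers \cref{eq_approx_whole}. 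The two special cases drop out immediately: in the transient phase $\sqrt{1-z_{d+1}^2}=o(h)$ forces the $U$-linear term to be $o_{\Pr}(d^{-1})$, yielding \cref{eq_approx_transient}; in the stationary phase $\sqrt{1-z_{d+1}^2}=1+\bigO(d^{-1})$ together with $z_{d+1}=\bigO(d^{-1/2})$ shrinks the $(1-\tfrac12 h^2 U^2)z_{d+1}$ correction to $o_{\Pr}(d^{-1})$, yielding \cref{eq_approx_stationary}.

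The main technical delicacy will be the error accounting in the $h=\bigO(d^{-1/2})$ regime: because $h^2(Y-(d-1))=\bigO_{\Pr}(h^2\sqrt d)=\bigO_{\Pr}(d^{-1/2})$, the Taylor remainder arising from the fluctuations of $Y$ is strictly larger than $\bigO_{\Pr}(h^3)$ in absolute terms. What ultimately recovers the claimed rate is that this remainder couples to the numerator $z_i+h\sqrt{1-z_i^2}\,U$ in a way that combines with the smallness of the relevant coefficient --- $z_i=\bigO(d^{-1/2})$ in the stationary phase, and $\sqrt{1-z_{d+1}^2}=o(h)$ in the transient phase --- to leave the net contribution at the stated order. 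Tracking these cancellations carefully, rather than supplying any deeper probabilistic input, is where the bulk of the computation will sit.
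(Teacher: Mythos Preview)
Your derivation is the same as the paper's: both arrive at the exact representation $\hat z_i=(z_i+h\sqrt{1-z_i^2}\,U)/\sqrt{1+h^2U^2+h^2Y}$ with $U\sim\normal(0,1)$ independent of $Y\sim\chi^2_{d-1}$ (the paper writes the denominator as $\sqrt{1+h^2\sum_j V_j^2}$ and then notes $\sum_j V_j^2-U_i^2\sim\chi^2_{d-1}$), and both proceed by Taylor-expanding the denominator.

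You are right to flag the $Y$-fluctuation, and this is exactly the step the paper's proof skates over: it silently replaces $\sum_j V_j^2-U_i^2$ by $d-1$ and labels the remainder $\bigO_{\Pr}(h^3)$ without accounting for the $\bigO_{\Pr}(\sqrt d)$ deviation. However, your proposed rescue only works in the stationary case. There the \emph{entire} numerator is $\bigO_{\Pr}(h)$, so multiplying by the denominator error $\bigO_{\Pr}(h^2\sqrt d)$ gives $\bigO_{\Pr}(h^3\sqrt d)=\bigO_{\Pr}(d^{-1})$, matching \cref{eq_approx_stationary} up to $\bigO$ versus $o$. In the transient phase, though, only the coefficient $\sqrt{1-z_{d+1}^2}=o(h)$ is small; the leading term $z_{d+1}=\pm 1+o(h^2)$ is $\bigO(1)$, so the numerator stays $\bigO(1)$ and the error from the $Y$-fluctuation remains $\bigO_{\Pr}(h^2\sqrt d)=\bigO_{\Pr}(d^{-1/2})$, far larger than the $o_{\Pr}(d^{-1})$ claimed in \cref{eq_approx_transient}. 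The same reasoning shows \cref{eq_approx_whole} cannot hold with $\bigO_{\Pr}(h^3)$ remainder for generic $z_i=\bigO(1)$ under $h=\bigO(d^{-1/2})$. The honest remainder is $\bigO_{\Pr}\bigl((|z_i|+h)\,h^2\sqrt d\bigr)$; neither your coupling argument nor the paper's displayed proof closes this gap.
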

	\begin{proof}
		See \cref{proof-lemma-lati}. %\RED{JY: fixed \cref{eq_approx_whole} using $U_{\perp}^2\sim \chi^2_{d-1}$ independent with $U$.}
	\end{proof}
	\BLUE{The above lemma informally suggests that in the ``transient phase'', the proposed ``latitude'' is almost deterministic, whereas in the ``stationary phase'', the proposed ``latitude'' approximately follows an autoregressive process. For example, if we take $h$ to be constant and $i=d+1$, then \cref{eq_approx_large_h} shows that the proposal ``latitude'' $\hat{z}_{d+1}$ decreases to $\bigO(d^{-1/2})$ faster when dimension is larger since the concentration of $U_{\perp}^2/d$. If the proposals will be accepted by a probability bounded away from $0$, the ``transient phase'' of SPS only involve $\bigO(1)$ iterations and the number even decreases with the dimension. As an example, in the next subsection, we show a class of isotropic targets such that the SPS enjoys the \emph{blessings of dimensionality}.}
 
% \RED{To Krys/Gareth: clarify again ``blessings of dimensionality''?}

	\subsection{Examples of Isotropic Targets}\label{subsec-iso}
	We denote the multivariate student's $t$ distributions by $\pi_{\nu}(x)$ where $\nu$ is the degree of freedom. We denote the standard multivariate Gaussian as the limit $\pi_{\infty}(x)$. That is, 
	\*[
\pi_{\nu}(x)\propto \left(1+\frac{1}{\nu}\|x\|^2\right)^{-(\nu+d)/2},
\quad
\pi_{\infty}(x)\propto \exp\left(-\frac{1}{2}\|x\|^2\right)
\]
Then the logarithm of the likelihood ratio can be written as a function of $z_{d+1}$ and $\hat{z}_{d+1}$:
\*[
\log\frac{\pi_{\nu}(\hat{X})}{\pi_{\nu}(x)}+d\log\frac{R^2+\|\hat{X}\|^2}{R^2+\|x\|^2}= d(g_{\nu/d}(z_{d+1})-g_{\nu/d}(\hat{z}_{d+1})),
\]
where $g_{k}(z):=\frac{k+1}{2}\log(k+\frac{1+z}{1-z})+\log(1-z)$. If $k=\nu/d\to \infty$, we have $g_{k}(z)$ converges to $g_{\infty}(z):=\frac{1}{1-z}-\frac{1}{2}+\log(1-z)$ up to a constant. See \cref{sec-fig-gk} for the plots of the function of $g_k(z)$ for different values of $k$.

\begin{example}
	(Multivariate student's $t$ distribution with DoF $\nu=d$)
	It can be easily verified that
	$g_1(z)=\log(2)$.
	Therefore, any proposal will be accepted since the acceptance rate is always $1$ whatever $h$ is. 
\end{example}

\begin{example}
	(Multivariate student's $t$ distribution with DoF $\nu>d$, including Gaussian)
	\BLUE{From \cref{lemma-lati},  one can see informally: (i) in the ``transient phase'', as the proposed ``latitude'' is almost deterministic which has higher target density, the acceptance probability is almost $1$ starting from either the North Pole or the South Pole; (ii) In the ``stationary phase'', as the proposed ``latitude'' approximately follows an autoregressive process, the acceptance rate is well-approximated by a positive constant.}  This suggests that as long as the target has a ``lighter tail'' than multivariate student's $t$ with DoF $d$, the ``transient phase'' of SPS takes at most $\bigO(1)$ steps. {For comparison, for the standard multivariate Gaussian target, the ``transient phase'' of the Euclidean RWM takes $\bigO(d)$ steps \citep{Christensen2005}.}
\end{example}

\begin{remark}\label{rmk_heavy_tail}
	Our theoretical results don't cover the cases of SPS for targets with heavier tails, such as multivariate student's $t$ with DoF $\nu<d$. In this case, the SPS cannot start from the North Pole since the first proposal of the SPS will be rejected with probability $1$. One might consider to start from the South Pole. However, the SPS could get stuck at the South Pole if the proposal variance is large (even if the origin is the mode of the target density). See \cref{comment_rmk_heavy_tail} for comments. In practice, we suggest choosing the initial state of the SPS as a random state uniformly sampled on the sphere. 
\end{remark}

	\section{SPS: Extension to Elliptical Targets}\label{section_ellipse}
	
	\subsection{Extensions of Stereographic Projection}
	Same as the previous section, we denote the state of the Markov chain by $x=(x_1,\dots,x_d)^T\in\Reals^{d}$.
	Suppose $z=(z_1,\dots,z_{d+1})^T$ is the coordinates of a \emph{unit sphere in $\Reals^{d+1}$} (that is, $\|z\|=1$). 
	
	Now, we map $z\in \mathbb{S}^{d}$ to $x\in \Reals^{d}$ by the generalized stereographic projection (GSP)
	\*[
	x=\GSP(z):=Q\left(R\sqrt{\lambda_1}\frac{z_1}{1-z_{d+1}},\dots,R\sqrt{\lambda_d}\frac{z_d}{1-z_{d+1}}\right)^T,
	\]
	where $R$ is the radius parameter, $Q^T=Q^{-1}\in \Reals^{d\times d}$ is a rotation matrix, $\{\lambda_1,\dots,\lambda_d\}$ are non-negative constants. That is, the GSP is obtained by ``stretching'' via $\{\lambda_i\}$ and ``rotating'' via $Q$ from the SP.
	
	Defining the norm
	\*[
	\|x\|^2_{\Lambda,Q}:=x^TQ\Lambda^{-1}Q^Tx,
	\]
	where $\Lambda=\diag\{\lambda_1,\dots,\lambda_d\}$,
	we have the following Generalized Stereographic Projection Sampler (GSPS) with parameters $R,h,\Lambda,Q$. 
 
		\begin{algorithm}\label{our_algo}
		\caption{Generalized Stereographic Projection Sampler (GSPS)}
		\begin{itemize}
			\item Let the current state be $X^d(t)=x$;
			\item Compute the proposal $\hat{X}$:
			\begin{itemize}
				\item Let $z:=\GSP^{-1}(x)$;
				\item Sample independently $\dee\tilde{z}\sim \mathcal{N}(0,h^2I_{d+1})$;
				\item Let $\dee z:= \dee \tilde{z}-\frac{(z^T\cdot\dee\tilde{z})z}{\|z\|^2}$ and $\hat{z}:=\frac{z+\dee z}{\|z+\dee z\|}$;
				\item The proposal $\hat{X}:=\GSP(\hat{z})$.
			\end{itemize}
			\item $X^d(t+1)=\hat{X}$ with probability {$1\wedge\frac{\pi(\hat{X})(R^2+\|\hat{X}\|^2_{\Lambda,Q})^d}{\pi(x)(R^2+\|x\|^2_{\Lambda,Q})^d}$}; otherwise $X^d(t+1)=x$.
		\end{itemize}
	\end{algorithm}

\BLUE{Note that even though GSPS can be formulated alternatively as using ``ellipsoid'' instead of ``sphere'', this is not our formulation. We still use the unit sphere to compute $\hat{z}$ for GSPS and only replace the mapping $\SP$ by $\GSP$. That is, we only ``stretch'' and ``rotate'' the proposal $\hat{z}$ via $\GSP$ when projecting to $\Reals^d$. }
    
	The following example shows that we can extend isotropic targets to elliptical targets using the GSPS.
	\begin{example}
		Suppose $\pi(x)$ is multivariate student's $t$ with covariance matrix $\Sigma=Q\Lambda Q^T$ where $\Lambda=\diag\{\lambda_1,\dots,\lambda_d\}$:
		\*[
		\pi(x)\propto \left(1+\frac{1}{\nu}x^T\Sigma^{-1}x\right)^{-(\nu+d)/2}
		\]
		Then, for the GSPS with the corresponding $Q$ and $\Lambda$, and $R^2=\sum_i\lambda_i$, the acceptance rate is always $1$ for any $h$. 
	\end{example}

	The GSPS naturally suggests that one can estimate the covariance matrix $\Sigma=Q\Lambda Q^T$ under the adaptive MCMC framework, which results in adaptive GSPS. Indeed, if both $\Lambda$ and $Q$ are known, then one can normalize GSPS and reduce it to SPS, which is GSPS with $\Sigma=I_d$. 
	
	As the robustness to estimations of the mean and the covariance matrix of the target is the key to the success of adaptive GSPS, we study the robustness of Gaussian targets in the next section.
		
	\subsection{Robustness for Gaussian Targets}

	We consider multivariate Gaussian targets with  mean vector $\mu$ and covariance matrix $\Sigma$:
	\*[
	\pi_{\mu,\Sigma}(x)\propto \exp\left(-\frac{1}{2}(x-\mu)^T\Sigma^{-1}(x-\mu)\right),
	\]
	where
		\*[
	\Sigma=\diag(\lambda_1,\dots,\lambda_d),\quad \mu=(\mu_1,\dots,\mu_d)^T.
	\]
	We are interested in the robustness when $\mu\neq 0$ and $\Sigma\neq I_d$.
	
	 {For comparison with the traditional RWM in $\Reals^d$ on the orders of stepsize, we recall that the optimal scaling theory gives an optimal stepsize of $\bigO(d^{-1/2})$ for RWM \citep{Roberts1997}. Our stepsize $h$ is defined on the unit sphere. When projecting to $\Reals^d$, we multiply by $R=\bigO(d^{1/2})$. Therefore, the optimal stepsize of RWM roughly corresponds to $h=\bigO(d^{-1})$ in our setting. In this subsection, we study how large the stepsize $h$ can be so the expected acceptance probability of SPS goes to $1$. This roughly corresponds to a stepsize of $\bigO(hd^{1/2})$ in $\Reals^d$. Our results show that the order of the stepsize of SPS when projected back to $\Reals^d$ is always no smaller than the order of the optimal stepsize of RWM.}

	\begin{theorem}\label{thm-robust}
		Assume there exist constants $C<\infty$ and $c>0$ such that
		$c\le \lambda_i\le C$ and $|\mu_i|\le C$ for all $i=1,\dots,d$. Furthermore, assume $R=d^{1/2}$ and
		\[\label{eq_robust_cond}
		\left|\sum_{i=1}^d\mu_i^2-\sum_{i=1}^d(1-\lambda_i)\right|=\bigO(d^{\alpha}),
		\]
		where $\alpha\le 1$.
		Then, under stationarity that $X\sim \pi$, the expected acceptance probability converges to $1$ as $d\to\infty$ 
		\*[
		\EE_{X\sim \pi_{\mu,\Sigma}}\EE_{\hat{X}\,|\,X}\left[1\wedge\frac{\pi_{\mu,\Sigma}(\hat{X})(R^2+\|\hat{X}\|^2)^d}{\pi_{\mu,\Sigma}(X)(R^2+\|X\|^2)^d}\right]\to 1,
		\]
	%	\RED{\[
	%	1\wedge \frac{\pi_{\mu,\Sigma}(\hat{X})(R^2+\|\hat{X}\|^2)^d}{\pi_{\mu,\Sigma}(X)(R^2+\|X\|^2)^d}= 1 +o_{\Pr}(1)
	%	\]}
		for all $h$ such that
		\[\label{eq-result-robust}
		h=o\left(\frac{d^{-1}}{\sqrt{\max\left\{\frac{1}{d}\sum_i |1-\lambda_i|,\frac{1}{d}\sum_i\mu_i^2\right\}}}\wedge {d^{-(\frac{1}{2} \vee \alpha)}} \right).
		\]
	\end{theorem}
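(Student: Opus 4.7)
The plan is to show the expected acceptance probability is at least $1-\tfrac12\EE|\Delta|$ and then to prove $\EE|\Delta|\to 0$, where $\Delta=\log\pi_S(\hat z)-\log\pi_S(z)$ is the log-acceptance ratio. The reduction works because the proposal on the sphere is symmetric, so at stationarity the pair $(z,\hat z)$ has a swap-invariant joint law and $\Delta$ has a symmetric distribution; combined with $(1-e^{-x})\bone\{x\ge 0\}\le x_+$ this yields $\EE[1\wedge e^\Delta]\ge 1-\tfrac12\EE|\Delta|$. I would then split $\Delta=\Delta^{iso}+\Delta^{\Sigma}+\Delta^{\mu}$, where
\[
\Delta^{iso} = -\tfrac12(\|\hat X\|^2-\|X\|^2) + d\log\tfrac{R^2+\|\hat X\|^2}{R^2+\|X\|^2},\quad
\Delta^{\Sigma} = -\tfrac12\sum_{i=1}^d\bigl(\tfrac{1}{\lambda_i}-1\bigr)(\hat X_i^2-X_i^2),
\]
and $\Delta^{\mu}=\sum_i(\mu_i/\lambda_i)(\hat X_i-X_i)$ capture the isotropic baseline and the corrections due to $\Sigma\neq I$ and $\mu\neq 0$.

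For the isotropic part, Taylor-expanding $\log(1+u)$ around $u=(\|\hat X\|^2-\|X\|^2)/(R^2+\|X\|^2)$ with $R^2=d$ causes the leading linear-in-$u$ terms from the quadratic loss and from the log-Jacobian to cancel, leaving
\[
\Delta^{iso} = -\frac{(\|X\|^2-d)(\|\hat X\|^2-\|X\|^2)}{4d}-\frac{(\|\hat X\|^2-\|X\|^2)^2}{8d} + \text{lower order}.
\]
By \eqref{eq_robust_cond} together with $\var(\|X\|^2)=\bigO(d)$ for diagonal $\Sigma$, one has $\|X\|^2-d=\bigO_{\Pr}(d^{1/2\vee\alpha})$. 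Using \cref{lemma-lati} together with $\|X\|^2=R^2(1+z_{d+1})/(1-z_{d+1})$ yields $\|\hat X\|^2-\|X\|^2=\bigO_{\Pr}(dh)$ at stationarity. The two displayed terms are therefore $\bigO_{\Pr}(h\,d^{1/2\vee\alpha})$ and $\bigO_{\Pr}(dh^2)$, both $o_{\Pr}(1)$ under $h=o(d^{-(1/2\vee\alpha)})$.

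For $\Delta^{\Sigma}$ I would write $\hat X_i^2-X_i^2 = 2X_i(\hat X_i-X_i)+(\hat X_i-X_i)^2$ and control the linear part through its conditional variance given $X$. The tangent-space step $dz$ is an $h$-scaled isotropic Gaussian on the $d$-dimensional tangent at $z$, and pulling it back through stereographic projection (conformal factor $R/(1-z_{d+1})\approx\sqrt d$ at a typical $|z_{d+1}|=o(1)$) produces increments $\hat X_i-X_i$ that are approximately independent $\mathcal{N}(0,dh^2)$ across $i$. Hence
\[
\var\bigl(\Delta^{\Sigma}\,\big|\,X\bigr)\le C\,dh^2\sum_{i=1}^d(1/\lambda_i-1)^2 X_i^2,
\]
whose unconditional expectation is $\bigO(d^2h^2\eta_{\Sigma})$ with $\eta_{\Sigma}=\tfrac1d\sum_i|1-\lambda_i|$. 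This gives $\EE|\Delta^{\Sigma}|=\bigO(dh\sqrt{\eta_{\Sigma}})$, requiring $h=o(d^{-1}/\sqrt{\eta_{\Sigma}})$. An identical computation with weights $\mu_i/\lambda_i$ in place of $(1/\lambda_i-1)X_i$ gives $\EE|\Delta^{\mu}|=\bigO(dh\sqrt{\eta_{\mu}})$ for $\eta_{\mu}=\tfrac1d\sum_i\mu_i^2$, and the quadratic remainder $\sum(1/\lambda_i-1)(\hat X_i-X_i)^2$ contributes only $\bigO(d^2h^2\eta_{\Sigma})$, negligible once the linear bound is in force. The minimum of the three conditions on $h$ reproduces \eqref{eq-result-robust}.

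The main obstacle will be executing the variance calculation sharply: a naive Cauchy--Schwarz on $|\sum_i(1/\lambda_i-1)X_i(\hat X_i-X_i)|$ introduces an extraneous $\sqrt d$ factor. Avoiding it requires exploiting both the near-pairwise-independence of the components of $\hat X-X$ inherited from the isotropy of the tangent-space step (with covariance corrections of order $h^2 z_iz_j$ that are negligible at stationarity) and the precise conformal geometry of the projection, which drives the $\sqrt d$ amplification from sphere-step to $\Reals^d$-step; the same care keeps the Taylor remainders in $\Delta^{iso}$ genuinely lower order throughout the range $\alpha\in(0,1]$, including the edge case $\alpha=1$ where the $(s-d)/d$ factors in the expansion are $\bigO(1)$ but still captured by the stated linear-in-$h$ bounds.
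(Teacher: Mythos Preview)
Your decomposition $\Delta=\Delta^{iso}+\Delta^{\Sigma}+\Delta^{\mu}$ and the subsequent term-by-term control are essentially the paper's argument: the paper writes the log-ratio as the standard-Gaussian piece plus $-\tfrac12\sum_i\delta_i(\hat X_i^2-X_i^2)$ plus $\sum_i\mu_i(1+\delta_i)(\hat X_i-X_i)$ with $\delta_i=\lambda_i^{-1}-1$, and then uses \cref{lemma-lati} together with $z_{d+1}=\bigO(d^{\alpha-1})+\bigO_{\Pr}(d^{-1/2})$ to handle each part. Your Taylor treatment of $\Delta^{iso}$ recovers the paper's $z_{d+1}$-calculation, and your variance bounds for $\Delta^{\Sigma},\Delta^{\mu}$ reproduce the correct orders.

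There is one genuine error in the reduction step. The claim that at stationarity $(z,\hat z)$ has a swap-invariant joint law is false: $\hat z$ is the \emph{proposal}, so the joint density is $\pi_S(z)q(z,\hat z)$, and swap-invariance would require $\pi_S(z)=\pi_S(\hat z)$. Consequently $\Delta$ need not be symmetric. The fix is immediate: since $(1-e^x)_+\le |x|$ for all $x$, one has $\EE[1\wedge e^{\Delta}]\ge 1-\EE|\Delta|$ with no symmetry needed (you lose only the factor $\tfrac12$). Alternatively, and this is what the paper does, use that the acceptance probability is bounded by $1$, so it suffices to show $\Delta\overset{\Pr}{\to}0$ and invoke bounded convergence.

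A smaller gap: for $\Delta^{\Sigma}$ and $\Delta^{\mu}$ you bound only the conditional variance of the linear part $\sum_i w_i(\hat X_i-X_i)$, but the conditional mean is not zero; from \cref{lemma-lati} one has $\EE[\hat X_i-X_i\mid X]\approx(a_d-1)X_i$ with $a_d-1=\bigO(dh^2)$, contributing a drift term of order $d^2h^2\eta$. This is dominated by your variance condition $dh\sqrt{\eta}=o(1)$ (since then $(dh)^2\eta=o(1)$), so the final step-size condition is unaffected, but the mean term should be stated and bounded explicitly. The paper tracks both mean and fluctuation contributions separately in its proofs of the analogues of your $\Delta^{\Sigma}$ and $\Delta^{\mu}$ bounds.
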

	\begin{proof}
		See \cref{proof-thm-robust}.
	\end{proof}

Note that for the standard Gaussian target, we know that the acceptance rate goes to $1$ as $d\to \infty$ for any $h$. Recall that for traditional RWM in dimension $k$, the optimal stepsize is $\bigO(k^{-1/2})$ in $\Reals^k$ under the optimal scaling framework \citep{Roberts1997}, which corresponds to $h=\bigO(k^{-1/2}d^{-1/2})$ in the SPS setting. Therefore, \cref{thm-robust} suggests that the ``effective dimension'' is determined by $\{\mu_i\}$ and $\{\lambda_i\}$: as long as $\{\lambda_i\}$ and $\{\mu_i\}$ are uniformly bounded, the ``effective dimension'' of SPS is never larger than $d$, which is the ``effective dimension'' of the traditional RWM. Furthermore, we can compare the ``effective dimension'' for SPS with $d$ using \cref{thm-robust} under different settings.

\begin{example}	
	Suppose $\mu_i=0$ for all $i=1,\dots,d$. Moreover, $\lambda_1=\lambda_2=\dots=\lambda_k=2$ and $\lambda_{k+1}=\dots=\lambda_d=1$, then as long as $k$ is a fixed number, $\frac{1}{d}\sum_i|1-\lambda_i|=\bigO(kd^{-1})$. By \cref{thm-robust}, when $h=o(k^{-1/2}d^{-1/2})$ then the acceptance rate goes to $1$ as $d\to\infty$. This suggests that the ``effective dimension'' for SPS is no more than $k$.
	\end{example}
\begin{example}
	Suppose {$\lambda_i=0$ for all $i=1,\dots,d$.} Furthermore, $\mu_1=\mu_2=\dots=1$ and $\mu_{k+1}=\dots=\mu_d=0$ where $k$ is a fixed number. Then we have $\frac{1}{d}\sum_i\mu_i^2=\bigO(kd^{-1})$.
	By \cref{thm-robust}, the acceptance rate goes to $1$ when
	$h=o(k^{-1/2}d^{-1/2})$, which implies the ``effective dimension'' for SPS is no more than $k$.
\end{example}

	One can consider \cref{thm-robust} in the context of adaptive MCMC \BLUE{for Gaussian targets}, in which $\{\mu_i\}$ and $\{1-\lambda_i\}$ represent the ``estimation errors'' of the coordinate means and eigenvalues of the covariance matrix of the target. One can choose the ``radius'' $R$ of the stereographic projection properly to satisfy \cref{eq_robust_cond} with $\alpha=\frac{1}{2}$. This is to properly scale $R$ so that the ``latitude'' on the unit sphere of the SPS is $\bigO_{\Pr}(d^{-1/2})$. Then, according to \cref{thm-robust}, the ``effective dimension'' of SPS is smaller than $d$ if $\frac{1}{d}\sum_i|1-\lambda_i|=o(1)$ and $\frac{1}{d}\sum_i\mu_i^2=o(1)$.  \BLUE{However, the above results on ``effective dimension'' are based on \cref{thm-robust}, which only hold for Gaussian targets. For more results on robustness and optimal acceptance rate for adaptive MCMC, we will study another family of targets in \cref{section_scaling}.}

	Finally, we show two examples that the result of \cref{thm-robust} is tight. {First, consider the special case that $\mu_i=\mu>0$ and $\lambda_i=\sigma^2$ for all $i$. We assume $\mu^2=1-\sigma^2$ so that \cref{eq_robust_cond} holds for any $\alpha\le 1/2$.} Then \cref{eq-result-robust} suggests that the acceptance probability goes to zero if $h=o(d^{-1}/\mu)$. On the other hand, the target in this special case is a product i.i.d.\ target with marginal distribution $\mathcal{N}(\mu,1-\mu^2)$. By our optimal scaling results in \cref{section_scaling}, we know that the acceptance probability does not go to zero if $h=\bigO(d^{-1}/\mu)$ (see \cref{lemma_CLT} and \cref{coro_CLT}). Therefore, \cref{eq-result-robust} is tight. {Second, consider the special case that $\mu_i=\mu=0$ and $\lambda_i=\sigma^2\neq 1$ for all $i$. Then \cref{eq_robust_cond} holds for $\alpha=1$. In this case, \cref{eq-result-robust} suggests that the acceptance probability goes to zero if $h=o(d^{-\alpha})=o(d^{-1})$. On the other hand, the target in this special case is a product i.i.d.~target with marginal $\mathcal{N}(0,\sigma^2)$ where $\sigma^2\neq 1$. If we properly rescale $R$ and $\sigma^2$, it reduces to the case of \cref{lemma_CLT} where the target is the standard Gaussian but $R\neq d^{1/2}$. Therefore, by \cref{lemma_CLT}, the acceptance probability doesn't converge to zero if $h=\bigO(d^{-1})$. Therefore, \cref{eq-result-robust} is again tight.}
	\section{SPS in High-dimensional Problems}
	\label{section_scaling}
	In this section, we shall give a case study of the behavior of SPS for high-dimensional target densities. We shall consider various limits of SPS as $d\to \infty $ though to have tractability of this limit we need to consider a very specialised class of target densities.
		To this end, analogous to \citep{Roberts1997}  we assume the target $\pi(x)$ has a product i.i.d.~form.
		
	\subsection{Assumptions on $\pi$}\label{subsec_assumption}
	 We assume the target $\pi(x)$ has a product i.i.d.~form: 
		\[\label{assump_iid}
		\pi(x)=\prod_{i=1}^d f(x_i).
		\]
		Without loss of generality, we assume $f$ is normalized such that
		\[\label{assump0}
		\EE_f(X^2)=\int x^2f(x)\dee x=1,\quad \EE_f(X^6)<\infty.
		\]
		We further assume $f'/f$ is Lipschitz continuous, $\lim_{x\to\pm\infty}xf'(x)=0$, and
		\[\label{assump1}
		&\EE_f\left[\left(\frac{f'(X)}{f(X)}\right)^8\right]<\infty,\quad \EE_f\left[\left(\frac{f''(X)}{f(X)}\right)^4\right]<\infty,\quad \EE_f\left[\left(\frac{Xf'(X)}{f(X)}\right)^4\right]<\infty.
		\]
			\begin{remark}\label{remark_nu}
		Under the assumption $\EE_f(X^2)=1$, by Cauchy--Schwarz inequality
		\*[
		\EE_f\left[\left((\log f)'\right)^2\right]\ge 1
		\]
		where the equality is achieved by standard Gaussian (or truncated standard Gaussian). 
		{For univariate student's $t$ distribution with any DoF$=\nu>2$, rescaling by a factor $\sqrt{\frac{\nu-2}{\nu}}$, we can obtain a target density $f_{\nu}$ with $\EE_{f_{\nu}}(X^2)=1$ and
		\*[
			C_{\nu}:=\EE_{f_{\nu}}\left[\left((\log f_{\nu})'\right)^2\right]=\left(\frac{\nu}{\nu-2}\right)\left(\frac{\nu+1}{\nu}\right)\left(\frac{\nu+4}{\nu+3}\right)\sqrt{\frac{\nu+4}{\nu}}> 1.
		\]
		where $\nu\to\infty$ recovers the case for the standard Gaussian target. See \cref{table-nu} for values of $C_{\nu}$  for different $\nu$. One can see that $C_{\nu}$  is very close to $1$ for even a medium size of $\nu$.
		}
\begin{table}
\caption{Examples of $C_{\nu}$ and $C_{\nu}/(C_{\nu}-1)$ for different $\nu$ in \cref{remark_nu} and \cref{remark-ESJD}}
\label{table-nu}
\begin{tabular}{@{}lcrcrrr@{}}
\hline
 & $\nu=3$ &
$\nu=5$ &
$\nu=10$ &
$\nu=20$ &
$\nu=50$ &
$\nu=100$ \\
\hline\hline
$C_{\nu}$ & $7.1285$ &  $3.0187$  & $1.7521$ & $1.3336$ & $1.1250$ & $1.0612$  \\
\hline
$C_{\nu}/(C_{\nu}-1)$ & $1.1632$ & $1.4954$ & $2.3297$ & $3.9977$ & $8.9990$ & $17.3328$\\
\hline
\end{tabular}
\end{table}
	\end{remark}
	
	In this section, we consider $f$ has full support in $\Reals$. Then the product i.i.d.~target $\pi$ is not isotropic unless it is the standard Gaussian. 
	
	\subsection{Acceptance Probability}
	To make progress, we need a detailed understanding of the high-dimensional behavior of the acceptance probability of SPS. It turns out that to optimally apply SPS, we need to scale $R$ to be $\bigO(d^{1/2})$. Not doing this will concentrate mass at either the North or South polls in an ultimately degenerate way. We shall thus assume $R$ to be scaled in this way in what follows.
    \BLUE{Therefore, we consider $R=\sqrt{\lambda d}$ for a fixed constant $\lambda$. To simplify the final result, we replace the step size $h$ by another parameter $\ell$ via
		\begin{equation}\label{new_par_ell}
		   h=\frac{1}{\sqrt{d-1}}\left[\frac{1}{\left(1-\frac{\ell^2}{2d}{\frac{4\lambda}{(1+\lambda)^2}}\right)^2}-1\right]^{1/2}
		\end{equation}
  which implies $\frac{1}{\sqrt{1+h^2(d-1)}}=1-\frac{\ell^2}{2d}{\frac{4\lambda}{(1+\lambda)^2}}$.
  Note that $\ell$ is simply a re-parameterization of $h$. When $\ell$ is a fixed constant, $h$ is scaled as $\bigO(d^{-1})$ since $h\approx \frac{\ell}{d}\sqrt{4\lambda /(1+\lambda)^2}$.
}

	\begin{lemma}\label{lemma_CLT}
			Under the assumptions on $\pi$ in \cref{subsec_assumption}, suppose the current state $X\sim \pi$, and the parameter of the algorithm {$R=\sqrt{\lambda d}$, where $\lambda>0$ is a fixed constant}. \BLUE{We re-parameterize $h$ by $\ell$ according to \cref{new_par_ell}}.
		Then, if either $\lambda\neq 1$ or $f$ is not the standard Gaussian density, %\RED{
		%(JY: this statement might be wrong) there exists a sequence of identically distributed Gaussian random variables $\{W_d\}$ such that
		%\*[
	%\EE_{X\sim \pi}\EE_{\hat{X}\mid X}\left[\left|1\wedge \frac{\pi(\hat{X})(R^2+\|\hat{X}\|^2)^d}{\pi(X)(R^2+\|X\|^2)^d}-1\wedge \exp(W_d)\right|\right]\to 0,
	%	\]
	%	where $W_d\sim \mathcal{N}(\mu,\sigma^2)$} with
	 there exists a sequence of sets $\{F_d\}$ such that $\pi(F_d)\to 1$ and
		\*[
	\sup_{X\in F_d}\EE_{\hat{X}\mid X}\left[\left|1\wedge \frac{\pi(\hat{X})(R^2+\|\hat{X}\|^2)^d}{\pi(X)(R^2+\|X\|^2)^d}-1\wedge \exp(W_{\hat{X}\mid X})\right|\right]=o(d^{-1/4}\log(d)),
		\]
		where $W_{\hat{X}\mid X}\sim\mathcal{N}(\mu,\sigma^2)$ and
			\*[
		\mu&= \frac{\ell^2}{2}\left\{\frac{4\lambda}{(1+\lambda)^2}-\EE_f\left[\left((\log f)'\right)^2\right]\right\},\quad
		\sigma^2= \ell^2\left\{\EE_f\left[\left((\log f)'\right)^2\right]-\frac{4\lambda}{(1+\lambda)^2}\right\}.
		\]
	\end{lemma}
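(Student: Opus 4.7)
The plan is to prove the quantitative CLT for the log acceptance ratio by adapting the Roberts--Gelman--Gilks optimal-scaling strategy \citep{Roberts1997}: decompose the log ratio, Taylor expand in the (small) per-coordinate displacement, and then apply a conditional Lindeberg CLT together with concentration. First I write
\*[
\log\frac{\pi(\hat{X})(R^2+\|\hat{X}\|^2)^d}{\pi(X)(R^2+\|X\|^2)^d} = \sum_{i=1}^d \oF{\log f(\hat{X}_i)-\log f(X_i)} + d\log\frac{R^2+\|\hat{X}\|^2}{R^2+\|X\|^2}.
\]
The reparameterization forces $h=\bigO(d^{-1})$, so that the effective $\Reals^d$-step $hR=\bigO(d^{-1/2})$ matches the classical RWM scaling. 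Under $X\sim\pi$ one has $\|X\|^2/d\to 1$ and hence $z_{d+1}=\bigO_{\Pr}(d^{-1/2})$ where $z=\SP^{-1}(X)$, so \cref{lemma-lati} applies: \cref{eq_approx_stationary,eq_approx_whole} yield an explicit expansion $\hat{X}_i-X_i=\text{(linear in $hR\cdot U_i$)} + \bigO_{\Pr}(d^{-1})$, where the $\{U_i\}$ are the coordinates of the tangent Gaussian step.

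Second, I Taylor expand $\log f(\hat{X}_i)-\log f(X_i)$ to second order using the Lipschitz property of $f'/f$ and the moment bounds \cref{assump0,assump1}. The sum $\sum_i (\log f)'(X_i)(\hat{X}_i-X_i)$ is, conditionally on $X$, asymptotically Gaussian with variance $\ell^2\EE_f[((\log f)')^2]$ by a Lindeberg CLT, while the second-order sum concentrates at $-\tfrac{\ell^2}{2}\EE_f[((\log f)')^2]$ after an integration by parts converts the $\tfrac{1}{2}(\log f)''$ expectation into $-\tfrac{1}{2}[(\log f)']^2$. The Jacobian piece expands via $\log(1+u)=u-u^2/2+\bigO(u^3)$; using $R^2+\|X\|^2\approx(1+\lambda)d$ and the identity $1-z_{d+1}^2=4R^2\|X\|^2/(R^2+\|X\|^2)^2$, the linear-in-$U$ Jacobian contribution is anti-correlated with the score sum and subtracts $\ell^2\cdot 4\lambda/(1+\lambda)^2$ from the total variance, while the Jacobian's quadratic part adds $+\tfrac{\ell^2}{2}\cdot 4\lambda/(1+\lambda)^2$ to the drift. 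Combining the four contributions yields the claimed $\mu,\sigma^2$ and in particular $\mu=-\sigma^2/2$, as reversibility requires.

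Third, I define $F_d$ as the intersection of concentration events of the form $\{|d^{-1}\sum_i X_i^2-1|\le d^{-1/4}\log d\}$, $\{|d^{-1}\sum_i((\log f)'(X_i))^2-\EE_f[((\log f)')^2]|\le d^{-1/4}\log d\}$, together with analogous bounds for the fourth-power and $d^{-1}\sum_i X_i f'(X_i)/f(X_i)$ averages appearing in the Taylor remainders. Chebyshev combined with the finite $(4,6,8)$-moments of \cref{assump1} gives $\pi(F_d)\to 1$. On $F_d$ all $\bigO_{\Pr}$ estimates become deterministic $\bigO$, and the slowest remainder---coming from cubic-in-$h$ contributions summed over $d$ coordinates---is $\bigO(d^{-1/4}\log d)$. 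The contraction $|1\wedge e^a-1\wedge e^b|\le|a-b|$ then transfers the log-ratio bound to the acceptance-probability statement. I expect the main obstacle to be the Jacobian bookkeeping: because $R^2$ and $\|X\|^2$ are both of order $d$, the naive Taylor expansion of $d\log\tfrac{R^2+\|\hat{X}\|^2}{R^2+\|X\|^2}$ produces large near-cancelling terms, and extracting the precise constant $4\lambda/(1+\lambda)^2$ requires the identity for $1-z_{d+1}^2$ together with verifying that all residual $z_{d+1}$-dependent cross-terms vanish uniformly on $F_d$.
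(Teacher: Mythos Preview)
Your broad strategy---Taylor expand the log ratio, define typical sets via concentration of empirical moments, transfer via the $1$-Lipschitz property of $x\mapsto 1\wedge e^x$---matches the paper's. Two points need correction.

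\textbf{A factual error about the latitude.} You write ``$\|X\|^2/d\to 1$ and hence $z_{d+1}=\bigO_{\Pr}(d^{-1/2})$''. With $R=\sqrt{\lambda d}$ one has
\[
z_{d+1}=\frac{\|X\|^2-\lambda d}{\|X\|^2+\lambda d}\;\longrightarrow\;\frac{1-\lambda}{1+\lambda},
\]
which is \emph{nonzero} whenever $\lambda\neq 1$. Consequently \cref{eq_approx_stationary} (which you invoke, and which assumes $z_{d+1}=\bigO(d^{-1/2})$) does not apply; only the general expansion \cref{eq_approx_whole} is available. This is not merely cosmetic: the factor $1/(1-z_{d+1})\to(1+\lambda)/(2\lambda)$ multiplies every displacement $\hat X_i-X_i$ and is precisely where the constant $4\lambda/(1+\lambda)^2$ enters. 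Your Jacobian paragraph gestures at this identity but the earlier mis-stated asymptotic would, if carried through, give the wrong constants.

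\textbf{A gap in the variance computation.} You propose a ``conditional Lindeberg CLT'' for $\sum_i(\log f)'(X_i)(\hat X_i-X_i)$ and state its variance is $\ell^2\EE_f[((\log f)')^2]$. But conditionally on $X$ the summands are \emph{not} independent: every $\hat X_i-X_i$ contains the common term $\hat z_{d+1}-z_{d+1}$ (through the $1/(1-\hat z_{d+1})$ factor), shares the normalization $\|z+dz\|$, and the tangential components $U_i$ have pairwise correlation $\bigO(z_iz_j)$. In fact no CLT is needed---Term~I is, to leading order, a linear functional of the tangent Gaussian and hence exactly Gaussian given $X$; the entire content is computing its variance. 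The paper does this by a projection trick: it writes Term~I as $\frac{R}{1-z_{d+1}}\,\tilde v\cdot(\hat z-z)$ for an explicit vector $\tilde v$ depending on $X$, then applies the one-dimensional approximation \cref{eq_approx_whole} to the projection of $\hat z$ onto $\tilde v/\|\tilde v\|$. This yields the variance as $a_d^2h^2(\|\tilde v\|^2-|\tilde v\cdot z|^2)$ in one stroke, with the Jacobian contribution and the score--Jacobian covariance automatically built in. Your plan (compute the score-sum variance, the Jacobian variance, and their covariance separately) can be made to work, but the sketch does not indicate how you would handle the $U_i$--$U_j$ correlations or the shared $\hat z_{d+1}$ dependence, and without that the claimed intermediate variance $\ell^2\EE_f[((\log f)')^2]$ is unjustified.
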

	\begin{proof}
		See \cref{proof-lemma_CLT}.
	\end{proof}
	\begin{remark}
	\cref{lemma_CLT} requires either $\lambda\neq 1$ or $\pi$ is not the standard multivariate Gaussian.
	    If $\lambda=1$ and $\pi$ is the standard multivariate Gaussian, the case reduces to isotropic targets discussed in \cref{subsec-iso}, so the Gaussian approximation in \cref{lemma_CLT} doesn't hold.
	\end{remark}

\BLUE{
\subsection{Optimisation and robustness of SPS}
Note that
		\cref{lemma_CLT} suggests that the expected acceptance probability in the stationary phase
		\*[
		\EE_{X\sim\pi}\EE_{\hat{X}\,|\,X}\left[1\wedge 	\frac{\pi(\hat{X})(R^2+\|\hat{X}\|^2)^d}{\pi(X)(R^2+\|X\|^2)^d}\right]\to 2\Phi\left(-\frac{\sigma}{2}\right).
		\]
		Furthermore, as $\EE[\|\hat{X}-X\|^2]\to \ell^2$, we obtain the commonly used approximation of Expected Squared Jumping Distance (ESJD):
\begin{equation}\label{approx_ESJD}
		 \EE[\|\hat{X}-X\|^2]\cdot\EE\left[1\wedge 	\frac{\pi(\hat{X})(R^2+\|\hat{X}\|^2)^d}{\pi(X)(R^2+\|X\|^2)^d}\right]\to 2\ell^2\cdot\Phi\left(-\frac{\ell}{2}\sqrt{\EE_f\left[\left((\log f)'\right)^2\right]-\frac{4\lambda}{(1+\lambda)^2}}\right).    
		\end{equation}
	%	By maximizing over $\ell$, one can see the following observations:
	%	\begin{enumerate}
	%		\item When $\lambda\to 0$ or $\lambda\to \infty$, one recovers the maximum ESJD for the traditional RWM \citep{Roberts1997};
	%		\item For any fixed $\lambda>0$, the approximated ESJD of SPS is strictly larger than the maximum ESJD of RWM;
	%		\item If $\pi$ is not the standard Gaussian target, the maximum approximated ESJD of SPS is achieved at $\lambda=1$ (that is, $R=\sqrt{d}$).
	%	\end{enumerate}
	%We will make the last statement precise in \cref{thm_ESJD}.

In this subsection we shall explicitly consider the optimisation of SPS, and take a close look at its relative performance in comparison to standard Euclidean RWM. All the results in this section to date have used the convenient restriction that
$\EE_f(X^2)=1$. However, at this point we shall need to generalise this notion. Therefore consider density $f$ to have mean
$m$ and variance $s^2$.

We consider SPS with $R=\sqrt{\lambda (s^2+m^2) d}$. 
and stepsize $
h=\frac{1}{\sqrt{d-1}}\left[\frac{1}{(1-\frac{\ell^2}{2d}{\frac{4\lambda}{(1+\lambda)^2}})^2}-1\right]^{1/2}$.
We shall denote this algorithm's approximate limiting ESJD by
$E(\ell, m, s, \lambda)$ and its corresponding acceptance probability:
$A(\ell, m, s, \lambda)$.
The following result is a direct consequence of \cref{approx_ESJD} applied to a scaled density (dividing by $\sqrt{s^2+m^2}$):
$$
E(\ell, m, s, \lambda) = 
\sqrt{s^2+m^2} 
\cdot 2 \ell^2 \Phi\left(
{-\ell\over 2} \sqrt{
%sI-{4\lambda \over (1+\lambda )^2
\tilde{I}(s,m, \lambda )
}  \right),\quad
A(\ell, m, s, \lambda) =  
2  \Phi\left(
{-\ell\over 2} \sqrt{
%sI-{4\lambda \over (1+\lambda )^2
\tilde{I}(s,m,\lambda )
 } \right)\ ,
$$
where 
%$I={\bf E} _f((\log f)'(X)^2)$ and
$\tilde{I}(s, m,\lambda ) = (s^2+m^2)I-{4\lambda \over (1+\lambda )^2} $
and
$I={\EE} _f((\log f)'(X)^2)$.

%This can be seen as a direct consequence of Theorem \ref{thm_ESJD} applied to a scaled density (dividing by $s$.) Correspondingly, the limiting acceptance probability will be $$
%A(\ell, m, s, \lambda) =  2\Phi\left(
%{-\ell\over 2} 
%\sqrt{
%sI-{4\lambda \over (1+\lambda )^2}} \right) \ .$$
We note that we can readily recover the the corresponding quantities for RWM: $E(\ell, m, s, \infty )$
and $A(\ell, m, s, \infty )$. In particular, to consider robustness, we shall consider the relative
performance ratio
$$
\tilde{R}(m,s,\lambda ) := {\sup_\ell E(\ell, m, s, \lambda ) \over
\sup_\ell E(\ell, m, s, \infty ) }
$$
To get an expression for $\tilde{R}$, we shall follow the standard approach of \cite{Roberts1997} of expressing the efficiency in terms of acceptance rate. To that end, we define
$$
{\tilde E}(A, m, s, \lambda) :=
{E}(\ell(A), m, s, \lambda)
$$
where $\ell $ is chosen to be the unique solution to
${A}(\ell(A), m, s, \lambda)=A$.
A simple calculation yields the following.
\begin{corollary}
\label{corr:rob}
We have ${\tilde E}(A, m, s, \lambda)
=A\cdot \Phi^{-1}\left({A\over 2} \right) \cdot {4 {(s^2+m^2)}\over \tilde{I}(s,m, \lambda )}$
and
$\tilde{R}(m,s,\lambda) = {(s^2+m^2) I \over \tilde{I}(s,m,\lambda )}
=
{1\over 1 - \alpha \cdot \beta \cdot \gamma}$,
where
$
\alpha = {4 \lambda \over (1+\lambda )^2},
\beta = {s^2 \over s^2 + m^2}, 
\gamma = {1\over s^2 I}$.
Since all of $\alpha, \beta, \gamma $ are in $[0,1]$, In this situation, SPS is never less efficient than Euclidean RWM.
\end{corollary}
\begin{remark}\label{remark-robustness}
Primarily, Corollary \ref{corr:rob} is a robustness result. However it also highlights the (only) ways in which SPS can fail to outperform Euclidean RWM. The three constants $\alpha, \beta, \gamma $ characterise different sensitivities of the algorithm. 
\begin{description}
    \item{$\alpha$} measures the penalty due to mis-specification of the sphere radius. $4\lambda /(1+\lambda )^2$ is optimised at $\lambda =1$ when there is no penalty for misspecification of the target distribution dispersion.
    \item{$\beta$} describes the penalty due to mis-location of the hyper-sphere. It is seen that the optimal choice is to locate the sphere at the mean of the target distribution. 
    \item{$\gamma$} is a distribution-specific penalty.  It is straightforward by to check by functional calculus that $\gamma \le 1$ with equality achieved only in the case where the target density is Gaussian. Thus $\gamma $ is characterising proximity to Gaussianity. 
\end{description}
From this result, it can be seen that we can only achieve super-efficiency (convergence complexity more rapid than $\bigO(d)$) when all three parameters, $\alpha$, $\beta$ and $\gamma$ are close to (and converging to) unity.
\end{remark}

}

	\subsection{Maximizing ESJD}
	In this subsection, \BLUE{we consider $\lambda=1$ only (i.e., $R=\sqrt{d}$) and prove that the approximate limiting ESJD in \cref{approx_ESJD} is indeed the limiting maximum ESJD.} We will demonstrate (again analogously to \citep{Roberts1997}) that its limit is optimised by targeting an acceptance probability of $0.234$.

	\begin{definition}
		Expected Squared Jumping Distance (ESJD):
		\*[
		\ESJD:=	\EE_{X\sim\pi}\EE_{\hat{X}\,|\,X}\left[\|\hat{X}-X\|^2\left(1\wedge 	\frac{\pi(\hat{X})(R^2+\|\hat{X}\|^2)^d}{\pi(X)(R^2+\|X\|^2)^d}\right)\right].
		\]
	\end{definition}

	\begin{theorem}\label{thm_ESJD}
		Under the assumptions on the target in \cref{subsec_assumption}, suppose $f$ is not the standard Gaussian density, the SPS chain is in the stationary phase, and the radius parameter is chosen as $R=\sqrt{d}$ \BLUE{and re-parameterize $h$ by $\ell$ according to \cref{new_par_ell} with $\lambda=1$}.
		Then, as $d\to\infty$, we have
		$\ESJD\to 2\ell^2\cdot\Phi\left(-\frac{\ell}{2}\sqrt{\EE_f\left[\left((\log f)'\right)^2\right]-1}\right)$.
	\end{theorem}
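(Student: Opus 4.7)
The plan is to combine \cref{lemma_CLT} with a concentration estimate for the squared jump $\|\hat X-X\|^2$. Setting $R=\sqrt{d}$ makes $\lambda=1$, so $\tfrac{4\lambda}{(1+\lambda)^2}=1$ and the parameters in \cref{lemma_CLT} specialise to
\[
\mu_W = -\frac{\sigma_W^2}{2}, \qquad \sigma_W^2 = \ell^2(v-1), \qquad v := \EE_f\bigl[((\log f)')^2\bigr].
\]
By the Cauchy--Schwarz argument in \cref{remark_nu}, $v\ge 1$ with equality iff $f$ is standard Gaussian; the hypothesis therefore ensures $\sigma_W^2>0$ and the right-hand side of the theorem is well-defined.

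First I would decompose
\[
\ESJD = \EE_{X\sim\pi}\EE_{\hat X\mid X}\bigl[\|\hat X-X\|^2\,\alpha(X,\hat X)\bigr],
\]
where $\alpha$ is the MH ratio, and use \cref{lemma_CLT} to replace $\alpha$ by $1\wedge e^{W_{\hat X\mid X}}$ on a set $F_d$ with $\pi(F_d)\to 1$. Because the lemma is a conditional $L^1$ statement with error $o(d^{-1/4}\log d)$, the cost of the substitution inside the ESJD integral is bounded by $(o(1))\cdot\EE[\|\hat X-X\|^2\mid X]$, which is negligible provided a uniform-in-$X\in F_d$ bound on the conditional second moment of $\|\hat X-X\|^2$ holds. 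Outside $F_d$, the contribution is bounded via $\alpha\le 1$ and a crude moment estimate, vanishing since $\pi(F_d^c)\to 0$.

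Second I would analyse the jump size directly from the proposal mechanism. Writing $X=\SP(z)$, $\hat X=\SP(\hat z)$ with $\hat z=(z+dz)/\|z+dz\|$ and $dz$ the tangential part of $d\tilde z\sim\mathcal N(0,h^2 I_{d+1})$, a direct computation using $z\perp dz$ and $\|z\|=1$ gives the closed form $\|\hat z-z\|^2 = 2\bigl(1-(1+\|dz\|^2)^{-1/2}\bigr)$. The reparametrisation $(1+h^2(d-1))^{-1/2}=1-\ell^2/(2d)$ forces $h^2 d^2\to\ell^2$, so $\EE\|dz\|^2 = h^2 d = O(d^{-1})\to 0$ and thus $\|\hat z-z\|^2 = \|dz\|^2(1+o_{\Pr}(1))$. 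In stationarity, $\EE_\pi\|X\|^2=d=R^2$ combined with the identity $R^2+\|X\|^2=2R^2/(1-z_{d+1})$ forces $z_{d+1}\to 0$ in probability, whence the differential of $\SP$ gives $\|\hat X-X\|^2 = R^2\|\hat z-z\|^2(1+o_{\Pr}(1)) = h^2 d^2(1+o_{\Pr}(1))\to\ell^2$ in probability. The moment hypotheses $\EE_f[(f'/f)^8]<\infty$, together with concentration of $\|dz\|^2$ around its mean (Gaussian chaos), upgrade this to $L^2$ convergence, supplying the uniform integrability needed above.

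Third I would combine the two ingredients: since $\|\hat X-X\|^2$ concentrates at the deterministic constant $\ell^2$, the joint law of $(\|\hat X-X\|^2, 1\wedge e^{W_{\hat X\mid X}})$ converges to $(\ell^2, 1\wedge e^{W})$ with $W\sim\mathcal N(-\sigma_W^2/2,\sigma_W^2)$ independent of the first coordinate, giving
\[
\ESJD \longrightarrow \ell^2\cdot\EE\bigl[1\wedge e^W\bigr].
\]
Finally, the classical log-normal identity (exponential tilting: $\EE[e^W \mathbf 1_{W<0}]=\Pr(\widetilde W<0)$ with $\widetilde W\sim\mathcal N(\sigma_W^2/2,\sigma_W^2)$) yields $\EE[1\wedge e^W]=2\Phi(-\sigma_W/2)$. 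Substituting $\sigma_W=\ell\sqrt{v-1}$ gives the claimed limit.

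The principal difficulty is the joint control: $\|\hat X-X\|^2$ and $\alpha(X,\hat X)$ are functions of the same randomness $d\tilde z$, so the decoupling between the jump magnitude and the acceptance proxy must be justified carefully. The argument hinges on the fact that $\|\hat X-X\|^2$ depends on $d\tilde z$ only through $\|dz\|^2$, which concentrates at $h^2 d$, whereas $W_{\hat X\mid X}$ depends on the \emph{direction} of $dz$; isolating this structural independence and pairing it with the $L^2$ moment bound to override the $o(d^{-1/4}\log d)$ error in \cref{lemma_CLT} is where the technical weight of the proof lies.
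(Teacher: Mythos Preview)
Your approach is correct and takes a genuinely different, more economical route than the paper's. The paper does \emph{not} work with the full jump $\|\hat X-X\|^2$: it first uses the i.i.d.\ structure to reduce to a single coordinate, $\ESJD=d\,\EE_{X\sim\pi}\EE_{\hat X\mid X}\bigl[(\hat X_1-X_1)^2\,\alpha\bigr]$, and then must decouple the one--coordinate increment $(\hat X_1-X_1)^2$ (which has mean of order $d^{-1}$ and does \emph{not} concentrate) from the acceptance ratio. That decoupling is the heart of the paper's argument and is achieved through an elaborate three--stage coupling construction: replacing $\sum_j V_j^2$ by its mean $d$, orthogonalising $U_{2:d+1}$ against $U_1$, and finally severing the shared dependence on $U_{d+1}$, with each step producing an $L^2$ error that must be tracked (the rates are $\bigO(d^{-1})$, $\bigO(z_1^2)$, and $\bigO(d^{-1/4})$ respectively). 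You bypass all of this by keeping the full sum $\|\hat X-X\|^2$, which \emph{does} concentrate at the deterministic constant $\ell^2$; once $\|\hat X-X\|^2\to\ell^2$ in $L^1$ the product splits trivially via $|\alpha|\le 1$ and $\EE\bigl[(\|\hat X-X\|^2-\ell^2)\alpha\bigr]\to 0$, so no asymptotic--independence argument is needed at all. Your closing remark about the magnitude/direction decomposition of $dz$ is therefore more machinery than you actually require.

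What your route does \emph{not} deliver is the coordinate--wise asymptotic independence between $\hat X_1$ and the acceptance proxy; that finer statement is exactly what the paper extracts from its coupling construction and then reuses (with a modified algorithm) for the diffusion limit in \cref{thm_diffusion}. This explains why the paper takes the longer road here: the ESJD proof is written as a warm--up for, and shares its skeleton with, the diffusion--limit proof.
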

	\begin{proof}
		See \cref{proof-thm_ESJD}.
	\end{proof}
\begin{corollary}\label{coro_CLT}
The maximum ESJD is approximately $\frac{1.3}{\EE_f\left[\left((\log f)'\right)^2\right]-1}$, which is achieved when the acceptance rate is about $0.234$. The optimal 
	$\hat{\ell}\approx \frac{2.38}{\sqrt{\EE_f\left[\left((\log f)'\right)^2\right]-1}}$ and the optimal 
	$\hat{h}=\frac{1}{\sqrt{d-1}}\left[\left(1-\frac{\hat{\ell}^2}{2d}\right)^{-2}-1\right]^{1/2}\approx \frac{\hat{\ell}}{\sqrt{d(d-1)}} \approx \frac{2.38}{\sqrt{d(d-1)}}\frac{1}{\sqrt{\EE_f\left[\left((\log f)'\right)^2\right]-1}}$.
\end{corollary}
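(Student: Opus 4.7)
The plan is to reduce \cref{coro_CLT} to a single-variable optimisation that is already well known from the optimal scaling literature, combined with a Taylor expansion to convert the optimal $\ell$ into an optimal $h$. First I would invoke \cref{thm_ESJD} to write, in the limit $d \to \infty$,
\*[
\ESJD(\ell) \;\to\; 2\ell^2 \,\Phi\!\left(-\tfrac{\ell}{2}\sqrt{I-1}\right),
\qquad I := \EE_f\!\left[((\log f)')^2\right],
\]
where $I > 1$ by the discussion in \cref{remark_nu} (since $f$ is not standard Gaussian). The right-hand side is the one-dimensional objective whose maximiser we must identify.

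Next, I would reparameterise by $s := \tfrac{\ell}{2}\sqrt{I-1}$, so that $\ell = 2s/\sqrt{I-1}$ and the limiting ESJD becomes $\tfrac{8}{I-1}\, s^2 \Phi(-s)$. The constant $8/(I-1)$ pulls out, so the problem reduces to maximising $g(s) := s^2 \Phi(-s)$ on $(0,\infty)$. Differentiating gives the first-order condition $2\Phi(-s) = s\,\phi(s)$, which is precisely the equation arising in the classical RWM optimal scaling analysis of \citep{Roberts1997}. Its unique positive solution is $s^\star \approx 1.19$, and the resulting limiting acceptance probability is $2\Phi(-s^\star) \approx 0.234$, and $g(s^\star) \approx 0.162$, so the maximum limiting ESJD is
\*[
\frac{8\, g(s^\star)}{I-1} \;\approx\; \frac{1.3}{I-1}.
\]
Back-substituting yields $\hat\ell = 2s^\star/\sqrt{I-1} \approx 2.38/\sqrt{I-1}$.

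Finally, to convert $\hat\ell$ to $\hat h$, I would invert the relation $(1+h^2(d-1))^{-1/2} = 1 - \tfrac{\ell^2}{2d}$ to obtain exactly
\*[
\hat h \;=\; \frac{1}{\sqrt{d-1}}\sqrt{\left(1-\tfrac{\hat\ell^2}{2d}\right)^{-2} - 1},
\]
then use the expansion $(1 - \tfrac{\hat\ell^2}{2d})^{-2} = 1 + \tfrac{\hat\ell^2}{d} + O(d^{-2})$ (valid since $\hat\ell$ is dimension-free), which gives $\hat h \sim \hat\ell/\sqrt{d(d-1)} = O(d^{-1})$, and plugging in $\hat\ell \approx 2.38/\sqrt{I-1}$ yields the stated expression.

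There is essentially no substantive obstacle here beyond \cref{thm_ESJD}: the argument is a rescaling that maps the problem onto the classical RWM optimal scaling equation $2\Phi(-s) = s\phi(s)$, whose numerical solution and the associated $0.234$ acceptance rate are standard. The only minor point requiring care is justifying the Taylor expansion for $\hat h$, which is immediate because $\hat\ell$ depends only on $I$ and not on $d$, so $\hat\ell^2/(2d) \to 0$ and the error is of smaller order than the leading $d^{-1}$ behaviour.
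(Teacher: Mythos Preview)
Your proposal is correct and is exactly the argument the paper has in mind: the corollary is stated in the paper without proof, as an immediate consequence of \cref{thm_ESJD} together with the classical optimal scaling computation of \citep{Roberts1997}, and your rescaling $s=\tfrac{\ell}{2}\sqrt{I-1}$ is precisely the reduction that maps the SPS problem onto the standard RWM one. The numerical checks and the Taylor expansion for $\hat h$ are all correct.
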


	\begin{remark}\label{remark-ESJD}
		Compared with the maximum ESJD of RWM, the maximum ESJD of SPS is $\frac{\EE_f\left[\left((\log f)'\right)^2\right]}{\EE_f\left[\left((\log f)'\right)^2\right]-1}$ times larger. {For example, for the class of distributions defined in \cref{remark_nu} indexed by $\nu$, we can compute $C_{\nu}/(C_{\nu}-1)$. See \cref{table-nu} for examples of $C_{\nu}/(C_{\nu}-1)$ for different $\nu$. One can see that the maximum ESJD of SPS can be much larger than the maximum ESJD of RWM even for a medium size of $\nu$.}
	\end{remark}

	\subsection{Diffusion Limit}
	Continuing our analogy to the Euclidean RWM case, we shall provide a diffusion limit result, giving a more explicit description of SPS for high-dimensional situations. However, it is difficult to obtain a diffusion limit directly for SPS.
	Instead, we shall slightly change the original SPS algorithm in a way that is asymptotically negligible (as $d$ gets large) but which greatly facilitates our limiting diffusion approach. The revised algorithm is called RSPS.

		\begin{algorithm}\label{our_algo}
		\caption{Revised Stereographic Projection Sampler (RSPS)}
		\begin{itemize}
			\item Let the current state be $X^d(t)=x$;
			\item Compute the proposal $\hat{X}$:
			\begin{itemize}
				\item Let $z:=\SP^{-1}(x)$;
				\item Sample independently $\dee\tilde{z}', \dee\tilde{z}''\sim \mathcal{N}(0,h^2I_{d+1})$;
				\item Let $\dee z':= \dee \tilde{z}-\frac{(z^T\cdot\dee\tilde{z}')z}{\|z\|^2}$ and $\dee z'':= \dee \tilde{z}-\frac{(z^T\cdot\dee\tilde{z}'')z}{\|z\|^2}$;
				\item Let $\hat{z}':=\frac{z+\dee z'}{\|z+\dee z'\|}$ and $\hat{z}'':=\frac{z+\dee z''}{\|z+\dee z''\|}$;
				\item Two independent proposals $\hat{X}':=\SP(\hat{z}')$ and $\hat{X}'':=\SP(\hat{z}'')$;
				\item The proposal $\hat{X}:=(\hat{X}_1',\hat{X}_{2:d}'')$.
			\end{itemize}
			\item $X^d(t+1)=\hat{X}$ with probability $1\wedge\frac{\pi(\hat{X})(R^2+\|\hat{X}\|^2)^d}{\pi(x)(R^2+\|x\|^2)^d}$; otherwise $X^d(t+1)=x$.
		\end{itemize}
	\end{algorithm}

   \BLUE{Note that the difference between SPS and RSPS is that in RSPS two independent proposals $\hat{X}'$ and $\hat{X}''$ are first computed. Then the final proposal is composed by the first coordinate of $\hat{X}'$ and other coordinates of $\hat{X}''$, i.e., $\hat{X}:=(\hat{X}_1',\hat{X}_{2:d}'')$. This design guarantees $\hat{X}_1$ is independent with $\hat{X}_{2:d}$ conditional on the current state, which is the technical condition needed to prove the following result on a diffusion limit.}

\begin{theorem}\label{thm_diffusion}
	Under the assumptions on $\pi$ in \cref{subsec_assumption}, suppose $f$ is not the standard Gaussian density and the RSPS chain $\{X^d(t)\}$ starts from the stationarity, i.e.~$X^d(0)\sim \pi$, and the radius parameter is chosen as $R=\sqrt{d}$. 
	Writing $X^d(t)=(X^d_1(t),\dots,X^d_d(t))$, we let $U^d(t):=X^d_1(\lfloor dt\rfloor)$ be the sequence of the first coordinates of $\{X^d(t)\}$ sped-up by a factor of $d$. Then, as $d\to\infty$, we have $U^d\Rightarrow U$, where $\Rightarrow$ denotes weak convergence in Skorokhod topology, and $U$ satisfies the following Langevin SDE
	\*[
	\dee U(t)=(s(\ell))^{1/2}\dee B(t)+ s(\ell) \frac{f'(U(t))}{2f(U(t))}\dee t,
	\]
	where	$s(\ell):=2\ell^2\Phi\left(-\ell\frac{\sqrt{\EE_f\left[\left((\log f)'\right)^2\right]-1}}{2}\right)$
	is the speed measure for the diffusion process, and $\Phi(\cdot)$ being the standard Gaussian cumulative density function.
\end{theorem}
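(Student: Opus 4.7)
The plan is to use the generator-convergence strategy of \citep{Roberts1997}: establish that the discrete generator of the sped-up first coordinate $U^d$ converges, uniformly on compacts for a core of smooth test functions, to the generator of the limiting Langevin diffusion, and then appeal to a standard weak-convergence theorem for Markov processes (e.g.\ Theorem 4.8.2 of Ethier--Kurtz) to conclude $U^d \Rightarrow U$ in the Skorokhod topology. The reason for introducing the modified RSPS rather than working with SPS directly is that in RSPS the first-coordinate proposal $\hat X_1'$ is drawn independently of the proposal $\hat X_{2:d}''$ that feeds the remaining $d-1$ coordinates of the acceptance ratio; this decouples the $\bigO_\Pr(d^{-1/2})$ first-coordinate jump from the law-of-large-numbers averaging that drives the acceptance probability, which is essential for clean generator asymptotics.

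Fix $V\in C_c^\infty(\Reals)$ and write, for the sped-up process, the one-step discrete generator
\*[
A^d V(x_1,x_{2:d}) = d\,\EE_{\hat X\mid X}\!\left[(V(\hat X_1')-V(x_1))\left(1\wedge\frac{\pi(\hat X)(R^2+\|\hat X\|^2)^d}{\pi(X)(R^2+\|X\|^2)^d}\right)\right].
\]
Taylor-expanding $V(\hat X_1')-V(x_1)$ to second order and applying the stationary-phase part of \cref{lemma-lati} with $R=\sqrt{d}$ and $h$ of order $d^{-1}$, one obtains $\hat X_1'-x_1 = \bigO_\Pr(d^{-1/2})$ with conditional second moment $d\,\EE[(\hat X_1'-x_1)^2 \mid X]\to \ell^2$. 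A careful accounting of the Jacobian factor $(R^2+\|\hat X\|^2)^d/(R^2+\|X\|^2)^d$ shows that its dependence on the first coordinate is of order $d^{-1/2}$, which allows the acceptance ratio to be factored, to leading order, into a first-coordinate part $1 + (\hat X_1'-x_1)f'(x_1)/f(x_1) + \bigO_\Pr(d^{-1})$ times a second part depending only on $(X_{2:d},\hat X_{2:d}'')$.

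The second factor is exactly a stationary-phase acceptance ratio on a $(d-1)$-dimensional standard product target, to which \cref{lemma_CLT} (with $\lambda=1$) applies: its logarithm is close in distribution to a Gaussian random variable $W\sim\mathcal N(-\sigma^2/2,\sigma^2)$ with $\sigma^2 = \ell^2(\EE_f[((\log f)')^2]-1)$, on a sequence of sets of $\pi$-probability tending to one. Combining with the first-coordinate factor and using the classical identity $\EE[1\wedge e^W]=2\Phi(-\sigma/2)=s(\ell)/\ell^2$, the generator satisfies
\*[
A^d V(x_1,X_{2:d}) \longrightarrow \frac{1}{2}s(\ell)V''(x_1) + \frac{1}{2}s(\ell)\frac{f'(x_1)}{f(x_1)}V'(x_1)
\]
in probability under $X_{2:d}\sim\prod_{i=2}^d f$, uniformly on compacts in $x_1$, which is exactly the generator of the claimed Langevin SDE.

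The main obstacle will be making the decoupling and the Gaussian approximation uniform, with errors that remain integrable after multiplication by $d$ and by the first-coordinate increment: this is where the moment hypotheses in \cref{assump1} on $f'/f$, $f''/f$ and $x f'/f$, together with Lipschitz continuity of $f'/f$ and the uniform bound $\|\hat z-z\|\le 2$ on spherical displacements, are used. Tightness of $\{U^d\}$ in the Skorokhod topology follows from an Aldous-type criterion applied via a uniform bound on $\|A^d V\|_\infty$ for $V\in C_c^2(\Reals)$, and uniqueness of the limiting Langevin SDE (guaranteed by Lipschitz $f'/f$) combined with generator convergence then yields $U^d\Rightarrow U$ by standard Markov-process weak-convergence arguments.
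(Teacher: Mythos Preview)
Your overall strategy---generator convergence on $C_c^\infty$ test functions, typical sets $F_d$, Gaussian approximation of the bulk of the log-acceptance ratio via \cref{lemma_CLT}, and the RSPS decoupling of $\hat X_1'$ from $\hat X_{2:d}''$---matches the paper's and is the right framework. However, the key analytic step is stated in a way that would produce the wrong limiting drift.

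The problem is the claim that the acceptance probability ``factors, to leading order, into a first-coordinate part $1+(\hat X_1'-x_1)f'(x_1)/f(x_1)+\bigO_\Pr(d^{-1})$ times a second part''. The map $a\mapsto 1\wedge e^{a+W}$ is not multiplicative in $a$: for $W\sim\mathcal N(-\sigma^2/2,\sigma^2)$ one has $\tfrac{\partial}{\partial a}\EE[1\wedge e^{a+W}]\big|_{a=0}=\EE[e^W\Ind_{W<0}]=\Phi(-\sigma/2)$, which is exactly \emph{half} of $\EE[1\wedge e^W]=2\Phi(-\sigma/2)$. Your multiplicative expansion implicitly uses the latter and therefore doubles the drift contribution. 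The paper avoids this by keeping the first-coordinate log-terms \emph{inside} $1\wedge\exp(\cdot)$, integrating out $W$ via the explicit Gaussian formula to obtain a smooth function $M(\hat X_1)$, and then Taylor-expanding $M$; the derivative $M'(x_1)\to\Phi(-\sigma/2)\bigl[(\log f)'(x_1)+x_1\bigr]$ is where the correct $\tfrac12$-factor appears.

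Relatedly, your first-coordinate part omits the Jacobian's contribution. You correctly observe that $(R^2+\|\hat X\|^2)^d/(R^2+\|X\|^2)^d$ depends on $\hat X_1$ at order $d^{-1/2}$, but that is the \emph{same} order as $(\hat X_1'-x_1)f'/f$, so it cannot be discarded into the $\bigO_\Pr(d^{-1})$ remainder. In the paper this term is written as $r(\hat X_1)$ with $r'(x_1)=x_1$, and it is essential: the $+x_1$ in $M'(x_1)$ exactly cancels the $-\tfrac{\ell^2}{2}x_1$ arising from the SPS proposal's nonzero conditional mean $d\,\EE[\hat X_1'-x_1]\to-\tfrac{\ell^2}{2}x_1$ (a feature absent in Euclidean RWM). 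Without this cancellation your computation would yield a drift proportional to $-\tfrac{x_1}{2}+(\log f)'(x_1)$ rather than $\tfrac12(\log f)'(x_1)$, i.e.\ a limiting diffusion with the wrong invariant density.
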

\begin{proof}
	See \cref{proof-thm_diffusion}.
\end{proof}
\begin{corollary}
	The optimal acceptance rate for RSPS is $0.234$ and the maximum speed of the diffusion limit is
	$s(\hat{\ell})\approx \frac{1.3}{\EE_f\left[\left((\log f)'\right)^2\right]-1}$
	where
	$\hat{\ell}\approx \frac{2.38}{\sqrt{\EE_f\left[\left((\log f)'\right)^2\right]-1}}$. The optimal 
	$\hat{h}=\frac{1}{\sqrt{d-1}}\left[\left(1-\frac{\hat{\ell}^2}{2d}\right)^{-2}-1\right]^{1/2}\approx \frac{\hat{\ell}}{\sqrt{d(d-1)}} \approx \frac{2.38}{\sqrt{d(d-1)}}\frac{1}{\sqrt{\EE_f\left[\left((\log f)'\right)^2\right]-1}}$.
\end{corollary}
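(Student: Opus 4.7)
The plan is essentially calculus on the explicit speed function $s(\ell) = 2\ell^2 \Phi(-\ell\sqrt{I}/2)$ supplied by the diffusion limit theorem above, where I abbreviate $I := \EE_f[((\log f)')^2] - 1$. Because $f$ is assumed not to be the standard Gaussian density, the Cauchy--Schwarz observation in \cref{remark_nu} gives $I > 0$, so the substitution $u := \ell\sqrt{I}/2$ is a smooth bijection of $(0,\infty)$ with itself and recasts the speed in the normalized form $s(\ell) = (8/I)\, u^2 \Phi(-u)$.

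First I would perform this change of variable to reduce the problem to maximizing the one-parameter function $g(u) := u^2\Phi(-u)$. The first-order condition $g'(u)=0$ gives the fixed-point equation $2\Phi(-u) = u\phi(u)$, where $\phi$ is the standard Gaussian density; this is precisely the equation that appears in the classical optimal scaling analysis of Euclidean RWM \citep{Roberts1997}, and its unique positive root is $u^\ast \approx 1.19$, with $g(u^\ast) \approx 0.163$. Translating back, $\hat\ell = 2u^\ast/\sqrt{I} \approx 2.38/\sqrt{I}$ and $s(\hat\ell) = (8/I)g(u^\ast) \approx 1.3/I$, matching the claimed expressions.

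Next I would identify the optimal acceptance rate by tracking $\sigma/2$ at the optimum. In the present section $R=\sqrt{d}$ corresponds to $\lambda=1$ in \cref{lemma_CLT}, so $4\lambda/(1+\lambda)^2 = 1$, whence $\sigma^2 = \ell^2 I$ and $\sigma/2 = u$. Consequently the limiting expected acceptance probability $2\Phi(-\sigma/2)$ evaluates at the optimum to $2\Phi(-u^\ast) \approx 0.234$. The only thing to check carefully here is that the Gaussian CLT of \cref{lemma_CLT} carries over from SPS to RSPS; this is immediate because the two independent proposals in RSPS decouple and each component contributes exactly as in SPS up to the $o(d^{-1/4}\log d)$ error already controlled in that lemma.

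Finally, to recover $\hat h$, I would invert the reparameterization $(1+h^2(d-1))^{-1/2} = 1-\ell^2/(2d)$ to obtain $\hat h = (d-1)^{-1/2}\sqrt{(1-\hat\ell^2/(2d))^{-2}-1}$, and then Taylor-expand $(1-\hat\ell^2/(2d))^{-2} = 1 + \hat\ell^2/d + \bigO(d^{-2})$ to get $\hat h \sim \hat\ell/\sqrt{d(d-1)} = \bigO(d^{-1})$. There is no genuine obstacle in the argument: the corollary is a direct consequence of the preceding diffusion limit combined with a one-dimensional numerical optimization. The only external input is the value $u^\ast\approx 1.19$ (equivalently the $0.234$ constant familiar from the RWM literature), and the only point requiring a brief justification is the transfer of the acceptance-probability formula from SPS to RSPS noted above.
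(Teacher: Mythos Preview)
Your proposal is correct and follows exactly the approach the paper has in mind: the corollary is treated as an immediate consequence of \cref{thm_diffusion} via the classical one-variable optimization of $u\mapsto u^2\Phi(-u)$ from \citep{Roberts1997}, and your substitution $u=\ell\sqrt{I}/2$, the identification $\sigma/2=u$, and the inversion/Taylor expansion for $\hat h$ are all exactly right. The one step you flag---transferring the acceptance-probability formula $2\Phi(-\sigma/2)$ from SPS to RSPS---is in fact already implicit in the speed expression of \cref{thm_diffusion} (the factor $2\Phi(-\cdot)$ there \emph{is} the limiting acceptance probability, as its proof makes explicit), so no separate verification is needed.
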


	\begin{remark}
		Compared with the maximum speed of the diffusion limit of RWM \citep{Roberts1997}, the maximum speed of the diffusion limit of RSPS is $\frac{\EE_f\left[\left((\log f)'\right)^2\right]}{\EE_f\left[\left((\log f)'\right)^2\right]-1}$ times larger.
	\end{remark}
	\begin{remark}
		A reasonable conjecture is that the same diffusion limit holds for the original SPS algorithm. In order to establish the same diffusion limit, if we follow the same arguments as in the proof of \cref{thm_ESJD}, it is required to show the acceptance rate term becomes ``asymptotically independent'' with the first coordinate as a rate of $\bigO(d^{-1/2})$. However, our current technical arguments in \cref{thm_ESJD} can achieve a rate of $\bigO(d^{-1/8})$ which is not enough for establishing the weak convergence to a diffusion limit. Therefore, we only prove the diffusion limit for RSPS in this paper and leave the proof for SPS as an open problem.
	\end{remark}

\section{Simulations}\label{section_simu}

In this section, we study the proposed SPS and SBPS through numerical examples. In most of the examples, we consider $d=100$ dimensions and two choices of target distributions, the heavy-tailed multivariate student's $t$ target with $d$ degree of freedom and the standard Gaussian target. By default, we choose $R=\sqrt{d}$ for SPS and SBPS.
We refer to \cref{sec_more_simu} for additional simulations such as different choices of $R$.

\BLUE{
\subsection{SPS: Bayesian Cauchy regression}\label{subsec_example_Cauchy}
Here we return to \cref{example_Cauchy}. In \cref{fig-Cauchy-Regression}, we choose $a=b=0.1$, $d=11$, $n=15$, and $R=\sqrt{d}$. Random design $X_i\sim \mathcal{N}(0,I)\in \Reals^d$ and responses $\{Y_i\}_{i=1}^{n}$ are generated by $Y_i=\alpha_0+\beta_0^TX_i+\epsilon_i$ where $\alpha_{0}=-2$, $\beta_{0}=(-4,-3,-2,-1,0,1,2,3,4)^T$, $\{\epsilon_i\}$ are i.i.d.~zero-mean Cauchy distribution with scale $\gamma_0=1$. 
We take a logarithm transformation for $\gamma$ and implement both SPS and RWM in $\Reals^{d}$. RWM starts from $(100,100,\dots,100)$ and SPS starts from the North Pole. From the figure, one can see clearly that: (1) RWM which is not geometric ergodic completely failed; (2) the proposed SPS which is uniformly ergodic converged extremely fast.
}

\subsection{SPS: ESJD per dimension}
In this example, we study the ESJD for SPS and its robustness to the choice of the radius $R$. We first tune the proposal stepsize $h$ of SPS to get different acceptance rates. Then we plot ESJD per dimension for varying acceptance rates as the efficiency curve of SPS. \cref{fig-efficiency-curve} shows eight efficiency curves for different choices of $R$. The target distribution is multivariate student's $t$ distribution with $d=100$ degrees of freedom. In this setting, $R=\sqrt{d}$ is optimal. The four subplots in the first column are for $R<\sqrt{d}$ and the second column contains four cases of $R>\sqrt{d}$.

Although we do not plot the ESJD for RWM, the maximum ESJD per dimension for RWM is known to have an order of $\bigO(d^{-1})$. In all cases in \cref{fig-efficiency-curve}, SPS has a much larger ESJD than RWM. For the two subplots in the first row of \cref{fig-efficiency-curve}, since $R$ is closer to $\sqrt{d}$, the acceptance rate cannot be lower than $0.5$ whatever the proposal variance is. For all the other efficiency curves, an interesting observation is that the maximum ESJDs are always achieved when the acceptance rate is around $0.234$. This suggests the optimal acceptance rate $0.234$ is quite robust and not limited to product i.i.d.~targets, which is similar to the case of optimal acceptance rate $0.234$ for RWM \citep{Yang2020optimal}.

		\begin{figure}[h]
		\centering
		\includegraphics[width=0.95\textwidth]{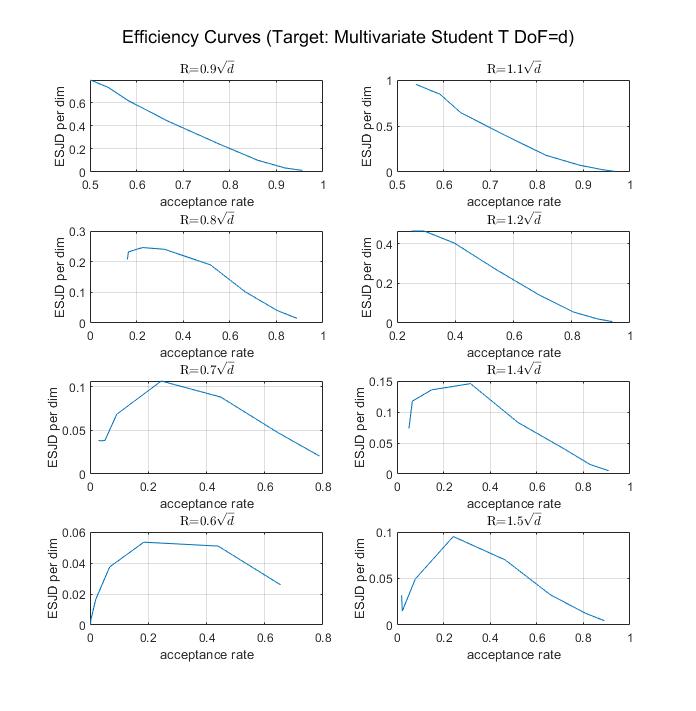}
		\caption{Efficiency Curves (ESJD per dimension) as functions of the acceptance rate of SPS for different choices of $R$. Target distribution is multivariate student's $t$ distribution with DoF$=d=100$. When $R$ is close to $\sqrt{d}$ (such as $R=0.9\sqrt{d}$ and $R=1.1\sqrt{d}$), the acceptance rate is always larger than $0.234$.
		%\RED{(matlab file: RWM\_Example\_ESJD\_Curve.m)}
		}
		\label{fig-efficiency-curve}
	\end{figure}

\subsection{SBPS: ESS per Switch}
In this example, we study the efficiency curves of SBPS and BPS in terms of ESS per Switch versus the refresh rate. The first subplot of \cref{fig-pdmp-ESSpS} contains the proportion of refreshments in all the $N$ events for varying refresh rates. It is clear that as the refresh rate increases, the proportion of refreshments increases. In the other three subplots of \cref{fig-pdmp-ESSpS}, we plot the logarithm of ESS per Switch as a function of the refresh rate for three cases, the $1$-st coordinate, the negative log-density, and the squared $1$-st coordinate, respectively. For each efficiency curve, $N=1000$ events are simulated. We use random initial states for our SBPS. For comparison, BPS starts from stationarity to avoid slow mixing. As SBPS and BPS are continuous-time processes, each unit time period is discretized into $5$ samples. The target is standard Gaussian with $d=100$.  We refer to \cref{sec_more_simu} for additional efficiency curves for heavy-tailed targets.

According to \cref{fig-pdmp-ESSpS}, the ESS per Switch of SBPS is much larger than the ESS per Switch of BPS for all cases (actually the gap becomes larger in higher dimensions). For both the $1$-st coordinate and the negative log-target density, the ESS per Switch of SBPS can be larger than $1$ if the refresh rate is relatively low. For BPS, however, even starting from stationarity, the ESS per Switch is always much smaller than $1$.

	\begin{figure}[h]
		\centering
		\includegraphics[width=0.9\textwidth]{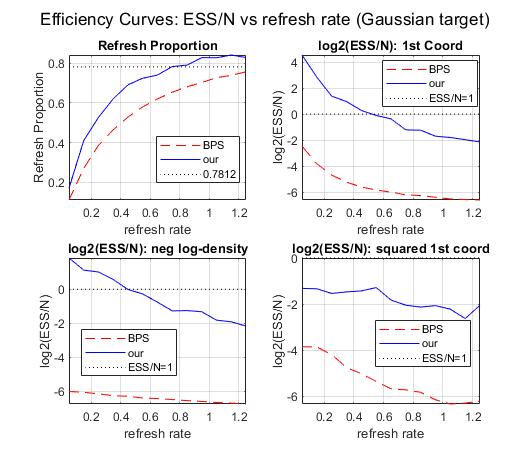}
		\caption{Efficiency curves (ESS per Switch) for SBPS and BPS for varying refresh rate.  $N=1000$ events are simulated, Random initial values for SBPS and BPS starting from stationarity. Each unit time is discretized to $5$ samples. Target distribution:  standard Gaussian with $d=100$. The first subplot is the proportion of refreshment events in all $N$ events. The other three subplots are ESS for the $1$-st coordinate, the negative log density, and the squared $1$-st coordinate. 
		%\RED{(matlab file: PDMP\_Example\_ESSpS\_Refresh.m)}
		}
		\label{fig-pdmp-ESSpS}
	\end{figure}

\section{Discussion}\label{section_discuss}

We have explored the use of stereographically projected algorithms and developed SPS and SBPS that are uniformly ergodic for a large class of light and heavy-tailed targets and also exhibit fast convergence in high-dimensions. The framework established opens new opportunities for developing MCMC algorithms for sampling high-dimensional and/or heavy-tailed distributions. We finish the paper with some future directions.

\begin{itemize}
    \item \emph{Adaptive MCMC}: the proposed GSPS algorithm fits the adaptive MCMC framework very well \citep{roberts2009examples}. The tuning parameters of GSPS such as the covariance matrix, the location as well as the radius of the sphere can be tuned adaptively. Empirical study of the sensitivity of GSPS to the tuning parameters and extending GSPS for sampling multimodal distributions \citep{pompe2020framework} are important future directions. 
    \item \emph{Quantitative bounds}: we know SPS is uniform ergodicity for a large class of targets and dimension-free for isotropic targets. However, we do not establish quantitative bounds for the mixing time. The quantitative bound and its dimension dependence (e.g., \citep{Yang2017}) would be an interesting direction for future work.
    \item \emph{Scaling limit for SBPS}: we know SBPS is dimension-free for isotropic targets. For product i.i.d.~targets, we establish optimal scaling results for SPS but not for SBPS. Recently, scaling limits for the traditional BPS have been developed in different settings \citep{bierkens2018high,deligiannidis2021randomized}. It would be interesting to obtain such scaling limits for SBPS for non-isotropic targets to compare with BPS directly.
    \item \emph{Stereographic MALA, HMC, and others}: another future direction is to develop new MCMC samplers based on other popular MCMC algorithms, such as (Riemann) MALA and Hamiltonian Monte Carlo (HMC) \citep{girolami2011riemann} and other PDMPs, or based on other mappings from $\Reals^d$ to $\mathbb{S}^d$ than the (generalized) stereographic projection.
\end{itemize}

\section*{Acknowledgement}
JY was supported by Florence Nightingale Fellowship, Lockey Fund, and St Peter’s College Research Fund (O'Connor Fund) from University of Oxford. K{\L} was supported by a Royal Society University Research Fellowship. GOR was supported by EPSRC grants Bayes for Health (R018561) CoSInES (R034710) PINCODE (EP/X028119/1), EP/V009478/1 and by the UKRI grant, OCEAN, EP/Y014650/1.

%\begin{appendix}
%\section*{}
%\end{appendix}

%\section*{Acknowledgements}
 %\stitle{} 
%\stitle{Supplement to ``\TITLE{}''}
%\sdescription{
%\section*{Supplementary Material}

%clearpage
%\bibliographystyle{plainnat}
%\bibliography{mcmc2.bib}

%\clearpage
%\bibliographystyles{plainnat}

%\bibliographys{mcmc2.bib}

\clearpage

\renewcommand{\thelemma}{S\arabic{lemma}}
\renewcommand{\thetheorem}{S\arabic{theorem}}
\renewcommand{\thesection}{S\arabic{section}}

\begin{center}
\Large{Supplement to ``\TITLE{}''} \\ \medskip
\normalsize{
Jun Yang, Krzysztof Łatuszyński, and Gareth O.\ Roberts
}
\end{center}

%\RED{Note: use ``citeps'' or ``citets'' to get additional reference in the supplement, e.g.\ \citeps{Roberts1997} or \citets{Roberts1997}}

\section{Proofs of Main Results}\label{section_proof}

\subsection{Proof of \cref{prop-SPSergodic}}\label{proof-prop-SPSergodic}
\begin{proof}
If $\pi(x)>0, \forall x\in \mathbb{R}^d$ and $\pi$ being continuous, then $\pi_S(z)$ is continuous and $\pi_S(z)>0$ for all $z\in \mathbb{S}^d$ except the North Pole. According to \citeps[Theorem 13.0.1]{Meyn2012}, it suffices to prove the existence of a stationary distribution, aperiodicity, $\pi$-irreducibility, and positive Harris recurrence.

The existence of a stationary distribution comes from the detailed balance of Metropolis--Hastings chain. SPS is aperiodic by \citeps[Theorem 3(i)]{roberts1994simple} using the facts that the acceptance probability does not equal to zero and the proposal distribution as a Markov kernel is aperiodic. Furthermore, $\pi$-irreducibility comes from \citeps[Theorem 3(ii)]{roberts1994simple} using the fact that the proposal distribution as a Markov kernel is $\pi$-irreducible. Since $\pi$ is finite, the chain is positive recurrent \cites[pp.1712]{tierney1994markov}. Harris recurrence comes directly from \cites[Corollary 2]{tierney1994markov}.

%\RED{[JY: updated to $\epsilon_A$ and $Q_A$, no uniform control on all $A$] Next, for any measurable set $A$ with $\pi(A)>0$, following similar arguments as the proof in  \cref{proof-thm-uniform-ergodic}, we can prove the $3$-step transition kernel satisfies 	$P^3(z,A)\ge \epsilon_A Q_A(A)$ for all $z\in \mathbb{S}^d$ where $\epsilon_A>0$ and $Q_A(\cdot)$ is a probability measure on $\mathbb{S}^d$ depending on $A$.}
%Next, in \cref{proof-thm-uniform-ergodic}, we prove a $3$-step minorization condition for the transitional kernel. That is, for $z\in \mathbb{S}^d$, we have	$P^3(z,A)\ge \epsilon Q(A)$ for all Borel measurable set $A$, where $Q(\cdot)$ is a fixed probability measure on $\mathcal{S}^d$ and $\epsilon>0$. 
%	Then, for any measurable set $A$ with $\pi(A)>0$, for every three steps there is at least a non-zero constant probability to hit $A$. This implies any such set $A$ is Harris recurrent so the $\pi$-irreducible chain is Harris recurrent \citeps[ch.9, pp.199]{Meyn2012}. Finally, since for every three steps there is at least a non-zero constant probability to hit $A$ from any point on $\mathbb{S}^d$, the expected number of steps for returning to $A$ is finite (which equals to the expectation of a geometric random variable). 
%	\RED{[KL: is it lower bounded by a geometric random variable? it looks like we are using in the last sentence the assumption of the next thm that density on the sphere is not going to $\infty$ ?]} 
%	Therefore, by \citeps[Theorem 10.4.10(ii)]{Meyn2012}, the chain is positive recurrent.
 \end{proof}
 
\subsection{Proof of \cref{thm-uniform-ergodic}}\label{proof-thm-uniform-ergodic}
\begin{proof}
	If there exists $x$ such that $\sup_{x\in \Reals^d} \pi(x)(R^2+\|x\|^2)^d=\infty$, then the chain is not even geometric ergodic by \citeps[Proposition 5.1]{Roberts1996}. This implies that the condition of \cref{cond_uniform_ergodic} is necessary. Therefore, it suffices to prove  \cref{cond_uniform_ergodic} is also a sufficient condition.
	
	We prove uniform ergodicity by showing the whole $\mathbb{S}^d$ is a small set. More precisely, we will show the $3$-step transitional kernel satisfies the following minorization condition:
	\*[
	P^3(z,A)\ge \epsilon Q(A)
	\]
	for all measurable set $A$, where $Q(\cdot)$ is a fixed probability measure on $\mathbb{S}^d$ and $\epsilon>0$. \BLUE{The rationale behind establishing a $3$-step minorization condition is that this facilitates 
 %the design of the proposal in SPS 
 the SPS chain being able to reach any point on the sphere, 
 %via $2$ stops in-between
  and there exists enough probability mass around ``$2$-stop'' paths to ensure that the chain can avoid stopping too close to the North Pole, the potentially problematic region.}
 
	We use the following notations:
	\begin{enumerate}
		\item We {use $\textrm{AC}(\epsilon):=\{z\in\mathbb{S}^d: z_{d+1}\ge 1-\epsilon\}$ to denote the ``Arctic Circle'' with ``size'' $\epsilon\ge 0$}. 
		\item We {use $\textrm{HS}(z,0):=\{z'\in\mathbb{S}^d: z^Tz'\ge 0\}$ to denote the ``hemisphere'' from $z$. and $\textrm{HS}(z,\epsilon'):=\{z'\in\mathbb{S}^d: z^Tz'\ge \epsilon'\}$ to denote the area left after ``cutting'' the hemisphere's boundary of ``size'' $\epsilon'\ge 0$}. 
	\end{enumerate}

	Now consider the $3$-step path $z\to z_1\to z_2\to z'\in A$ where $z\in\mathbb{S}^d$ is the starting point, $z_1$ and $z_2$ are two intermediate points, and $z'\in A$ is final point. 
		Denoting $q(z,\cdot)$ as the proposal density, we have
	\*[
	p(z,z')\ge q(z,z')\left(1\wedge \frac{\pi_S(z')}{\pi_S(z)}\right).
	\]
	Then the  $3$-step transition kernel can be bounded below by
\*[
	P^3(z,A)
	&\ge\int_{z'\in A}\int_{z_2\in\mathbb{S}^d}\int_{z_1\in\mathbb{S}^d} q(z,z_1)q(z_1,z_2)q(z_2,z')\\
	&\qquad \cdot\left(1\wedge\frac{\pi_S(z_1)}{\pi_S(z)}\right)\left(1\wedge\frac{\pi_S(z_2)}{\pi_S(z_1)}\right)\left(1\wedge\frac{\pi_S(z')}{\pi_S(z_2)}\right)\nu(\dee z_1)\nu(\dee z_2)\nu(\dee z') 
\]
	where $\nu(\cdot)$ denotes the Lebesgue measure on $\mathbb{S}^d$. 

	By the conditions of \cref{thm-uniform-ergodic}, $\pi(x)>0$ is continuous in $\Reals^d$, which implies $\pi_S(z)$ is positive and continuous in any compact subset of $\mathbb{S}^d\setminus N$ (recall that $N$ denotes the North Pole). Then using the fact that a continuous function on a compact set is bounded, if we rule out $\textrm{AC}(\epsilon)$, then $\pi_S(z)$ is bounded away from $0$. That is,
	\*[
	\inf_{z\notin \textrm{AC}(\epsilon)}\pi_S(z)\ge \delta_{\epsilon}>0, \forall \epsilon\in (0,1),
	\]
	where $\delta_{\epsilon}\to 0$ as $\epsilon\to 0$.
	
	Furthermore, since the surface area of $\mathbb{S}^d$ is finite and $\pi_S(z)\propto \pi(x)(R^2+\|x\|^2)^d$ where $x=\SP(z)$, we know
	$\sup_{x\in \Reals^d}\pi(x)(R^2+\|x\|^2)^d<\infty$ implies that $M:=\sup \pi_S(z)<\infty$. Therefore, we have
\*[
	P^3(z,A)
	&\ge\int_{z'\in A}\int_{z_2\in\mathbb{S}^d\setminus \textrm{AC}(\epsilon)}\int_{z_1\in\mathbb{S}^d\setminus\textrm{AC}(\epsilon)} q(z,z_1)q(z_1,z_2)q(z_2,z')\\
	&\qquad \cdot\left(1\wedge\frac{\delta_{\epsilon}}{M}\right)^2\left(1\wedge\frac{\pi_S(z')}{M}\right)\nu(\dee z_1)\nu(\dee z_2)\nu(\dee z').
\]
	Next, we consider the term 
	$q(z,z_1)q(z_1,z_2)q(z_2,z')$.
	For our algorithm, the proposal can cover the whole hemisphere except the boundary, by ``cutting'' the boundary a little bit, the proposal density will be bounded below. That is, we have
	\*[
	q(z,z')\ge \delta'_{\epsilon'}>0, \quad \forall z'\in \textrm{HS}(z,\epsilon'), \epsilon'\in (0,1),
	\]
	where $\delta'_{\epsilon'}\to 0$ as $\epsilon'\to 0$.
	
	Therefore, if 
	$z_1\in \textrm{HS}(z,\epsilon')$ and $z_2\in \textrm{HS}(z',\epsilon')$
	then 
	$q(z,z_1)\ge \delta'_{\epsilon'}$ and $q(z_2,z')\ge \delta'_{\epsilon'}$.
	Then we have
\*[
	&\int_{z_2\in\textrm{HS}(z',\epsilon')\setminus \textrm{AC}(\epsilon)}\int_{z_1\in\textrm{HS}(z,\epsilon')\setminus\textrm{AC}(\epsilon)} q(z,z_1)q(z_1,z_2)q(z_2,z')\nu(\dee z_1)\nu(\dee z_2)\\
	&\ge (\delta'_{\epsilon'})^3\int_{z_2\in\textrm{HS}(z',\epsilon')\setminus \textrm{AC}(\epsilon)}\int_{z_1\in\textrm{HS}(z,\epsilon')\setminus\textrm{AC}(\epsilon)}\Ind_{z_1\in \textrm{HS}(z_2,\epsilon')}\nu(\dee z_1)\nu(\dee z_2).
\]

	Note that $3$ steps are enough to reach any point $z'$ from any $z$ on the sphere. It is clear that under the Lebesgue measure $\nu$ on $\mathbb{S}^d$, we can define the following positive constant $C$:
	\*[
	C:=\inf_{z,z'\in \mathbb{S}^d}\int_{z_2\in\textrm{HS}(z',0)\setminus \textrm{AC}(0)}\int_{z_1\in\textrm{HS}(z,0)\setminus\textrm{AC}(0)}\Ind_{z_1\in \textrm{HS}(z_2,0)}\nu(\dee z_1)\nu(\dee z_2)>0.
	\]
	Now consider the following function of $\epsilon\ge 0$ and $\epsilon'\ge 0$
	\*[
	C_{\epsilon,\epsilon'}:=\inf_{z,z'\in\mathbb{S}^d}\int_{z_2\in\textrm{HS}(z',\epsilon')\setminus \textrm{AC}(\epsilon)}\int_{z_1\in\textrm{HS}(z,\epsilon')\setminus\textrm{AC}(\epsilon)}\Ind_{z_1\in \textrm{HS}(z_2,\epsilon')}\nu(\dee z_1)\nu(\dee z_2).
	\]
    It is clear that $C\ge 	C_{\epsilon,\epsilon'}$. The difference between them can be bounded by
\*[
    C-	C_{\epsilon,\epsilon'}\le &\sup_{z\in \mathbb{S}^d} \nu(\{z_1: z_1\in (\textrm{HS}(z,0)\setminus\textrm{HS}(z,\epsilon'))\cap (\textrm{AC}(\epsilon)\setminus\textrm{AC}(0))\})\\
    &+\sup_{z'\in \mathbb{S}^d}\nu(\{z_2: z_2\in (\textrm{HS}(z',0)\setminus\textrm{HS}(z',\epsilon'))\cap (\textrm{AC}(\epsilon)\setminus\textrm{AC}(0))\})\\
    &+\sup_{z_2\in \mathbb{S}^d}\nu(\{z_1: z_1\in \textrm{HS}(z_2,0)\setminus\textrm{HS}(z_2,\epsilon')\}).
\]
	Then clearly the upper bound is continuous w.r.t.~both $\epsilon$ and $\epsilon'$ and goes to zero when $\epsilon\to 0$ and $\epsilon'\to 0$. Therefore, there exists $\epsilon>0$ and $\epsilon'>0$ such that $C-C_{\epsilon,\epsilon'}\le \frac{1}{2} C$, that is
	\*[
	C_{\epsilon,\epsilon'}\ge\frac{1}{2}C>0.
	\]
	Using such $\epsilon$ and $\epsilon'$ we have
	\*[
	&\int_{z_2\in\textrm{HS}(z',\epsilon')\setminus \textrm{AC}(\epsilon)}\int_{z_1\in\textrm{HS}(z,\epsilon')\setminus\textrm{AC}(\epsilon)} q(z,z_1)q(z_1,z_2)q(z_2,z')\nu(\dee z_1)\nu(\dee z_2)\ge \frac{1}{2}C(\delta'_{\epsilon'})^3>0.
	\]
	Then we have
	\*[
	P^3(z,A)&\ge \left(1\wedge \frac{\delta_{\epsilon}}{M}\right)^2\frac{1}{2}C(\delta'_{\epsilon'})^3\int_{z'\in A} \left(1\wedge \frac{\pi_S(z')}{M}\right)\nu(\dee z')
	\]
	Note that $\pi_S(z')>0$ almost surely w.r.t.~$\nu$.
	Denoting
	\*[
	g(z'):=\left(1\wedge \frac{\delta_{\epsilon}}{M}\right)^2\frac{1}{2}C(\delta'_{\epsilon'})^3\left(1\wedge \frac{\pi_S(z')}{M}\right),
	\]
	we have established
	\*[
	P^3(z,A)\ge \int_{z'\in A}g(z')\nu(\dee z'),
	\]
	where $g(z)>0$ almost surely w.r.t.~$\nu$. Then we can define	
	\*[
	\int_A g(z')\nu(\dee z')=\int_{\mathbb{S}^d} g(z')\nu(\dee z')\frac{\int_{A} g(z')\nu(\dee z')}{\int_{\mathbb{S}^d} g(z')\nu(\dee z')}=:\epsilon Q(A),
	\]
	where $\epsilon=\int_{\mathbb{S}^d}g(z')\nu(\dee z')>0$ and $Q(A):=\frac{\int_{A} g(z')\nu(\dee z')}{\int_{\mathbb{S}^d} g(z')\nu(\dee z')}$ is a probability measure. This established the desired minorization condition and completed the proof.
\end{proof}

\subsection{Sketch proof of \cref{prop-SBPSergodic}}\label{proof-prop-SBPSergodic}
\begin{proof}
In this proof, for two vectors $a$ and $b$, we use $a\cdot b$ to denote the inner product.
We denote the generator of our PDMP and its adjoint (see \citeps[Section 2.2]{Fearnhead2018piecewise} or \citeps{liggett2010continuous}) as $\mathcal{A}$ and $\mathcal{A}^*$, respectively.
%The generator of a continuous-time time-homogeneous Markov process is an operator which acts on functions of the state. We will denote the generator of our PDMP by $\mathcal{A}$. For suitable functions $f(z,v)$ from the domain of the generator, the generator is defined by \RED{(don't need it, cite the book of [BOOK] Continuous-time Markov processes: an introduction TM Liggett )}
%\*[
%\mathcal{A} f(z,v)=\lim_{\delta_t\to 0}\frac{\EE[f(z(t+\delta_t),v(t+\delta_t))\mid z(t)=z, v(t)=v]-f(z,v)}{\delta_t}.
%\]
%The adjoint $\mathcal{A}^*$ \RED{dont need it} of the generator of our PDMP is defined such that for suitable functions $g(z,v)$ and $f(z,v)$
%\*[
%\int g(z,v)\mathcal{A} f(z,v) \dee (z,v) = \int f(z,v)\mathcal{A}^* g(z,v)\dee (z,v).
%\]
Then, we can divide the generator $\mathcal{A}$ into two parts
\*[
\mathcal{A}=\mathcal{A}_D + \mathcal{A}_J,
\]
where the first term $\mathcal{A}_D$ is the generator of the deterministic dynamics (moving along the greatest circle) and the second term $\mathcal{A}_J$ relates the jumping process (that is, the change in expectation at bounce or refreshment events).

Then it can be easily derived that, for suitable functions $f(z,v)$ and $g(z,v)$, where $v\perp z$ and $\|v\|=\|z\|=1$, the generator $\mathcal{A}_D$ and its adjoint $\mathcal{A}_D^*$ of the deterministic process satisfy
\*[
\mathcal{A}_D f&= \nabla_z f \cdot v - \nabla_v f\cdot z,\\
\mathcal{A}_D^* g &= \nabla_v g\cdot z - \nabla_z g \cdot v.
\]
		 Denoting $|\mathbb{S}^{d-1}|$ as the volume of the $\mathbb{S}^{d-1}$, we substitute the following joint distribution on $\mathbb{S}^d\times \mathbb{S}^d$:
		\*[
		\pi(z,v)=\pi_S(z)p(v\mid z)=\pi_S(z)\frac{\Ind_{v\cdot z=0}}{|\mathbb{S}^{d-1}|},\quad \forall (z,v) \in \mathbb{S}^d\times \mathbb{S}^d.
		\]
		For any $ (z,v) \in \mathbb{S}^d\times \mathbb{S}^d$ such that $z\perp v$, we have
\*[
		\mathcal{A}_D^*[\pi(z,v)]
		&=0 \cdot z-\frac{1}{|\mathbb{S}^{d-1}|} \nabla_z \pi_S(z)\cdot v\\
		&=-\pi_S(z)\frac{\Ind_{v\cdot z=0}}{|\mathbb{S}^{d-1}|}[v\cdot \nabla_z\log \pi_S(z)]\\
		&=-\pi(z,v)[v\cdot \nabla_z\log \pi_S(z)].
\]
		
		For $\mathcal{A}_J$, since the refreshment clearly preserves $\pi$ as the stationary distribution, we only need to consider the bounce events. When there is no refreshment, we can follow similar arguments as in \citeps[Section 2.2 and 3.2]{Fearnhead2018piecewise} to get the adjoint of the second part of the generator
		\*[
		\mathcal{A}^*_J[\pi(z,v)]=-\pi(z,v)\lambda(z,v)+\int \lambda(z,v')q(v\mid z,v')\pi(z,v')\dee v',
		\]
			where
		\*[
		q(\cdot \mid z,v):=\delta_{\tilde{v}}(\cdot),\quad 	\tilde{v}:=v-2\left[\frac{v\cdot\tilde{\nabla}_z\log\pi_S(z)}{\tilde{\nabla}_z\log\pi_S(z)\cdot \tilde{\nabla}_z\log\pi_S(z)}\right] \tilde{\nabla}_z\log\pi_S(z).
		\]
	    Therefore, it suffices to verify that $\pi(z,v)$ satisfies $\mathcal{A}^*[\pi(z,v)]=\mathcal{A}_D^*[\pi(z,v)]+\mathcal{A}_J^*[\pi(z,v)]=0$. That is,
		\*[
		\mathcal{A}^*[\pi(z,v)]=-\pi(z,v)[v\cdot \nabla_z\log\pi_S(z)+\lambda(z,v)]+\int \lambda(z,v')q(v\mid z,v')\pi(z,v')\dee v'.
		\]
		Then it is enough to verify
		\*[
		p(v\mid z)[\lambda(z,v)+v\cdot \nabla_z\log\pi_S(z)]-\lambda(z,\tilde{v})p(\tilde{v}\mid z)=0.
		\]
		Note that $\tilde{v}\perp z$ which implies that $p(v\mid z)=p(\tilde{v}\mid z)$. Therefore, it suffices to verify
		\*[
		\lambda(z,v)-\lambda(z,\tilde{v})=-v\cdot \nabla_z\log\pi_S(z).
		\]
		Recall that in our PDMP, we have $\lambda(z,v)=\max\{0,\left[-v\cdot\nabla_z\log\pi_S(z)\right]\}$. Therefore, we only need to verify 
		\*[
		-v\cdot \nabla_z\log\pi_S(z)=\tilde{v}\cdot \nabla_z\log\pi_S(z).
		\]
    Using the definition of $\tilde{v}$ and
	$\tilde{\nabla}_z\log\pi_S(z)=\nabla_z\log\pi_S(z)-\left[z\cdot\nabla_z\log\pi_S(z)\right]z$,
	together with the fact that $v\perp z$ and $\tilde{v}\perp z$, we have
	\*[
	-v\cdot \nabla_z\log\pi_S(z)=-v\cdot \tilde{\nabla}_z\log\pi_S(z)=\tilde{v}\cdot \tilde{\nabla}_z\log\pi_S(z)=\tilde{v}\cdot \nabla_z\log\pi_S(z),
	\]
	which completes the proof.
	\end{proof}
	
\subsection{Proof of \cref{thm_PDMP}}\label{proof_thm_PDMP}

\begin{proof}
Recall the definition of the transformation between $x\in \Reals^d$ and $z\in \mathbb{S}^d$:
	\*[
	x_i=R\frac{z_i}{1-z_{d+1}},\quad
	z_i=\frac{2Rx_i}{R^2+\|x\|^2}, \quad \forall i=1,\dots,d,\quad
	z_{d+1}=\frac{\|x\|^2-R^2}{\|x\|^2+R^2},
	\]
	and the transformed target on $\mathbb{S}^d$ is
	$\pi_S(z)\propto \pi(x)(R^2+\|x\|^2)^d$.
	One version of $\nabla_z\log \pi_S(z)$ (there is no unique representation since $z\in \Reals^{d+1}$ but $x\in \Reals^d$) can be derived as
	\*[
	\frac{\partial \log \pi_S(z)}{\partial z_i}&=\frac{\partial \log \pi(x)}{\partial x_i}\frac{R^2+\|x\|^2}{2R},\quad i=1,\dots,d.\\
	\frac{\partial \log \pi_S(z)}{\partial z_{d+1}}
	&=\left[\frac{1}{d}\sum_{j=1}^{d}\frac{\partial \log \pi(x)}{\partial x_j}x_j+1\right]\frac{d}{2R^2}(R^2+\|x\|^2).
	\]
	Recall that we define $\tilde{\nabla}_z\log\pi_S(z)$ as the projection of (any version of) $\nabla_z\log \pi_S(z)$ onto the (tangent plane of the) sphere:
	\*[
	\tilde{\nabla}_z\log\pi_S(z)&=\nabla_z\log\pi_S(z)-\left[z\cdot\nabla_z\log\pi_S(z)\right]z.
	\]
	Then the representation of $\tilde{\nabla}_z\log\pi_S(z)$ is unique.
	
	{	Denote $\left(\cdot\right)_{d+1}$ as the $(d+1)$-th coordinate. One can verify that
	\*[
	\limsup_{\{z: z_{d+1}\to 1\}}\left(\tilde{\nabla}_z\log\pi_S(z)\right)_{d+1}=\limsup_{\{x: \|x\|\to \infty\}} \sum_{i=1}^d \left(\frac{\partial \log \pi(x)}{\partial x_i} x_i\right)+2d.
	\]
	Therefore, the condition is equivalent to
		\*[
	\limsup_{\{z: z_{d+1}\to 1\}}\left(\tilde{\nabla}_z\log\pi_S(z)\right)_{d+1}< \frac{1}{2}.
	\]
	}
	
We have assumed $\|v\|=1$. Define the Lyapunov function
\*[
f(z,v)=2+v_{d+1},
\]
which is bounded since $|v_{d+1}|\le 1$. %Note that the constant $2$ in the Lyapunov function is only used for boundedness of the function $f(z,v)$. For example, if $\|v\|=C$ for any positive constant $C$, by choosing $f(z,v)=C+1+v_{d+1}$, the result is still true.
Note that moving along the greatest circle on the sphere with constant speed $\|v\|=1$ implies the equations
\*[
z(t)&=z(0)\cos(t)+v(0)\sin(t),\\
v(t)&=v(0)\cos(t)-z(0)\sin(t).
\]
For simplicity, we will write $z(0)$ and $v(0)$ as $z$ and $v$ in the rest of the proof. For two vectors $a$ and $b$, we will use $a \cdot b$ to denote the inner product $a^Tb$.
The key property we will use is that $\frac{\dee f}{\dee t}=\frac{\dee v_{d+1}}{\dee t}=-z_{d+1}$. 

Now we can derive the (extended) generator of SBPS
\*[
\tilde{\mathcal{L}}f(z,v):=&\lim_{t\to 0^+} \frac{1}{t}\left[f(z(t),v(t))-f(z,v)\right]\\
&+\int \lambda \psi(\dee w)\left[f(z,w)-f(z,v)\right]\\
&+ \left(-\nabla_z \log\pi_S(z)\cdot v\right)^+\left[f(z,\tilde{v})-f(z,v)\right]\\
=& -z_{d+1} - \lambda v_{d+1} + \left(-\nabla_z \log\pi_S(z)\cdot v\right)^+\left(\tilde{v}_{d+1}-v_{d+1}\right),
\]
where we have used
	\*[
\int \lambda \psi(\dee w)\left[f(z,w)-f(z,v)\right] = \int \lambda \psi(\dee w)\left[w_{d+1}-v_{d+1}\right]=-\lambda v_{d+1}.
\]
Our goal is to establish a drift condition for the generator.
Note that
\*[
&\tilde{\mathcal{L}}f/f=\tilde{\mathcal{L}}f/(2+v_{d+1})\le \max\{\tilde{\mathcal{L}}f, \tilde{\mathcal{L}}f/3 \}
\]
Therefore, we only need to focus on $\tilde{\mathcal{L}}f$ and prove it is negative outside a small set. It suffices to prove $\tilde{\mathcal{L}}f$ is negative in a small neighborhood of the North Pole.

Recall that
	\*[
	\tilde{v}:=v-2\left[\frac{v\cdot\tilde{\nabla}_z\log\pi_S(z)}{\tilde{\nabla}_z\log\pi_S(z)\cdot \tilde{\nabla}_z\log\pi_S(z)}\right] \tilde{\nabla}_z\log\pi_S(z).
	\]
%	where 	$\tilde{\nabla}_z\log\pi_S(z)=\nabla_z\log\pi_S(z)-\left[z\cdot\nabla_z\log\pi_S(z)\right]z$.
%	Note that $\nabla_z\log\pi_S(z)$ might not be unique but $\tilde{\nabla}_z\log\pi_S(z)$ is unique. Also, it doesn't matter to use any version of $\nabla_z\log\pi_S(z)$ since
Also, it can be verified that
	\*[
	v\cdot\tilde{\nabla}_z\log\pi_S(z)=v\cdot\nabla_z\log\pi_S(z).
	\]
Therefore, we have
\*[
\tilde{\mathcal{L}}f(z,v)= -z_{d+1} - \lambda v_{d+1} + \left(-\tilde{\nabla}_z \log\pi_S(z)\cdot v\right)^+\left[-2 \frac{v\cdot\tilde{\nabla}_z\log\pi_S(z)}{\|\tilde{\nabla}_z\log\pi_S(z)\|^2} \left(\tilde{\nabla}_z\log\pi_S(z)\right)_{d+1}\right].
\]
Note that if $\left(\tilde{\nabla}_z\log\pi_S(z)\right)_{d+1}\le 0$ then $\tilde{\mathcal{L}}f(z,v)\le -z_{d+1} - \lambda v_{d+1}$. Otherwise, $\left(\tilde{\nabla}_z\log\pi_S(z)\right)_{d+1}>0$, then we maximize the last term over $v$ to get
\*[
&\left(-\tilde{\nabla}_z \log\pi_S(z)\cdot v\right)_+\left[-2 \frac{v\cdot\tilde{\nabla}_z\log\pi_S(z)}{\|\tilde{\nabla}_z\log\pi_S(z)\|^2} \left(\tilde{\nabla}_z\log\pi_S(z)\right)_{d+1}\right]\\
&\le \|\tilde{\nabla}_z \log\pi_S(z)\| \left[2\frac{\|\tilde{\nabla}_z \log\pi_S(z)\|}{\|\tilde{\nabla}_z \log\pi_S(z)\|^2}\left(\tilde{\nabla}_z\log\pi_S(z)\right)_{d+1}\right]\\
&=2\left(\tilde{\nabla}_z\log\pi_S(z)\right)_{d+1}
\]
where the upper bound is achieved if
\*[
v=\frac{-\tilde{\nabla}_z \log\pi_S(z)}{\|\tilde{\nabla}_z \log\pi_S(z)\|}.
\]

Overall, using $v_{d+1}\le \sqrt{1-z_{d+1}^2}$, we have already shown that
\*[
\sup_{\{z: z_{d+1}=z_*\}}\tilde{\mathcal{L}}f(z,v)\le -z_* + \lambda \sqrt{1-z_*^2} +2\sup_{\{z: z_{d+1}=z_*\}}\left(\tilde{\nabla}_z\log\pi_S(z)\right)_{d+1}^+
\]
Finally, let $z_*\to 1$,  the right hand side goes to
\*[
-1 +2\limsup_{\{z: z_{d+1}\to 1\}}\left(\tilde{\nabla}_z\log\pi_S(z)\right)_{d+1}^+
\]
which is negative because of the condition in \cref{thm_ESJD} that
\*[
\limsup_{\{z: z_{d+1}\to 1\}}\left(\tilde{\nabla}_z\log\pi_S(z)\right)_{d+1}< \frac{1}{2}.
\]
Therefore, we have shown the (extended) generator is negative in a neighborhood of the North Pole, which implies a drift condition for the (extended) generator. Finally, by \citeps[Theorem 5.2(c)]{down1995exponential}, as the Lyapunov function is bounded, the Markov process is uniformly ergodic, which completes the proof. 
\end{proof}

\subsection{Proof of \cref{coro_PDMP}}\label{proof_coro_PDMP}
\begin{proof}
    We present two versions of the proof.
  In this first version of the proof, we simply check the condition in \cref{thm_PDMP}. Let $\nu$ to be the DoF of the multivariate student's $t$ target $\pi(x)$. We have
	\*[
	\log\pi(x) = -\frac{\nu+d}{2}\log\left(1+\frac{1}{\nu}\|x\|^2\right)+C
	\]
	for some constant $C$.
	Then, one can get
	\*[
	\sum_{i=1}^d\frac{\partial \log\pi (x)}{\partial x_i}x_i=-\frac{\nu+d}{\nu}\frac{\|x\|^2}{1+\|x\|^2/\nu}.
	\]
	Taking $\|x\|\to \infty$ yields
	\*[
	-\frac{\nu+d}{\nu}\nu < \frac{1}{2}-2d
	\]
	which completes the proof since
	\*[
	\nu>d-\frac{1}{2}.
	\]

	The alternative proof is by following the proof of \cref{thm_PDMP}.
Let the degree of freedom for the multivariate student's $t$ target $\pi(x)$ to be $kd$ where $k>1-\frac{1}{2d}$.
	For simplicity of notations, we consider the case $R=\sqrt{d}$ without loss of generality.
	Since $\pi(x)$ is isotropic, we can verify that {if $R=\sqrt{d}$}, one version of $\nabla\log \pi_S(z)$ (as the representation is not unique) is
	\*[
	\nabla\log \pi_S(z)=\left(0,\dots, 0, \frac{-dz_{d+1}(k-1)}{(1-z_{d+1})[(k-1)(1-z_{d+1})+2]}\right)
	\]
	Then it is easy to verify using the basic geometry of the sphere that if $k\ge 1$ and $z_{d+1}>0$ then
	\*[
	\left(\tilde{\nabla}_z\log\pi_S(z)\right)_{d+1}\le 0.
	\]
	If $k<1$ and $z_{d+1}>0$ then
	\*[
	\frac{\left(\tilde{\nabla}_z\log\pi_S(z)\right)_{d+1}}{\|\tilde{\nabla}_z\log\pi_S(z)\|}=\sqrt{1-z_{d+1}^2}
	\]
	and
	\*[
	\|\tilde{\nabla}_z\log\pi_S(z)\|=\sqrt{1-z_{1+d}^2}\frac{-dz_{d+1}(k-1)}{(1-z_{d+1})[(k-1)(1-z_{d+1})+2]}
	\]
	Denoting $k=1-\epsilon$ and multiplying the two equations above yields
	\*[
	\left(\tilde{\nabla}_z\log\pi_S(z)\right)_{d+1}=\frac{d\epsilon z_{d+1} (1-z_{d+1}^2)}{(1-z_{d+1})[2-\epsilon(1-z_{d+1})]}=\frac{d\epsilon z_{d+1} (1+z_{d+1})}{2-\epsilon(1-z_{d+1})}
	\]
	Letting $z_{d+1}\to 1$ yields
	\*[
	d\epsilon < d \frac{1}{2d} = \frac{1}{2}.
	\]
	Therefore, the uniform ergodicity holds by \cref{thm_PDMP}.
	\end{proof}

	%\newpage
	
	\subsection{On the conjecture in \cref{conjecture}}\label{proof_conjecture}
	Our arguments include two parts. We first study a one-dimensional target distribution with density $\pi(t)\propto |t|^{-\alpha}$, $t\in [-1,1]$ and $\alpha>0$. We show that, for this target, the expected hitting time to the boundary (that is $1$ or $-1$) of the BPS starting from $t=0$ is finite (i.e., the BPS ``escapes'' the origin) if and only if $\alpha<1$. Next, we show that, for target with the multivariate student's $t$ distribution with DoF $\nu$, we can approximate the density on the greatest circle passing the North Pole by a form of $\pi(t)\propto |t|^{-\alpha}$ if the chain is close to the North Pole. Then $\alpha<1$ corresponds to the condition in our conjecture
	in \cref{conjecture}, which is $\nu>d-1$ in this case.

	\subsubsection{One-dimensional target} 
	
	We consider the following one-dimensional target with density
	\*[
    f_0(t)\propto |t|^{-\alpha},\quad \forall |t|<1,
	\]
	where $\alpha>0$. Note that the density goes to infinity as $t\to 0$. To study the behavior of BPS, we consider the following ``truncated'' targets:
	\*[
	f_{\epsilon}(t)\propto (\epsilon+|t|)^{-\alpha}, \quad \forall t\in [-(1-\epsilon), 1-\epsilon]
	\]
	and we will later let $\epsilon\to 0$ to recover $f_0(t)$.
	
	Next, we consider $T_{\epsilon}$ to be the first bouncing time when the starting state is from $0$ with unit velocity for target $f_{\epsilon}$. Then we have
	\*[
	\Pr(T_{\epsilon}< 1-\epsilon)=1-\exp\left(-\int_0^{1-\epsilon}(\log f_{\epsilon})'\dee t\right)
	\]
	Note that $f_{\epsilon}(1-\epsilon)\propto 1$ and $f_{\epsilon}(0)\propto \epsilon^{-\alpha}$, both are up to the same constant. Therefore, we have
	\*[
	\int_0^{1-\epsilon}(\log f_{\epsilon})'\dee t=\log(1)-\log(\epsilon^{-\alpha})
	\]
	which implies
	\*[
		\Pr(T_{\epsilon}< 1-\epsilon)=1-\epsilon^{-\alpha}.
	\]
	Similarly, we can get the density of $T_{\epsilon}$ as
	\*[
	f_{T_{\epsilon}}(t)= -\epsilon^{\alpha}\frac{\dee}{\dee t}  (\epsilon+|t|)^{-\alpha}=\alpha \epsilon^{\alpha}(\epsilon+|t|)^{-\alpha-1}=\frac{\alpha}{\epsilon}\left(\frac{|t|}{\epsilon}+1\right)^{-1-\alpha},\quad \forall 0<t<1-\epsilon.
	\]
	Then, we define $\tilde{T}_{\epsilon}$ as the conditional bouncing time which is $T_{\epsilon}$ conditioned on $T_{\epsilon}<1-\epsilon$. Then we have the density of $\tilde{T}_{\epsilon}$
	\*[
	f_{\tilde{T}_{\epsilon}}(t)=\frac{\alpha}{\epsilon(1-\epsilon^{\alpha})}\left(\frac{t}{\epsilon}+1\right)^{-1-\alpha},\quad t\in [0,1-\epsilon].
	\]
	Now we consider the ``hitting time'' to the boundary (that is, to $\pm(1-\epsilon)$). We can denote the ``hitting time'' as
	\*[
	S^{\epsilon}=\sum_{i=1}^{G^{\epsilon}} (2\tilde{T}_{\epsilon,i}) + (1-\epsilon),
	\]
	where $\tilde{T}_{\epsilon,i}$ are i.i.d.\ from density $f_{\tilde{T}_{\epsilon}}$ and $G^{\epsilon}$ is an independent geometric random variable with parameter $\epsilon^{\alpha}$. Note that
	\*[
	\EE[G^{\epsilon}]=\epsilon^{-\alpha}.
	\]
	Therefore, to study $\lim_{\epsilon\to 0}\EE[S^{\epsilon}]<\infty$, it suffices to study
	\*[
	\lim_{\epsilon\to 0}\EE\left[\sum_{i=1}^{G^{\epsilon}} \tilde{T}_{\epsilon,i}\right]=\lim_{\epsilon\to 0}\epsilon^{-\alpha}\EE[\tilde{T}_{\epsilon}].
	\]
	Using the density of $\tilde{T}_{\epsilon}$, we have
\*[
	\EE[\tilde{T}_{\epsilon}]&=\frac{\alpha}{\epsilon(1-\epsilon^{\alpha})}\int_0^{1-\epsilon} t \left(\frac{t}{\epsilon}+1\right)^{-1-\alpha}\dee t\\
	&=\frac{\alpha}{\epsilon(1-\epsilon^{\alpha})}\int_0^{1-\epsilon} \left[\epsilon\left(\frac{t}{\epsilon}+1\right)-\epsilon\right] \left(\frac{t}{\epsilon}+1\right)^{-1-\alpha}\dee t\\
	&=\frac{\alpha}{\epsilon(1-\epsilon^{\alpha})}\left[\int_0^{1-\epsilon} \epsilon \left(\frac{t}{\epsilon}+1\right)^{-\alpha}\dee t-\int_0^{1-\epsilon} \epsilon \left(\frac{t}{\epsilon}+1\right)^{-\alpha-1}\dee t\right]\\
	&=\frac{\alpha}{\epsilon(1-\epsilon^{\alpha})}\int_0^{1-\epsilon} \epsilon \left(\frac{t}{\epsilon}+1\right)^{-\alpha}\dee t-\epsilon\\
	&=\frac{\alpha\epsilon}{1-\epsilon^{\alpha}}\int_0^{\frac{1-\epsilon}{\epsilon}}  \left(u+1\right)^{-\alpha}\dee u-\epsilon\\
	&=\begin{cases}
	\frac{\epsilon}{1-\epsilon}\log(\frac{1-\epsilon}{\epsilon}+1)-\epsilon,\quad &\textrm{if } \alpha=1,\\
	\frac{\alpha\epsilon}{1-\epsilon^{\alpha}}\frac{\left(1+\frac{1-\epsilon}{\epsilon}\right)^{1-\alpha}-1}{1-\alpha}-\epsilon,\quad &\textrm{otherwise.}\\
	\end{cases}\\
	&=\begin{cases}
	\frac{\epsilon}{1-\epsilon}\log(\frac{1}{\epsilon})-\epsilon,\quad &\textrm{if } \alpha=1,\\
	\frac{\alpha}{1-\alpha}\frac{\epsilon^{\alpha}-\epsilon}{1-\epsilon^{\alpha}}-\epsilon,\quad &\textrm{otherwise.}
	\end{cases}
\]
	Therefore, letting $\epsilon\to 0$, if $\alpha<1$, then
	\*[
	\epsilon^{-\alpha}\EE[\tilde{T}_{\epsilon}]\to \epsilon^{-\alpha} \frac{\alpha}{1-\alpha} \epsilon^{\alpha}= \frac{\alpha}{1-\alpha}<\infty.
	\]
	On the other hand, if $\alpha=1$, then
	\*[
		\epsilon^{-\alpha}\EE[\tilde{T}_{\epsilon}]=\epsilon^{-1} \left[	\frac{\epsilon}{1-\epsilon}\log(\frac{1}{\epsilon})-\epsilon\right]= \frac{1}{1-\epsilon}\log(\frac{1}{\epsilon})-1 \to \infty.
	\]
	Or if $\alpha>1$, then
	\*[
	\epsilon^{-\alpha}\EE[\tilde{T}_{\epsilon}]=\epsilon^{-\alpha}\left[\frac{\alpha}{\alpha-1}\frac{\epsilon-\epsilon^{\alpha}}{1-\epsilon^{\alpha}}-\epsilon\right]\to \frac{1}{\alpha-1}\epsilon^{1-\alpha}\to \infty.
	\]
	Therefore, $\alpha<1$ is required for the expected ``escaping time'' of the BPS to be finite.
	
	\subsubsection{Approximation round the North Pole}
	
	We pick the greatest circle such that $z_2=z_3=\dots=z_d=0$ and $z_1^2+z_{d+1}^2=1$. Suppose the target distribution
	\*[
	\pi_{\nu}(x)\propto \left(1+\frac{\|x\|^2}{\nu}\right)^{-\frac{\nu+d}{2}},\quad x\in \Reals^d.
	\]
	
	Then, for any $z\in \mathbb{S}^d$, we have
	\*[
	\log \pi_S(z)=\log\pi_{\nu}(x)+ d\log(d+\|x\|^2)=C-d g_{\nu/d}(z_{d+1}),
	\]
	where $C$ is a constant and 
	\*[
	g_k(z_{d+1}):=\frac{k+1}{2}\log\left(k-1+\frac{2}{1-z_{d+1}}\right)+\log(1-z_{d+1}).
	\]
	Therefore, let $k:=\nu/d$, we have
	\*[
	\pi_S(z)\propto \exp(-dg_k(z_{d+1}))=\left[\left(k-1+\frac{2}{1-z_{d+1}}\right)^{\frac{k+1}{2}}\right]^{-d}(1-z_{d+1})^{-d}.
	\]
	Now consider the picked greatest circle, for any $z$ such that $z_1^2+z_{d+1}^2=1$, we have
\*[
	\pi_S(z_1,z_{d+1}\mid z_{2:d}=0)&\propto \left[\left(k-1+\frac{2}{1-z_{d+1}}\right)^{\frac{k+1}{2}}\right]^{-d}(1-z_{d+1})^{-d}\\
	&=\left[\left(k-1+\frac{2}{1-\sqrt{1-z_1^2}}\right)^{\frac{k+1}{2}}\right]^{-d}\left(1-\sqrt{1-z_1^2}\right)^{-d}.
\]
	If $k$ is fixed, letting $z_1\to 0$ and using the approximation $\sqrt{1-z_1^2}\to 1-z_1^2/2$, we have
	\*[
	 \left[\left(k-1+\frac{4}{z_1^2}\right)^{\frac{k+1}{2}}\right]^d \left(\frac{z_1^2}{2}\right)^{-d}\to  \left[\left(\frac{4}{z_1^2}\right)^{\frac{k+1}{2}}\right]^{-d} \left(\frac{z_1^2}{2}\right)^{-d}\propto |z_1|^{(k+1)d-2d}
	\]
	Therefore, if we write it as $|z_1|^{-\alpha}$ then we have
	\*[
	\alpha=(1-k)d=(1-\nu/d)d=d-\nu.
	\]
	Therefore, $\alpha<1$ corresponds to $\nu>d-1$.

\subsection{Proof of \cref{lemma-lati}}\label{proof-lemma-lati}
\begin{proof}
All the randomness for generating the proposal comes from the standard multivariate Gaussian random variable with distribution $\mathcal{N}(0,I_d)$ in the tangent space. We can denote it using the proposal variance $h^2$ and $d$ independent standard Gaussian random variables $V_1,\dots,V_d\sim \mathcal{N}(0,1)$. Then we can use $\sqrt{\sum_{j=1}^dV_j^2}$ to denote the Euclidean norm of the multivariate Gaussian random variable.

Suppose the current location of the chain is at $z$, and we are interested on the $i$-th coordinate of the proposal $\hat{z}$. If $z_i\not\in\{-1,1\}$, in the ``tangent space'' at $z$, there must be $d-1$ directions orthogonal to the direction of the $i$-th coordinate of the sphere. This leaves only one direction in the ``tangent space'' at $z$ which is not orthogonal to the direction of the $i$-th coordinate of the sphere. We denote the marginal of the multivariate Gaussian to this direction as $U_i$. Clearly, $U_i\sim\mathcal{N}(0,1)$.

Then, using some facts from basic geometry, the random walk in the tangent space without projection back to sphere changes the ``latitude'' from $z_{d+1}$ to $z_{d+1}+\sqrt{1-z_{d+1}^2}hU_{d+1}$ with distance to the origin equals to $\sqrt{1+h^2\sum_{j=1}^d V_j^2}$. Therefore, basic geometry tells us after projection back to the sphere, the ``latitude'' of the proposal satisfies
\*[
\frac{\hat{z}_{d+1}}{1}=\frac{z_{d+1}+\sqrt{1-z_{d+1}^2}hU_{d+1}}{\sqrt{1+h^2\sum_{j=1}^d V_j^2}}.
\]
Similarly for other coordinates,
we can get closed forms for any $\hat{z}_i$ in terms of $\sum_{j=1}^d V_j^2$ and $U_i$ for $i=1,\dots,d+1$. Writing $\sum_{j=1}^dV_j^2=U_i^2+U_{i,\perp}^2$, we have
\*[
\hat{z}_i&=
\frac{1}{\sqrt{1+h^2\sum_{j=1}^d V_j^2}}\left(z_i+\sqrt{1-z_i^2}hU_i\right)
=\frac{1}{\sqrt{1+h^2(U_i^2+U_{i,\perp}^2)}}\left(z_i+\sqrt{1-z_i^2}hU_i\right).
\]
Furthermore, we have
\*[
\hat{z}_i&=
\frac{1}{\sqrt{1+h^2\sum_{j=1}^d V_j^2}}\left(z_i+\sqrt{1-z_i^2}hU_i\right)\\
&=\frac{\sqrt{1+h^2U_i^2}}{\sqrt{1+h^2\sum_{j=1}^d V_j^2}}\left(\frac{z_i}{\sqrt{1+h^2U_i^2}}+\frac{\sqrt{1-z_i^2}hU_i}{\sqrt{1+h^2U_i^2}}\right)
\]
 Intuitively, the term
$\frac{z_i}{\sqrt{1+h^2U_i^2}}+\frac{\sqrt{1-z_i^2}hU_i}{\sqrt{1+h^2U_i^2}}$ 
comes from the change of $i$-th coordinate caused by $U_i$. The other term
$\frac{\sqrt{1+h^2U_i^2}}{\sqrt{1+h^2\sum_{j=1}^d V_j^2}}$
comes from the change of the $i$-th coordinate by the other $d-1$ orthogonal directions through the ``curvature'' of the sphere when projecting back to the sphere.

Note that $\{U_i,i=1,\dots,d+1\}$ are dependent, all of which can be written as linear combinations of $\{V_i\}$. If $h=\bigO(d^{-1/2})$, observing that $\sum_{j=1}^d V_j^2-U_i^2$ is chi-squared distributed with degree of freedom $d-1$, we can write
\*[
	\hat{z}_{i}&=\frac{\sqrt{1+h^2U_{i}^2}}{\sqrt{1+h^2\left(\sum_{j=1}^d V_j^2-U_{i}^2\right)+h^2U_{i}^2}}\left(\frac{z_{i}}{\sqrt{1+h^2U_{i}^2}}+\frac{\sqrt{1-z_{i}^2}hU_{i}}{\sqrt{1+h^2U_{i}^2}}\right)\\
	&=\frac{\sqrt{1+h^2U_{i}^2}(1+\bigO_{\Pr}(h^4d))}{\sqrt{\left(1+h^2\left(\sum_{j=1}^d V_j^2-U_{i}^2\right)\right)\left(1+h^2U_{i}^2\right)}}\left(\frac{z_{i}}{\sqrt{1+h^2U_{i}^2}}+\frac{\sqrt{1-z_{i}^2}hU_{i}}{\sqrt{1+h^2U_{i}^2}}\right)\\
	&=\left[\frac{1}{\sqrt{1+h^2\left(\sum_{j=1}^d V_j^2-U_{i}^2\right)}}+\bigO_{\Pr}(h^4d)\right]\left[\left(1-\frac{1}{2} h^2 U_{i}^2\right)\left(z_{i}+\sqrt{1-z_{i}^2}h U_{i}\right)+\bigO_{\Pr}(h^3)\right]\\
	&=\frac{1}{\sqrt{1+h^2U_{i,\perp}^2}}\left[\left(1-\frac{1}{2} h^2 U_i^2\right)z_{i}-\sqrt{1-z_{i}^2}h U_i\right]+\bigO_{\Pr}(h^3+h^4dz_i),
\]
where $U_{i,\perp}^2\sim\chi^2_{d-1}$ which is independent with $U_i$.
Therefore, in the stationary phase, if $z_{d+1}=\bigO(d^{-1/2})$, then we have
\*[
	\hat{z}_{d+1}&=\frac{1}{\sqrt{1+h^2(d-1)}}\left[1+\bigO_{\Pr}(d^{-1/2}+h^2d^{1/2})\right]\left(z_{d+1}-hU_{d+1}\right)+o_{\Pr}(d^{-1})\\
	&=\left[\frac{1}{\sqrt{1+h^2(d-1)}}+\bigO_{\Pr}(d^{-1/2})\right]\left(z_{d+1}-hU_{d+1}\right)+o_{\Pr}(d^{-1})\\
	&=\frac{1}{\sqrt{1+h^2(d-1)}}\left(z_{d+1}-hU_{d+1}\right)+\bigO_{\Pr}(d^{-1}).
\]
Similarly, in the transient phase, if $z_{d+1}^2=1-o(h^2)$, we have
\*[
	\hat{z}_{d+1}&= \left[\frac{1}{\sqrt{1+h^2(d-1)}}+\bigO_{\Pr}(d^{-1/2})\right]\left(1-\frac{1}{2}h^2U^2\right)z_{d+1}+\bigO_{\Pr}(h^4d)\\
&= \frac{1}{\sqrt{1+h^2(d-1)}}\left(1-\frac{1}{2}h^2U_{d+1}^2\right)z_{d+1}+\bigO_{\Pr}(d^{-1/2}).
\]
\end{proof}
\subsection{Proof of \cref{thm-robust}}\label{proof-thm-robust}

\begin{proof}
	Throughout the proof, we assume $h=o(d^{-1/2})$ so that $dh^2\to 0$.
	{For bounded random variables, convergence in probability implies convergence in $L_1$. Therefore, it suffices to show $\left(1\wedge\frac{\pi_{\mu,\Sigma}(\hat{X})(R^2+\|\hat{X}\|^2)^d}{\pi_{\mu,\Sigma}(X)(R^2+\|X\|^2)^d}\right)\overset{\Pr}{\to} 1$. Furthermore,
     for any $\epsilon>0$
	\*[
	\Pr\left(\left|1\wedge\frac{\pi_{\mu,\Sigma}(\hat{X})(R^2+\|\hat{X}\|^2)^d}{\pi_{\mu,\Sigma}(X)(R^2+\|X\|^2)^d}-1\right|>\epsilon\right)\le \Pr\left(\left|\frac{\pi_{\mu,\Sigma}(\hat{X})(R^2+\|\hat{X}\|^2)^d}{\pi_{\mu,\Sigma}(X)(R^2+\|X\|^2)^d}-1\right|>\epsilon\right)
	\]
	Therefore,
	it suffices to show $\frac{\pi_{\mu,\Sigma}(\hat{X})(R^2+\|\hat{X}\|^2)^d}{\pi_{\mu,\Sigma}(X)(R^2+\|X\|^2)^d}\overset{\Pr}{\to} 1$, or equivalently 
	\*[
	\log\frac{\pi_{\mu,\Sigma}(\hat{X})(R^2+\|\hat{X}\|^2)^d}{\pi_{\mu,\Sigma}(X)(R^2+\|X\|^2)^d}\overset{\Pr}{\to} 0.
	\]
	}
	%First, since $1\wedge \exp(\cdot)$ is Lipschitz \citeps[Proposition 2.2]{Roberts1997}, we have
	%\*[
	%&\left|\EE_{X\sim \pi_{\mu,\Sigma}}\EE_{\hat{X}\,|\,X}\left[1\wedge\frac{\pi_{\mu,\Sigma}(\hat{X})(R^2+\|\hat{X}\|^2)^d}{\pi_{\mu,\Sigma}(X)(R^2+\|X\|^2)^d}\right]-1\right|\\
	%&\le\EE_{X\sim \pi_{\mu,\Sigma}}\EE_{\hat{X}\,|\,X}\left[\left|1\wedge\exp\left[\log\frac{\pi_{\mu,\Sigma}(\hat{X})(R^2+\|\hat{X}\|^2)^d}{\pi_{\mu,\Sigma}(X)(R^2+\|X\|^2)^d}\right]-1\wedge \exp(0)\right|\right]\\
	%&\le \EE_{X\sim \pi_{\mu,\Sigma}}\EE_{\hat{X}\,|\,X}\left[\left|\log\frac{\pi_{\mu,\Sigma}(\hat{X})(R^2+\|\hat{X}\|^2)^d}{\pi_{\mu,\Sigma}(X)(R^2+\|X\|^2)^d}-0\right|\right]\\
	%&=\EE_{X\sim \pi_{\mu,\Sigma}}\EE_{\hat{X}\,|\,X}\left[\left|\log\frac{\pi_{\mu,\Sigma}(\hat{X})}{\pi_{\mu,\Sigma}(X)}+d\log\frac{1-z_{d+1}}{1-\hat{z}_{d+1}}\right|\right]
	%\]
	Defining
	$\delta_i:=\lambda_i^{-1}-1$,
	we can write $\pi_{\mu,\Sigma}(x)$ as
	\*[
	\log\pi_{\mu,\Sigma}(x)=\log\pi_{0,I_d}(x)-\frac{1}{2}\sum_i\delta_ix_i^2+\sum_i\mu_i(1+\delta_i)x_i -\frac{1}{2}\sum_i \log(\lambda_i),
	\]
	where $\pi_{0,I_d}$ is just the standard multivariate Gaussian density. Furthermore, using $R=d^{1/2}$ and
	\*[
	\log\frac{(d+\|\hat{X}\|^2)^d}{(d+\|X\|^2)^d}=d\log\frac{1-z_{d+1}}{1-\hat{z}_{d+1}},
	\]
	we have
	\*[
	      	&\log\frac{\pi_{\mu,\Sigma}(\hat{X})(d+\|\hat{X}\|^2)^d}{\pi_{\mu,\Sigma}(X)(d+\|X\|^2)^d}=\log\frac{\pi_{\mu,\Sigma}(\hat{X})}{\pi_{\mu,\Sigma}(X)}+d\log\frac{1-z_{d+1}}{1-\hat{z}_{d+1}}\\
	&=\log\frac{\pi_{0,I_d}(\hat{X})}{\pi_{0,I_d}(X)}+d\log\frac{1-z_{d+1}}{1-\hat{z}_{d+1}}-\frac{1}{2}\sum_i\delta_i(\hat{X}_i^2-X_i^2)+\sum_i\mu_i(1+\delta_i)(\hat{X}_i-X_i).
	  \]
	  
	Therefore, it suffices to show that if $h$ satisfies the condition of \cref{thm-robust} then we have all of the three results:
	\[\label{temp987}
	\log\frac{\pi_{0,I_d}(\hat{X})}{\pi_{0,I_d}(X)}+d\log\frac{1-z_{d+1}}{1-\hat{z}_{d+1}}\overset{\Pr}{\to} 0,
	\]
	\[\label{temp3}
        \sum_i\delta_i(\hat{X}_i^2-X_i^2)\overset{\Pr}{\to} 0,
	\]
	\[\label{temp4}
	\sum_i\mu_i(1+\delta_i)(\hat{X}_i-X_i)\overset{\Pr}{\to} 0.
	\]
    The rest of the proof is just to show \cref{temp987,temp3,temp4} using Taylor expansions and \cref{lemma-lati}.

	\subsubsection{Proof of \cref{temp987}}
    Writing $X=\mu+\Sigma^{1/2}\tilde{N}$ where $\tilde{N}$ is a standard multivariate Gaussian vector, because of the conditions in \cref{thm-robust} that there exist constants $C<\infty$ and $c>0$ such that
		$c\le \lambda_i\le C$ and $|\mu_i|\le C$, we have $\mu^T\Sigma^{1/2}\tilde{N}=\bigO_{\Pr}(d^{1/2})$. Then using the fact $\tilde{N}^T\Sigma \tilde{N}=\sum_i\lambda_i + \bigO_{\Pr}(d^{1/2})$, we have
\*[
	\|X\|^2&=\|X-\mu\|^2+\sum_i\mu_i^2+ 2\mu^T\Sigma^{1/2}\tilde{N}\\
	&=\tilde{N}^T\Sigma \tilde{N}+\sum_i \mu_i^2 +2\mu^T\Sigma^{1/2}\tilde{N}\\
	&=\left(d-\sum_i(1-\lambda_i)+\bigO_{\Pr}(d^{1/2})\right)+\sum_i\mu_i^2 +\bigO_{\Pr}(d^{1/2}).
	    \]
	Then we have 
	$\|X\|^2=d+\sum_i\mu_i^2-\sum_i(1-\lambda_i)+\bigO_{\Pr}(d^{1/2})$ and
	\[\label{eq_temp33}
	z_{d+1}=\frac{\|X\|^2-d}{\|X\|^2+d}=\frac{\sum_i\mu_i^2-\sum_i(1-\lambda_i)}{2d+\sum_i\mu_i^2-\sum_i(1-\lambda_i)}+\bigO_{\Pr}(d^{-1/2}).
	\]
	By \cref{eq_robust_cond}, $\left|\sum_i\mu_i^2-\sum_i(1-\lambda_i)\right|=\bigO(d^{\alpha})$, we have $\|X\|^2=d+\bigO(d^{\alpha})+\bigO_{\Pr}(d^{1/2})$ and $z_{d+1}=\bigO(d^{\alpha-1})+\bigO_{\Pr}(d^{-1/2})$, where the $\bigO(d^{\alpha-1})$ term represents $\frac{\sum_i\mu_i^2-\sum_i(1-\lambda_i)}{2d+\sum_i\mu_i^2-\sum_i(1-\lambda_i)}$ which is bounded away from $1$.
	From \cref{eq_approx_large_h} we have
	\[\label{eq_approx_zz}
	\hat{z}_i=(1+\bigO_{\Pr}(h^2d))(z_i+\bigO_{\Pr}(h))=z_i+\bigO_{\Pr}(h)+\bigO_{\Pr}(h^2d)z_i.
	\]
	Using $h=o(d^{-1/2})$, $h=o(d^{-\alpha})$, and $z_{d+1}=\bigO(d^{\alpha-1})+\bigO_{\Pr}(d^{-1/2})$ we have $h^2dz_{d+1}=o_{\Pr}(d^{-1})$ then from \cref{eq_approx_zz} we have
	\[\label{eq_approx_zz_dplus1}
	\hat{z}_{d+1}=z_{d+1}+\bigO_{\Pr}(h)+o_{\Pr}(d^{-1}).
	\]
%	\RED{
	%$\hat{z}_{d+1}=z_{d+1}+\bigO_{\Pr}(h)$ by \cref{eq_approx_whole} in \cref{lemma-lati}. (JY: could be $\hat{z}_{d+1}=a_d z_{d+1}+\bigO_{\Pr}(h)=z_{d+1}+o(z_{d+1})$) 
	Then we have
	\*[
	&\log\frac{\pi_{0,I_d}(\hat{X})}{\pi_{0,I_d}(X)}+d\log\frac{1-z_{d+1}}{1-\hat{z}_{d+1}}\\
	&=d\left[\left(\frac{1}{1-z_{d+1}}-\frac{1}{1-\hat{z}_{d+1}}\right)+\log\frac{1-z_{d+1}}{1-\hat{z}_{d+1}}\right]\\
	&=d\left[\frac{z_{d+1}-\hat{z}_{d+1}}{(1-z_{d+1})(1-\hat{z}_{d+1})}-\frac{z_{d+1}-\hat{z}_{d+1}}{1-\hat{z}_{d+1}}\right]+\bigO_{\Pr}(dh^2)+\bigO_{\Pr}(d^3h^4)z_{d+1}^2\\
	&=d\frac{(z_{d+1}-\hat{z}_{d+1})z_{d+1}}{(1-z_{d+1})(1-\hat{z}_{d+1})}+\bigO_{\Pr}(dh^2)+\bigO_{\Pr}(d^3h^4)z_{d+1}^2\\
	&{=d \frac{\left[\bigO_{\Pr}(h)+\bigO_{\Pr}(h^2d)z_{d+1}\right][\bigO_{\Pr}(d^{-1/2})+\bigO(d^{\alpha-1})]}{1-\bigO_{\Pr}(d^{-1/2})-\bigO(d^{\alpha-1})-\bigO_{\Pr}(h)}+\bigO_{\Pr}(dh^2)+\bigO_{\Pr}(d^3h^4)z_{d+1}^2}=o_{\Pr}(1),
	    \]
	where the last equality holds because (i) $dh^2=o(1)$ as $h=o(d^{-1/2})$; (ii) since $h=o(d^{-\alpha})$, we can get $d^3h^4z_{d+1}^2$ equals to $o_{\Pr}(d^{2\alpha-1})$ if $\alpha\le 1/2$ and equals to $o_{\Pr}(d^{1-2\alpha})$ if $1/2<\alpha\le 1$. In both cases, $\bigO_{\Pr}(d^4h^4)z_{d+1}^2=o_{\Pr}(1)$; (iii)  similarly, we can verify in both cases $\alpha\le 1/2$ and $1/2<\alpha\le 1$ we have $d(h+h^2dz_{d+1})(d^{-1/2}+d^{\alpha-1})=o_{\Pr}(1)$; (iv) the term $\bigO(d^{\alpha-1})$ is $\frac{\sum_i\mu_i^2-\sum_i(1-\lambda_i)}{2d+\sum_i\mu_i^2-\sum_i(1-\lambda_i)}$ which is bounded away from $1$. 
%	}
	%Therefore, as $d\to\infty$, we have
	%\*[
	%\log\frac{\pi_{0,I_d}(\hat{X})}{\pi_{0,I_d}(X)}+d\log\frac{1-z_{d+1}}{1-\hat{z}_{d+1}}\overset{\Pr}{\to} 0.
	%\]

	\subsubsection{Proof of \cref{temp3}}
First of all, we have
	\*[
\hat{X}_i=\frac{d^{1/2}}{1-\hat{z}_{d+1}}\hat{z}_i,\quad  X_i=\frac{d^{1/2}}{1-z_{d+1}}z_i.
\]
Then we have
\*[
	\sum_i\delta_i\left(\hat{X}_i^2-X_i^2\right)
	&=d\sum_i\delta_i \frac{\hat{z}_i^2(1-z_{d+1})^2-z_i^2(1-\hat{z}_{d+1})^2}{(1-z_{d+1})^2(1-\hat{z}_{d+1})^2} \\
	&=d\sum_i\delta_i \left[\frac{\hat{z}_i^2-z_i^2}{(1-\hat{z}_{d+1})^2}+z_i^2\frac{2(\hat{z}_{d+1}-z_{d+1})-(\hat{z}_{d+1}^2-z_{d+1}^2)}{(1-z_{d+1})^2(1-\hat{z}_{d+1})^2}\right].
	%&=d\sum_i\delta_iz_i^2\left[2(\hat{z}_{d+1}-z_{d+1})+3(\hat{z}_{d+1}^2-z_{d+1}^2)+\bigO_{\Pr}(h^3)\right]\\
	%&\quad+d\sum_i\delta_i(\hat{z}_i^2-z_i^2)(1+\bigO_{\Pr}(d^{-1/2})).
	    \]
		Using Taylor expansion, we approximate $z_i$ and $\hat{z}_i$ using \cref{eq_approx_whole} in \cref{lemma-lati}. First, we can replace $\hat{z}_i$ by $z_i$ :
	\[\label{temp5}
	\hat{z}_i=a_d^{(i)}z_i+a_d^{(i)}\sqrt{1-z_i^2}hU_i-\frac{1}{2}a_d^{(i)}z_ih^2U_i^2+\bigO_{\Pr}(h^3+dh^4z_i),
	\]
	where $a_d^{(i)}=\frac{1}{\sqrt{1+h^2U_{i,\perp}^2}}=\frac{1}{\sqrt{1+h^2(d-1)}}(1+\bigO_{\Pr}(h^2d^{1/2})+ \bigO_{\Pr}(d^{-1/2}))$ and $U_i\sim \mathcal{N}(0,1)$. Then for given $z$, we can get 
	\*[
	\hat{z}_i^2-z_i^2&=((a_d^{(i)})^2-1)z_i^2 +2(a_d^{(i)})^2z_i\sqrt{1-z_i^2}hU_i\\
	&\qquad+(a_d^{(i)})^2(1-2z_i^2)h^2U_i^2+\bigO_{\Pr}(z_ih^3+z_i^2h^4+h^6+d^2h^8z_i^2).
	\]
	Then, we can replace $z_i$ using $z_{d+1}$ and $X_i$ since
	\[\label{temp51}
	z_i=(1-z_{d+1})d^{-1/2}X_i.
	\]
	By \cref{eq_temp33}, for $z_{d+1}$ we have
	\[\label{temp52}
	z_{d+1}=\frac{\sum_i\mu_i^2-\sum_i(1-\lambda_i)}{2d+\sum_i\mu_i^2-\sum_i(1-\lambda_i)}+\bigO_{\Pr}(d^{-1/2}).
	\]
    Then, by \cref{eq_approx_zz_dplus1} we have
    \begin{equation}\label{temp556}
        \begin{split}
            \sum_i\delta_i\left(\hat{X}_i^2-X_i^2\right)
    		%&=d\sum_i\delta_iz_i^2\left[2(\hat{z}_{d+1}-z_{d+1})+3(\hat{z}_{d+1}^2-z_{d+1}^2)+\bigO_{\Pr}(h^3)\right]\\
	%&\qquad+d\sum_i\delta_i(\hat{z}_i^2-z_i^2)(1+\bigO_{\Pr}(d^{-1/2}))\\
	&=\bigO_{\Pr}(dh)\left(\frac{1}{d}\sum_i\delta_iX_i^2\right)+d\sum_i\delta_i\frac{\hat{z}_i^2-z_i^2}{(1-\hat{z}_{d+1})^2}.
        \end{split}
    \end{equation}
    We first focus on the term $d\sum_i\delta_i(\hat{z}_i^2-z_i^2)$. For given $X$, using \cref{temp5}, replacing $z_i$ using \cref{temp51}, simplifying $z_{d+1}$ with \cref{temp52}, we have
    \*[
    d\sum_i\delta_i(\hat{z}_i^2-z_i^2)&=d\sum_i\delta_i[((a_d^{(i)})^2-1)X_i^2/d +d^{-1/2}hX_iU_i\\
    &\qquad+(a_d^{(i)})^2(1-2z_i^2)h^2U_i^2+\bigO_{\Pr}(z_ih^3+h^4+d^2h^8)]
    \]
    Recall that $U_i\sim\mathcal{N}(0,1)$ but for $i\neq j$, $U_i$ and $U_j$ are in general not independent since they are obtained by marginalizing the same multivariate Gaussian to two non-orthogonal directions. However, by basic geometry on the angle between the two marginalization directions corresponding to $U_i$ and $U_j$, respectively, we know $\EE[U_iU_j\mid X]=\bigO(z_iz_j)=\bigO(d^{-1}X_iX_j)+\bigO(d^{-3/2})$. That is, although $U_i$ and $U_j$ are independent if and only if either $z_i=0$ or $z_j=0$, their correlation is quite small. Note that we only consider $h=o(d^{-1/2})$ then  $a_d^{(i)}=a_d(1+\bigO_{\Pr}(h^2d^{1/2}))=a_d(1+o_{\Pr}(d^{-1/2}))$ where $a_d:=\frac{1}{\sqrt{1+h^2(d-1)}}$, we keep the leading terms of $d\sum_i\delta_i(\hat{z}_i^2-z_i^2)$:
    \*[
     &d\sum_i\delta_i(\hat{z}_i^2-z_i^2)=d\sum_i\delta_i(a_d^2-1)X_i^2/d+ d\sum_i\delta_i (d^{-1/2}h)X_iU_i +\bigO(dh^2)\sum_i\delta_i U_i^2 + o_{\Pr}(1)\\
     &=d(a_d^2-1)\left(\frac{1}{d}\sum_i\delta_iX_i^2\right)+\bigO(dh)\left(\frac{1}{d^{1/2}}\sum_i \delta_iX_iU_i\right)+\bigO(d^2h^2)\left(\frac{1}{d}\sum_i\delta_iU_i^2\right)+o_{\Pr}(1)\\
     &=d(a_d^2-1)\left(\frac{1}{d}\sum_i\delta_iX_i^2\right)+\bigO(dh)\bigO_{\Pr}\left(\sqrt{\frac{2}{d^2}\sum_{i\neq j} \delta_i\delta_j X_i^2X_j^2+\frac{1}{d}\sum_i \delta_i^2X_i^2}\right)\\
     &\qquad\qquad+\bigO(d^2h^2)\bigO_{\Pr}\left(\frac{1}{d}\sum_i\delta_i\right)+o_{\Pr}(1)\\
     &=d(a_d^2-1)\left(\frac{1}{d}\sum_i\delta_iX_i^2\right)+\bigO(dh)\left[\bigO_{\Pr}\left(\sqrt{\frac{1}{d}\sum_i \delta_i^2X_i^2}\right)+\bigO_{\Pr}\left(\frac{1}{d}\sum_i\delta_iX_i^2\right)\right]\\
     &\qquad\qquad+\bigO(d^2h^2)\bigO_{\Pr}\left(\frac{1}{d}\sum_i\delta_i\right)+o_{\Pr}(1).
    \]
    Substituting to \cref{temp556}, using $1-a_d^2=\bigO(h^2d)$ and $h=o(d^{-1/2})$, we have
    \[\label{temp557}
    	\sum_i\delta_i\left(\hat{X}_i^2-X_i^2\right)=\bigO_{\Pr}(d^2h^2+dh)\left(\frac{1}{d}\sum_i\delta_iX_i^2+\frac{1}{d}\sum_i\delta_i\right)+\bigO_{\Pr}(dh)\sqrt{\frac{1}{d}\sum_i\delta_i^2X_i^2}+o_{\Pr}(1).
    \]
    Now note that both $\frac{1}{d}\sum_i\delta_iX_i^2$ and $\frac{1}{d}\sum_i\delta_i^2X_i^2$ concentrate to their means. Using $X_i\sim\mathcal{N}(\mu_i,\lambda_i)$, we have
    \*[
    \frac{1}{d}\sum_i\delta_iX_i^2=\frac{1}{d}\sum_i\delta_i (\mu_i^2+\lambda_i) +\bigO_{\Pr}(d^{-1/2}),\quad \frac{1}{d}\sum_i\delta_i^2X_i^2=\frac{1}{d}\sum_i\delta_i^2(\mu_i^2+\lambda_i) +\bigO_{\Pr}(d^{-1/2})
    \]
    Therefore, substituting to \cref{temp557}, in order to guarantee $\sum_i\delta_i(\hat{X}_i^2-X_i^2)\overset{\Pr}{\to} 0$, a sufficient condition for $h$ in terms of $\{\mu_i\}$ and $\{\lambda_i\}$ is
	\*[
	&(d^2h^2+dh)\cdot\left(\frac{1}{d}\sum_i\delta_i(\lambda_i+\mu_i^2)+\frac{1}{d}\sum_i\delta_i\right)+(dh)\cdot \sqrt{\frac{1}{d}\sum_i\delta_i^2(\lambda_i+\mu_i^2)}=o(1).
	\]
	Therefore, it is sufficient to assume
	\[\label{temp6}
	h=o\left(d^{-1}/\sqrt{\max\left\{\left|\frac{1}{d}\sum_i\frac{\delta_i}{1+\delta_i}\right|,\left|\frac{1}{d}\sum_i\delta_i\mu_i^2\right|,\left|\frac{1}{d}\sum_i\frac{\delta_i^2}{1+\delta_i}\right|,\left|\frac{1}{d}\sum_i\delta_i^2\mu_i^2\right|, \left|\frac{1}{d}\sum_i\delta_i\right|\right\}}\wedge d^{-1/2}\right).
	\]
	\subsubsection{Proof of \cref{temp4}}	
We use a similar approach as the previous part. Using Taylor expansion and \cref{temp5} to replace $\hat{z}_i$, we have
\*[
&\hat{X}_i-X_i=\frac{d^{1/2}}{1-\hat{z}_{d+1}}\hat{z}_i-\frac{d^{1/2}}{1-z_{d+1}}z_i\\
&=\frac{d^{1/2}}{1-z_{d+1}}\left(a_d^{(i)}z_i+a_d^{(i)}\sqrt{1-z_i^2}hU_i+\bigO_{\Pr}(h^2)\right)\left(1+\bigO(\hat{z}_{d+1}-z_{d+1})+\bigO_{\Pr}(h^2)\right)-\frac{d^{1/2}}{1-z_{d+1}}z_i.
\]
Substituting to \cref{temp4}, and using \cref{temp51} to replace $z_i$ by $X_i$, $a_d$ to replace $a_d^{(i)}$, keeping the leading terms, we can get
\[\label{temp558}
\sum_i\mu_i(1+\delta_i)(\hat{X}_i-X_i)=(a_d-1)\sum_i\left[\mu_i(1+\delta_i)X_i\right]+a_d h\sum_i \left[\sqrt{d-X_i^2}\mu_i(1+\delta_i)U_i\right] +o_{\Pr}(1).
\]
Now we first focus on the term $(a_d-1)\sum_i\left[\mu_i(1+\delta_i)X_i\right]$ in \cref{temp558}. Using the fact that $a_d-1=\bigO(h^2d)$ and $X_i\sim\mathcal{N}(\mu_i,\lambda_i)$ are independent random variables, we have
\*[
(a_d-1)\sum_i\left[\mu_i(1+\delta_i)X_i\right]=\bigO(h^2d^2)\left[\frac{1}{d}\sum_i\mu_i^2(1+\delta_i) +\frac{1}{d}\bigO_{\Pr}\left(\sqrt{\sum_i\mu_i^2(1+\delta_i)^2\lambda_i}\right)\right].
\]
Therefore, $(a_d-1)\sum_i\left[\mu_i(1+\delta_i)X_i\right]\overset{\Pr}{\to} 0$ if 
$(h^2d^2)\cdot\frac{1}{d}\sum_i\mu_i^2(1+\delta_i)=o(1)$.
Next, we focus on the term $a_d h\sum_i \left[\sqrt{d-X_i^2}\mu_i(1+\delta_i)U_i\right]$ in \cref{temp558}. Using $U_i\sim\mathcal{N}(0,1)$ conditional on $X$ and variance decomposition formula, we have
	\*[
&\var \left[a_d h\sum_i \left(\sqrt{d-X_i^2}\mu_i(1+\delta_i)U_i\right)\right]\\
&=\EE\left[a_d^2h^2\sum_i(d-X_i^2)\mu_i^2(1+\delta_i)^2\right]+\bigO(d^{-1})\left(\sum_ia_dh\sqrt{d}\mu_i(1+\delta_i)\right)^2\\
&=\bigO(d^2h^2)\left[\frac{1}{d}\sum_i\left(1-\frac{\mu_i^2+\lambda_i}{d}\right)\mu_i^2(1+\delta_i)^2+\left(\frac{1}{d}\sum_i \mu_i(1+\delta_i)\right)^2\right].
	\]
	Therefore, $a_d h\sum_i \left(\sqrt{d-X_i^2}\mu_i(1+\delta_i)U_i\right)\overset{\Pr}{\to} 0$ if both $(d^2h^2)\cdot \frac{1}{d}\sum_i\mu_i^2(1+\delta_i)^2 =o(1)$ and $(dh) \left|\frac{1}{d}\sum_i\mu_i(1+\delta_i)\right|=o(1)$.
    Overall, in order to make $\sum_i\mu_i(1+\delta_i)(\hat{X}_i-X_i)$ to converge to $0$ in probability, we need \cref{temp558} to be $o_{\Pr}(1)$. Then, a sufficient condition for $h$ in terms of $\{\mu_i\}$ and $\{\lambda_i\}$ is
	\*[
	d^2h^2\max\left\{\frac{1}{d}\sum_i\mu_i^2(1+\delta_i),\frac{1}{d}\sum_i\mu_i^2(1+\delta_i)^2\right\}\to 0,\quad dh\left|\frac{1}{d}\sum_i\mu_i(1+\delta_i)\right|\to 0.
	\]
	Therefore, it is sufficient if
	\[\label{temp7}
h=o\left(d^{-1}/\sqrt{\max\left\{\left|\frac{1}{d}\sum_i\frac{\delta_i}{1+\delta_i}\right|,\left|\frac{1}{d}\sum_i\delta_i\mu_i^2\right|,\left|\frac{1}{d}\sum_i\delta_i^2\mu_i^2\right|,\left|\frac{1}{d}\sum_i\mu_i^2\right|\right\}}\wedge d^{-1/2}\right).
\]
Overall, combing \cref{temp6} and \cref{temp7}, since we only need the order of $h$, \cref{temp6} can be relaxed to
\*[
h=o\left(d^{-1}/\sqrt{\frac{1}{d}\sum_i |1-\lambda_i|}\wedge d^{-1/2}\right)
\]
and \cref{temp7} can be relaxed to
\*[
h=o\left(d^{-1}/\sqrt{\max\left\{\frac{1}{d}\sum_i |1-\lambda_i|,\frac{1}{d}\sum_i\mu_i^2\right\}}\wedge d^{-1/2}\right).
\]
Combing the above two with $h=o(d^{-\alpha})$ completes the proof.
\end{proof}

\subsection{Proof of \cref{lemma_CLT}}\label{proof-lemma_CLT}
\begin{proof} 
Note that we assumed $h=\bigO(d^{-1})$.
We use $z$ and $\hat{z}$ to denote the corresponding coordinates of $X$ and $\hat{X}$ on the unit sphere. 
We first ``upgrade'' \cref{eq_approx_whole} in \cref{lemma-lati} to a convergence in $L_1$ statement. Define
\*[
\bar{z}_i:=\frac{1}{\sqrt{1+h^2U_{\perp}^2}}\left[\left(1-\frac{1}{2} h^2 U^2\right)z_{i}-\sqrt{1-z_{i}^2}h U\right].
\]  and the facts that $\hat{z}_i$ is bounded and $\bar{z}_i$ has a finite exponential moment, applying Cauchy--Schwarz inequality and Markov inequality, we have
\*[
\EE[|\hat{z}_i-\bar{z}_i|]&\le \EE[\left|\hat{z}_i-\bar{z}_i\Ind_{|\bar{z}_i|\le C\log(d)}\right|]+\EE[\left|\bar{z}_i\Ind_{|\bar{z}_i|> C\log(d)}\right|]\\
&=C\log(d)\bigO(h^3+dh^4z_i)+\sqrt{\EE[\bar{z}_i^2]}\sqrt{\Pr(\exp(|\bar{z}_i|)>d^{C})}\\
&=\bigO(\log(d)h^3)+\bigO(d^{-C/2})=\bigO(\log(d)h^3),
	    \]
where $C$ is chosen as a large enough constant so that $d^{-C/2}=o(\log(d)h^3)$.
Therefore, \cref{eq_approx_whole} in \cref{lemma-lati} can be rewritten as a convergence in $L_1$ statement:
\[\label{eq_approx_whole_L1}
	\EE_{\hat{z}_{i}\mid z_i}\left[\left|\hat{z}_{i}-\left( \frac{1}{\sqrt{1+h^2U_{\perp}^2}}\left[\left(1-\frac{1}{2} h^2 U^2\right)z_{i}-\sqrt{1-z_{i}^2}h U\right]\right)\right|\right]=\bigO(\log(d)h^3).
\]

{Next, we want to rule out some subsets of ``bad states'', including those states with the ``latitude'' $z_{d+1}$ too close to $1$.} That is, we define a sequence of ``typical sets'' $\{F_d\}$ with $\pi(F_d)\to 1$ such that when $x\in F_d$, the corresponding $z_{d+1}$ is bounded away from $1$.
	We define
\*[
	F_d:=&\left\{x\in \Reals^d: \left|\frac{1}{d}\sum_{i=1}^d\left[(\log f(x_i))'\right]^2-\EE_f\left[((\log f)')^2\right]\right|<d^{-1/2}\log(d) \right\}\\
& \cap \left\{x\in \Reals^d: \left|\frac{1}{d}\sum_{i=1}^d\left[(\log f(x_i))''\right]-\EE_f\left[((\log f)'')\right]\right|<d^{-1/2}\log(d)\right\} \\
&\cap \left\{x\in \Reals^d: \left|\frac{1}{d}\sum_{i=1}^d x_i(\log f(x_i))' -\EE_f\left[X(\log f)' \right]\right|<d^{-1/2}\log(d)\right\}\\
&\cap 
\left\{x\in \Reals^d: \left|\frac{1}{d}\sum_{i=1}^d x_i^2-\EE_f(X^2)\right|<d^{-1/2}\log(d) \right\}
	    \]
	Note that by the assumption $\lim_{x\to\pm\infty}xf'(x)=0$, we have
 \*[
 \EE_f\left[X(\log f)'\right]=\int \left[\frac{x}{f(x)}f'(x)\right]f(x)\dee x=\int xf'(x) \dee x= 0-\int f(x)\dee x=-1.
 \] 
Therefore our assumptions on $\pi$ imply 
\*[
\pi(F_d^c)=&\Pr_{\pi}(X\notin F_d)=\bigO\left(
\left(\frac{d^{1/2}}{\log(d)d^{1/2}}\right)^4+\left(\frac{d^{1/2}}{\log(d)d^{1/2}}\right)^3\right)=o(1).
\]
Using the definition of $\{F_d\}$, we can assume $X\in F_d$ and only consider the expectation over $\hat{X}$ given $X$. By the definition of $\{F_d\}$, we know $z_{d+1}$ is bounded away from $1$. 

{Next, we want to rule out the subset of ``bad proposals'' where the proposal ``latitude'' $\hat{z}_{d+1}$ is too close to $1$.}
Under stationarity, we know
	$
	R^2\frac{1+z_{d+1}}{1-z_{d+1}}=\lambda d\frac{1+z_{d+1}}{1-z_{d+1}}=d+\bigO_{\Pr}(d^{1/2})$,
which implies
	$z_{d+1}= \frac{1-\lambda}{1+\lambda}+\bigO_{\Pr}(d^{-1/2})$ and $\frac{1}{1-z_{d+1}}=\frac{1+\lambda}{2\lambda}+\bigO_{\Pr}(d^{-1/2})$. Furthermore, for given $X\in F_d$, since we know
	\*[
	\hat{z}_{d+1}=z_{d+1}+\bigO_{\Pr}(h),
	\]
	uniformly on $X\in F_d$, for all large enough $d$, we have 
	\*[
	z_{d+1}+h\log(d)=\frac{1-\lambda}{1+\lambda}+\bigO(\log(d)d^{-1/2})+h\log(d)<1-\epsilon,
	\] 
	where $\frac{\lambda}{1+\lambda}<\epsilon<1$ is a fixed constant. Furthermore, we have
\*[
\sup_{X\in F_d}\Pr(\hat{z}_{d+1}> 1-\epsilon)\le \sup_{X\in F_d}\Pr(\hat{z}_{d+1}> 	z_{d+1}+h\log(d))=o(d^{-1/2}).
\]
Therefore, we only need consider the cases such that $\hat{z}_{d+1}$ is no smaller than $1-\epsilon$ since
\*[
		&\sup_{X\in F_d}\EE_{\hat{X}\mid X}\left[\left|1\wedge \frac{\pi(\hat{X})(R^2+\|\hat{X}\|^2)^d}{\pi(X)(R^2+\|X\|^2)^d}-1\wedge \exp(W_{\hat{X}\mid X})\right|\right]\\
		&\le\sup_{X\in F_d}\EE_{\hat{X}\mid X}\left[\left|1\wedge\left( \frac{\pi(\hat{X})(R^2+\|\hat{X}\|^2)^d}{\pi(X)(R^2+\|X\|^2)^d}\Ind_{\{\hat{z}_{d+1}\le 1-\epsilon \}}\right)-1\wedge \exp(W_{\hat{X}\mid X})\right|\right]\\
		&\qquad+\sup_{X\in F_d}\Pr(\hat{z}_{d+1}>1-\epsilon)\\
		&\le\sup_{X\in F_d}\EE_{\hat{X}\mid X}\left[\left|\log \frac{\pi(\hat{X})(R^2+\|\hat{X}\|^2)^d}{\pi(X)(R^2+\|X\|^2)^d}\Ind_{\{\hat{z}_{d+1}\le 1-\epsilon\}}- W_{\hat{X}\mid X}\right| \right]+o(d^{-1/2}),
	    \]
	where the last step is because $1\wedge e^x$ is a $1$-Lipschitz function.
	
	In the rest of the proof, all we will use for approximating $\hat{z}_{d+1}$ is \cref{eq_approx_whole_L1}. Since  $\frac{\lambda}{1+\lambda}<\epsilon<1$, \cref{eq_approx_whole_L1} is also true if we replace $\hat{z}_{d+1}$ by $\hat{z}_{d+1}\Ind_{\{\hat{z}_{d+1}\le 1-\epsilon\}}$. That is, for truncated $\hat{z}_{d+1}$:
	\[
	\label{eq_approx_whole_L1_latitude}
	\sup_{X\in F_d}\EE_{\hat{z}_{d+1}\mid z_{d+1}}\left[\left|\hat{z}_{d+1}\Ind_{\{\hat{z}_{d+1}\le 1-\epsilon\}}-\bar{z}_{d+1}\right|\right]=\bigO(\log(d)h^3).
\]
	Therefore, in the rest of the proof, we simply work on the cases such that $\hat{z}_{d+1}\le 1-\epsilon$ and sometimes omit the term $\Ind_{\{\hat{z}_{d+1}\le 1-\epsilon)\}}$ for simplicity. The goal is then to consider $X\in F_d$ and establish a Gaussian approximation result for
\*[
\left(\sum_{i=1}^d [\log f(\hat{X}_i)-\log f(X_i)] +d [\log(1-z_{d+1})-\log(1-\hat{z}_{d+1})]\right)\Ind_{\{\hat{z}_{d+1}<1-\epsilon\}}.
\]
Note that by \cref{eq_approx_whole_L1}, we know $\EE[|\hat{z}_{i}-z_{i}|^3]=\bigO(h^3)$, $\EE[|\hat{z}_{i}-z_{i}|(\hat{z}_{d+1}-z_{d+1})^2]=\bigO(h^3)$, $\EE[|\hat{z}_{d+1}-z_{d+1}|(\hat{z}_{i}-z_{i})^2]=\bigO(h^3)$. Furthermore, we only need to consider $\frac{1}{1-\hat{z}_{d+1}}$ is bounded as $\hat{z}_{d+1}$ is bounded away from $1$. Then, using 
\begin{equation}\label{eq_approx_hat_X_i}
    \begin{split}
        &\hat{X}_i-X_i=\frac{R}{1-\hat{z}_{d+1}}\hat{z}_i-\frac{R}{1-z_{d+1}}z_i\\
&= \frac{R}{(1-z_{d+1})(1-\hat{z}_{d+1})}(\hat{z}_{d+1}-z_{d+1})z_i+\left(\frac{R(\hat{z}_{d+1}-z_{d+1})}{(1-z_{d+1})(1-\hat{z}_{d+1})}+\frac{R}{1-z_{d+1}}\right)(\hat{z}_i-z_i),
    \end{split}
\end{equation}
together with Taylor expansion and mean value theorem, uniformly on $\forall X\in F_d$, we have
	\*[
	&\EE_{\hat{X}\mid X}\left[\left|\sum_{i=1}^d\log \frac{f(\hat{X}_i)}{f(X_i)}-d \log\frac{1-\hat{z}_{d+1}}{1-z_{d+1}}-(\textrm{Term I}) - (\textrm{Term II}) \right|\cdot\Ind_{\{\hat{z}_{d+1}\le 1-\epsilon\}}\right]=\bigO(dh^3),
	\]
where	
    \*[
	\textrm{Term I} &:= \sum_i\left[(\log f(X_i))' (\hat{X}_i-X_i)\right]+\frac{d}{1-z_{d+1}}(\hat{z}_{d+1}-z_{d+1}),\\ 
	\textrm{Term II}&:=\frac{1}{2}\left\{\sum_i\left[(\log f(X_i))''(\hat{X}_i-X_i)^2\right]+\frac{d}{(1-z_{d+1})^2}(\hat{z}_{d+1}-z_{d+1})^2\right\}.
	\]
We will see that, similar to the CLT arguments used for analyzing RWM \citeps{Roberts1997,Yang2020optimal} for general targets, the mean of the Gaussian approximation result is determined by the sum of the mean of Term I and the mean of Term II whereas the variance of the Gaussian approximation result is determined by the variance of Term I only. {In the rest of the proof, we first approximate Terms I and II separately, then combine them in the end.}
\subsubsection{Term I}
Using \cref{eq_approx_hat_X_i} and \cref{eq_approx_whole_L1}, with the definition of $F_d$, we can get the following approximation of $\hat{X}_i-X_i$:
	\*[
	\sup_{X\in F_d}\EE_{\hat{X}\mid X}\left[\left|\hat{X}_i-X_i- \frac{1+\lambda}{2\lambda}(\hat{z}_{d+1}-z_{d+1})X_i+\frac{R}{1-z_{d+1}}(\hat{z}_i-z_i)\right|\cdot \Ind_{\{\hat{z}_{d+1}\le 1-\epsilon\}}\right]=\bigO(d^{1/2}h^2).
	\]
Next, let $\hat{z}-z=(\hat{z}_1-z_1,\dots,\hat{z}_{d+1}-z_{d+1})^T$, we can approximate the ``Term I'' by
	\*[
		\sup_{X\in F_d}\EE_{\hat{X}\mid X}\left[\left|(\textrm{Term I}) -\frac{R}{1-z_{d+1}}\cdot (\textrm{Inner Product Term}) \right|\cdot\Ind_{\{\hat{z}_{d+1}\le 1-\epsilon\}}\right] =o(d^{-1/2}),
	\]
	where the ``Inner Product Term'' represents
	\*[
	 \left((\log f(X_1))',\dots,(\log f(X_d))',\sum_{i=1}^d \left(\frac{1+\lambda}{2\lambda}(\log f(X_i))'z_i+\frac{1}{R}\right)\right) \cdot (\hat{z}-z).
	\]
The key observation is that: we can see the inner product term as the ``projection'' of $\hat{z}-z$ to the particular ``direction'' defined by
\*[
\tilde{v}:= \left((\log f(X_1))',\dots,(\log f(X_d))',\sum_{i=1}^d \left({\frac{1+\lambda}{2\lambda}}(\log f(X_i))'z_i+\frac{1}{R}\right)\right).
\]
This observation allows us to just study the distribution of Term I through projections of the current locations and the proposal for this particular direction. Note that the approximation results of \cref{eq_approx_whole_L1} and its proof hold not only for all coordinates but also for projections to any direction due to the symmetry of the sphere.

The ``projection'' of $z$ onto $\tilde{v}$ is
\*[
\tilde{z}:=\frac{\tilde{v}\cdot z}{\|\tilde{v}\|}&=\frac{\left[\sum_{i=1}^d\left((\log f(X_i))'z_i\right)\right]+z_{d+1}\left[\sum_{i=1}^d \left({\frac{1+\lambda}{2\lambda}}(\log f(X_i))'z_i+\frac{1}{R}\right)\right]}{\sqrt{\sum_{i=1}^d \left((\log f(X_i))'\right)^2+\left[\sum_{i=1}^d \left({\frac{1+\lambda}{2\lambda}}(\log f(X_i))'z_i+\frac{1}{R}\right)\right]^2}}\\
&=\frac{({\frac{2\lambda}{1+\lambda}}+z_{d+1})\left[\sum_{i=1}^d \left({\frac{1+\lambda}{2\lambda}}(\log f(X_i)'z_i+\frac{1}{R}\right)\right]-{\frac{2\lambda}{1+\lambda}}\frac{d}{R}}{\sqrt{\sum_{i=1}^d \left((\log f(X_i))'\right)^2+\left[\sum_{i=1}^d \left({\frac{1+\lambda}{2\lambda}}(\log f(X_i))'z_i+\frac{1}{R}\right)\right]^2}}\in [-1,1]
    \]
{Then we can write
\*[
	\|\tilde{v}\|^2-|\tilde{v}\cdot z|^2&=\sum_{i=1}^d \left((\log f(X_i))'\right)^2+\left[\sum_{i=1}^d \left({\frac{1+\lambda}{2\lambda}}(\log f(X_i))'z_i+\frac{1}{R}\right)\right]^2\\
	&-\left({\frac{2\lambda}{1+\lambda}}+z_{d+1}\right)^2\left[\sum_{i=1}^d \left({\frac{1+\lambda}{2\lambda}}(\log f(X_i))'z_i+\frac{1}{R}\right)\right]^2-{\left(\frac{2\lambda}{1+\lambda}\right)^2}\frac{d^2}{R^2}\\
	&\quad+{\frac{2\lambda}{1+\lambda}}\frac{2d\left({\frac{2\lambda}{1+\lambda}}+z_{d+1}\right)}{R}\left[\sum_{i=1}^d \left({\frac{1+\lambda}{2\lambda}}(\log f(X_i))'z_i+\frac{1}{R}\right)\right].
	    \]
	Note that by the definition of $F_d$, we know
 $\sum_{i=1}^d(\log f(X_i))'X_i+d=o(d)$ and ${\frac{2\lambda}{1+\lambda}}+z_{d+1}=1+\bigO(d^{-1/3})$. Then, we have
	\*[
	\sup_{X\in F_d}\EE_{\hat{X}\mid X}\left[\left|\left(\|\tilde{v}\|^2-|\tilde{v}\cdot z|^2\right)- (\textrm{Term III}) \right|\cdot\Ind_{\{\hat{z}_{d+1}\le 1-\epsilon\}}\right]=o(d^{1/2}),
	\]
	where the ``Term III'' is defined by
	\*[
	\textrm{Term III}:=
	\sum_{i=1}^d \left((\log f(X_i))'\right)^2+{\frac{2\lambda}{1+\lambda}}\frac{2d}{R}\sum_{i=1}^d \left({\frac{1+\lambda}{2\lambda}}(\log f(X_i))'z_i\right)+{\left[\frac{4\lambda}{1+\lambda}-\left(\frac{2\lambda}{1+\lambda}\right)^2\right]}\frac{d^2}{R^2}.
	\]
}

Let $\tilde{z}'$ be the ``projection'' of $\hat{z}$ onto $\tilde{v}$. Then, the inner product term can be written as
$\tilde{v}\cdot(\hat{z}-z)=\|\tilde{v}\|(\tilde{z}'-\tilde{z})$. Next, we can approximate the ``projection'' of $\hat{z}$ onto $\tilde{v}$ using the equivalent result of \cref{eq_approx_whole_L1} for $\tilde{z}$
\*[
\sup_{X\in F_d}\EE_{\hat{X}\mid X}\left[\left|\tilde{z}'- \tilde{a}_d\left(\tilde{z}-\sqrt{1-\tilde{z}^2}h \tilde{U}\right)\right|\right]=\bigO(h^2),
\]
where $\tilde{U}$ is a standard Gaussian and $\tilde{a}_d:=\frac{1}{\sqrt{1+h^2\tilde{U}_{\perp}^2}}=a_d(1+\bigO_{\Pr}(h^2d^{1/2}))=a_d(1+\bigO_{\Pr}(d^{-3/2}))$ where $a_d:= \frac{1}{\sqrt{1+(d-1)h^2}}$. Note that we can also ``upgrade'' the relation between $\tilde{a}_d$ and $a_d$ to $L_1$ statement: $\EE[|\tilde{a}_d-a_d|]=\bigO(d^{-3/2}\log(d))$. Then by triangle inequality
\*[
\sup_{X\in F_d}\EE_{\hat{X}\mid X}\left[\left|\tilde{z}'- a_d\left(\tilde{z}-\sqrt{1-\tilde{z}^2}h \tilde{U}\right)\right|\right]=\bigO(h^2+d^{-3/2}\log(d)).
\]
Therefore, we have
\*[
\sup_{X\in F_d}\EE_{\hat{X}\mid X}\left[\left|\|\tilde{v}\|(\tilde{z}'-\tilde{z})-W_1\right|\cdot\Ind_{\{\hat{z}_{d+1}\le 1-\epsilon\}}\right]=\bigO(d^{-1/2}\log(d)),
\]
where $W_1\sim\mathcal{N}(\mu_1,\sigma_1^2)$ with
$\mu_1:=\left(a_d-1\right)\|\tilde{v}\|\tilde{z}$ and
$\sigma_1^2:=(1-\tilde{z}^2)\|\tilde{v}\|^2a_d^2h^2$. Substituting $1-\tilde{z}^2=\frac{\|\tilde{v}\|^2-|\tilde{v}\cdot z|^2}{\|\tilde{v}\|^2}$, using the definition of $F_d$, we can compute the mean and variance of $W_1$ as
	{	\*[
		\mu_1&= (a_d-1)\frac{1+\lambda}{2\lambda}\left[\sum_i(\log f(X_i))'X_i +\frac{1-\lambda}{1+\lambda}d\right]+\bigO(d^{-1/2}\log(d))\\
		&= (a_d-1)\frac{1+\lambda}{2\lambda}\left[d\EE[(\log f(X_1))'X_1] +\frac{1-\lambda}{1+\lambda}d\right]+\bigO(d^{-1/2}\log(d)), \\
		\sigma_1^2&= \frac{(1+\lambda)^2}{4\lambda}da_d^2h^2\left[\sum_i \left((\log f(X_i))'\right)^2+\frac{4}{1+\lambda}\sum_i (\log f(X_i))'X_i+\frac{4d}{(1+\lambda)^2} \right]\\
		&\qquad+\bigO(d^{-1/2}\log(d))\\
		&= \frac{(1+\lambda)^2}{4\lambda}da_d^2h^2\left[d\EE\left[\left((\log f(X_1))'\right)^2\right]+\frac{4d}{1+\lambda}\EE\left[(\log f(X_1))'X_1\right]+\frac{4d}{(1+\lambda)^2} \right]\\
		&\qquad+\bigO(d^{-1/2}\log(d)).
		    \]
	}

\subsubsection{Term II} Next, we approximate Term II in the Taylor expansion. Clearly, the variance of Term II can be ignored since it is of the order $o(d^{-1})$. We only need to focus on the expectation. We start with computing
\*[
\sup_{X\in F_d}\EE_{\hat{X}\mid X}\left[(\log f(X_i))''(\hat{X}_i-X_i)^2\cdot \Ind_{\{\hat{z}_{d+1}\le 1-\epsilon\}}\right].
\]
Note that
\*[
\sup_{X\in F_d}\EE_{\hat{X}\mid X}\left[\left|(\hat{X}_i-X_i)^2-(\textrm{Term IV})\right|\cdot \Ind_{\{\hat{z}_{d+1}\le 1-\epsilon\}}\right]=o(d^{1/2}h^2),
\]
where the ``Term IV'' is defined as
\*[
{\left(\frac{1+\lambda}{2\lambda}\right)^2}(\hat{z}_{d+1}-z_{d+1})^2X_i^2 +\frac{R^2}{(1-z_{d+1})^2}(\hat{z}_i-z_i)^2+{\frac{1+\lambda}{2\lambda}}\frac{2RX_i}{1-z_{d+1}}(\hat{z}_{d+1}-z_{d+1})(\hat{z}_i-z_i).
\]
Taking expectation over $\hat{X}$ given $X$ and using
\*[
\EE_{\hat{X}\,|\,X}[\hat{z}_i-z_i]&= \left(a_d-1\right)z_i+o(d^{-1})\\
\EE_{\hat{X}\,|\,X}[(\hat{z}_i-z_i)^2]&=a_d^2h^2 + \left[\left(a_d-1\right)^2 - a_d^2h^2\right]z_i^2+o(d^{-1}),
\]
substituting to the ``Term IV'', then uniformly over $X\in F_d$, we have
{
	\*[
	\EE_{\hat{X}\,|\,X}[(\hat{X}_i-X_i)^2]&=\left(\frac{1+\lambda}{2\lambda}\right)^2X_i^2\left[a_d^2h^2(1-z_{d+1}^2)+(1-a_d)^2z_{d+1}^2\right]\\
	&+\frac{R^2}{(1-z_{d+1})^2}[a_d^2h^2(1-z_i^2)+(1-a_d)^2z_i^2]\\
	&+\frac{1+\lambda}{2\lambda}\frac{2RX_i}{1-z_{d+1}}(1-a_d)^2z_iz_{d+1}+o(d^{-1})+o(d^{1/2}h^2).
	    \]
}
%Note that using $z_{d+1}=\frac{\|X\|^2-d}{\|X\|^2+d}=\frac{1-\lambda}{1+\lambda}+\bigO_{\Pr}(d^{-1/2})$, we have $\sum_i X_i^2=d/\lambda+\bigO_{\Pr}(d^{1/2})$. 
This implies that, uniformly over $X\in F_d$, we have
	\*[
	&\sum_i\EE_{\hat{X}\mid X}[(\hat{X}_i-X_i)^2]= \left(\frac{1+\lambda}{2\lambda}\right)^2\frac{d}{\lambda}\left[a_d^2h^2(1-z_{d+1}^2)+(1-a_d)^2z_{d+1}^2\right]\\
	&\qquad +\frac{R^2}{(1-z_{d+1})^2}da_d^2h^2-da_d^2h^2+(1-a_d)^2d +\frac{1+\lambda}{2\lambda}\frac{d}{\lambda}(1-a_d)^2z_{d+1}+\bigO(d^{-1/2})\\
	&\quad=\frac{R^2}{(1-z_{d+1})^2}da_d^2h^2+\bigO(d^{-1/2})= (1-a_d^2)d\frac{(1+\lambda)^2}{4\lambda}+\bigO(d^{-1/2}).
	    \]
Therefore, uniformly on $F_d$, we have
{
	\*[
	&\sum_{i=1}^d\EE_{\hat{X}\,|\,X}\left[(\log f(X_i))''(\hat{X}_i-X_i)^2\right]+\frac{d}{(1-z_{d+1})^2}\EE_{\hat{X}\,|\,X}[(\hat{z}_{d+1}-z_{d+1})^2]\\
	&=\frac{R^2}{(1-z_{d+1})^2}a_d^2h^2\left[\sum_{i=1}^d  (\log f(X_i))''-{\left[1-\left(\frac{1+\lambda}{2\lambda}\right)^2(1-z_{d+1}^2)\right]}\sum_{i=1}^d\left((\log f(X_i))'' z_i^2\right)\right]\\
	&\quad + \frac{R^2}{(1-z_{d+1})^2}\left(1-a_d\right)^2{\left[1+\left(\frac{1+\lambda}{2\lambda}\right)^2z_{d+1}^2+\frac{1+\lambda}{\lambda}z_{d+1}\right]}\sum_{i=1}^d\left((\log f(X_i))'' z_i^2\right)\\
	&\quad+{\frac{1+z_{d+1}}{1-z_{d+1}}}d a_d^2(h)h^2+{\frac{d(1-a_d)^2z_{d+1}^2}{(1-z_{d+1})^2}}+\bigO(d^{-1/2}).
	    \]
}
Finally, using the definition of $F_d$, we can have
{	\*[
	&2\cdot\textrm{Term II}+\bigO(d^{-1/2}\log(d))\\
	&=\frac{(1+\lambda)^2}{4\lambda}da_d^2h^2\sum_i \EE\left[(\log f(X_i))''\right]- a_d^2h^2\left(1-\frac{1}{\lambda}\right)\sum_i \EE\left[(\log f(X_i))''X_i^2\right]\\
	&\quad +\frac{(1+\lambda)^2}{4\lambda}(1-a_d)^2\sum_i \EE \left[(\log f(X_i))''X_i^2\right]+\frac{1}{\lambda}da_d^2h^2+\frac{(1-\lambda)^2}{4\lambda^2}(1-a_d)^2d.
	    \]
}
\subsubsection{Combing Term I and Term II}
Finally, we combine Term I and Term II. Using the assumption $\lim_{x\to\pm\infty}xf'(x)=0$ which implies $\lim_{x\to\pm\infty}f'(x)=0$, one can show
\*[
&\EE_f[(\log f)''+((\log f)')^2]=\int\left[\frac{f(x)f''(x)-(f'(x))^2}{f(x)^2}+\left(\frac{1}{f(x)}f'(x)\right)^2\right]f(x)\dee x=\int f''(x)\dee x=0,\\
&\EE_f\left[X(\log f)'\right]=\int \left[\frac{x}{f(x)}f'(x)\right]f(x)\dee x=\int xf'(x) \dee x= 0-\int f(x)\dee x=-1.
    \]
Using the above results and $(d-1)a_d^2h^2= 1-a_d^2$, we have
	\*[
	&\mu_1= -(1-a_d)\frac{1+\lambda}{2\lambda}d\EE_f[X(\log f)'] - (1-a_d)\frac{1-\lambda}{2\lambda}d+\bigO(d^{-1/2}\log(d)),\\
	&\EE[\textrm{Term II}]= (1-a_d) \frac{(1+\lambda)^2}{4\lambda}d\EE_f[(\log f)'']+\bigO(d^{-1/2}\log(d)),\\
	&\frac{\sigma_1^2}{2}= (1-a_d)\frac{(1+\lambda)^2}{4\lambda}d\EE_f[((\log f)')^2]
	+(1-a_d)\frac{1+\lambda}{\lambda}d\EE_f[X(\log f)']\\
	&\qquad+(1-a_d)\frac{d}{\lambda}+\bigO(d^{-1/2}\log(d)).
	    \]
	Therefore, the mean and variance of the Gaussian distribution satisfy
	{\*[
	\mu&=\mu_1+\EE[\textrm{Term II}]+\bigO(d^{-1/2}\log(d))=d(1-a_d)\frac{(1+\lambda)^2}{4\lambda}\left\{ \frac{4\lambda}{(1+\lambda)^2}+\EE\left[\frac{\partial^2\log\pi}{\partial x^2}\right]\right\},\\
	\frac{\sigma^2}{2}&=\frac{\sigma_1^2}{2}+\bigO(d^{-1/2}\log(d))=d(1-a_d)\frac{(1+\lambda)^2}{4\lambda}\left\{ \EE\left[\left(\frac{\partial\log\pi}{\partial x}\right)^2\right]-\frac{4\lambda}{(1+\lambda)^2}\right\}.
	    \]
	}
	Using $1-a_d=\frac{\ell^2}{2d}{\frac{4\lambda}{(1+\lambda)^2}}$, we have
	\*[
	\mu&= \frac{\ell^2}{2}\left\{\frac{4\lambda}{(1+\lambda)^2}-\EE_f\left[\left((\log f)'\right)^2\right]\right\},\quad
	\sigma^2= \ell^2\left\{\EE_f\left[\left((\log f)'\right)^2\right]-\frac{4\lambda}{(1+\lambda)^2}\right\}.
	\]
	Therefore, we have shown a Gaussian random variable $W_{\hat{X}\mid X}\sim \mathcal{N}(\mu,\sigma^2)$ which satisfies
	\*[
	\sup_{X\in F_d}\EE_{\hat{X}\mid X}\left[\left( (\textrm{Term I} + \textrm{Term II})\cdot \Ind_{\{\hat{z}_{d+1}\le 1-\epsilon\}} - W_{\hat{X}\mid X}\right)^2\right]=\bigO(d^{-1/2}\log(d)).
	\]
	Combining everything together, we have shown
	\*[
	&\sup_{X\in F_d}\EE_{\hat{X}\mid X}\left[\left|\log \frac{\pi(\hat{X})(R^2+\|\hat{X}\|^2)^d}{\pi(X)(R^2+\|X\|^2)^d}\Ind_{\{\hat{z}_{d+1}\le 1-\epsilon\}}- W_{\hat{X}\mid X}\right| \right]\\
	&=\bigO(\sqrt{d^{-1/2}\log(d)})=o(d^{-1/4}\log(d)),
	\]
	which completes the proof.
\end{proof}

\subsection{Proof of \cref{thm_ESJD}}\label{proof-thm_ESJD}

\begin{proof}
By the i.i.d.~assumption, it suffices to consider the first coordinate as
\*[
\ESJD=	d\cdot\EE_{X\sim\pi}\EE_{\hat{X}\,|\,X}\left[(\hat{X}_1-X_1)^2\left(1\wedge 	\frac{\pi(\hat{X})(R^2+\|\hat{X}\|^2)^d}{\pi(X)(R^2+\|X\|^2)^d}\right)\right].
\]
We first follow a similar approach as the proof of \cref{lemma_CLT}. 
We define a sequence of ``typical sets'' $\{F_d\}$ such that $\pi(F_d)\to 1$. Define the sequence of sets $\{F_d\}$ by
\begin{equation}\label{def_Fd}
	    \begin{split}
F_d:=&\left\{x\in \Reals^d: \left|\frac{1}{d-1}\sum_{i=2}^d\left[(\log f(x_i))'\right]^2-\EE_f\left[((\log f)')^2\right]\right|<d^{-1/8} \right\}\\
& \cap \left\{x\in \Reals^d: \left|\frac{1}{d-1}\sum_{i=2}^d\left[(\log f(x_i))''\right]-\EE_f\left[((\log f)'')\right]\right|<d^{-1/8}\right\} \\
&\cap \left\{x\in \Reals^d: \left|\frac{1}{d-1}\sum_{i=2}^dx_i(\log f(x_i))' -\EE_f\left[X(\log f)' \right]\right|<d^{-1/8}\right\}\\
&\cap 
\left\{x\in \Reals^d: \left|\frac{1}{d}\sum_{i=1}^d x_i^2-\EE_f(X^2)\right|<d^{-1/6}\log(d) \right\}\\
&\cap \left\{x\in \Reals^d: |x_1|<d^{1/5}\right\}.
    \end{split}
	\end{equation}
Following the arguments in \cref{proof-lemma_CLT}, there exists a constant $0<\epsilon<1$ such that $\sup_{X\in F_d}\Pr(\hat{z}_{d+1}>1-\epsilon)=o(1)$. We define
\*[
\bar{F}_d:=\{x\in \Reals^d: z_{d+1}\le 1-\epsilon\}.
\]
We will rule out the ``bad'' proposals such that $\hat{X}\in \bar{F}_d^c$. 
By Cauchy--Schwarz inequality
\*[
	&d\cdot\EE_{X\sim\pi}\EE_{\hat{X}\,|\,X}\left[(\hat{X}_1-X_1)^2\Ind_{\{X\notin F_d\}\cup\{\hat{X}\notin\bar{F}_d\}}\right]\\
	&\le d \sqrt{\EE[(\hat{X}_1-X_1)^4]}\sqrt{\Pr(\{X\notin F_d\}\cup \{\hat{X}\notin \bar{F}_d\})}=o(d^2h^2)=o(1).
 \]
Therefore, we have 
\begin{equation}\label{temp222}
    \begin{split}
&\frac{\ESJD}{d}=\EE\left\{(\hat{X}_1-X_1)^2\Ind_{\{X\in F_d,\hat{X}\in\bar{F}_d\}}\left(1\wedge\exp\left[\log\frac{\pi(\hat{X})}{\pi(X)}+\log\frac{(R^2+\|\hat{X}\|^2)^d}{(R^2+\|X\|^2)^d}\right]\right)\right\}+o(d^{-1})\\
&\le \sup_{x\in F_d}\EE_{\hat{X}\mid x}\left\{(\hat{X}_1-x_1)^2\left(1\wedge\exp\left[\log\frac{\pi(\hat{X})}{\pi(x)}+\log\frac{(R^2+\|\hat{X}\|^2)^d}{(R^2+\|x\|^2)^d}\right]\right)\Ind_{\{\hat{X}\in \bar{F}_d\}}\right\}+o(d^{-1})
    \end{split}
\end{equation}

{In order to remove the dependence on $\hat{X}_1$ from $\|\hat{X}\|^2$, we construct a coupling $\tilde{X}\overset{d}{=}\hat{X}$} such that only $\hat{X}_{2:d}=\tilde{X}_{2:d}$ and $\tilde{X}_1$ is identical distributed as $\hat{X}_1$. Furthermore, $\tilde{X}_1$ is independent with $\hat{X}_1$ conditional on $\hat{X}_{2:d}$:
\*[
\tilde{X}:=(\tilde{X}_1,\dots,\tilde{X}_d),\quad \tilde{X}_{2:d}=\hat{X}_{2:d},\quad \left(\tilde{X}_1\perp \!\!\! \perp \hat{X}_1\mid \hat{X}_{2:d}\right),\quad \tilde{X}_1\overset{d}{=}\hat{X}_1.
\]
Using a similar argument, we can replace $\Ind_{\{\hat{X}\in \bar{F}_d\}}$ by $\Ind_{\{\hat{X}\in\bar{F}_d, \tilde{X}\in\bar{F}_d\}}$ in the first term of the r.h.s.~of \cref{temp222}, which gives
\begin{equation}
    \label{temp345}
    \begin{split}
&\sup_{x\in F_d}\EE_{(\hat{X},\tilde{X})\mid x}\left\{ (\hat{X}_1-x_1)^2\Ind_{\{\hat{X}\in\bar{F}_d, \tilde{X}\in\bar{F}_d\}}\right.\\
&\qquad\left.\cdot\left[1\wedge\exp\left(\log\frac{f(\hat{X}_1)}{f(x_1)}+\log\frac{(R^2+\|\hat{X}\|^2)^d}{(R^2+\|\tilde{X}\|^2)^d}{+\sum_{i=2}^d\log\frac{f(\hat{X}_i)}{f(x_i)}+\log\frac{(R^2+\|\tilde{X}\|^2)^d}{(R^2+\|x\|^2)^d}}\right)\right]\right\}.     
    \end{split}
\end{equation}
Using the fact that $1\wedge \exp(\cdot)$ is $1$-Lipschitz, for two random variables $A$ and $B$, if $\sup_{x\in F_d}\EE[(\hat{X_1}-x_1)^2|A-B|\Ind_{\{\hat{X}\in\bar{F}_d, \tilde{X}\in\bar{F}_d\}}]=o(d^{-1})$, then
\*[
&\sup_{x\in F_d}\EE\left\{(\hat{X}_1-x_1)^2\Ind_{\{\hat{X}\in\bar{F}_d, \tilde{X}\in\bar{F}_d\}}\left|[1\wedge \exp(A)]-[1\wedge \exp(B)]\right|\right\}\\
&\le
\sup_{x\in F_d}\EE\left\{(\hat{X}_1-x_1)^2\left|A-B\right|\Ind_{\{\hat{X}\in\bar{F}_d, \tilde{X}\in\bar{F}_d\}}\right\}=o(d^{-1}).
 \]
Alternatively, if $\sup_{x\in F_d}\EE[(A-B)^2\Ind_{\{\hat{X}\in\bar{F}_d, \tilde{X}\in\bar{F}_d\}}]=o(1)$, then by Cauchy--Schwarz inequality
\*[
&\sup_{x\in F_d}\EE\left\{(\hat{X}_1-x_1)^2\left|A-B\right|\Ind_{\{\hat{X}\in\bar{F}_d, \tilde{X}\in\bar{F}_d\}}\right\}\\
&\le\sup_{x\in F_d}\sqrt{\EE\left[(\hat{X}_1-x_1)^4\right]}\sqrt{\EE[(A-B)^2\Ind_{\{\hat{X}\in\bar{F}_d, \tilde{X}\in\bar{F}_d\}}]} =o(dh^2)=o(d^{-1}).
	    \]

We next use the above approximations several times to approximate \cref{temp345}. {We first approximate $\log\frac{(R^2+\|\hat{X}\|^2)^d}{(R^2+\|\tilde{X}\|^2)^d}$ by $\frac{\hat{X}_1^2-\EE[\tilde{X}_1^2]}{2}$}, since by Taylor expansion and mean value theorem, using the fact $R^2=d$, one can show
\*[
\sup_{x\in F_d}\EE\left[(\hat{X}_1-x_1)^2 \left|\log\frac{(R^2+\|\hat{X}\|^2)^d}{(R^2+\|\tilde{X}\|^2)^d}- \frac{\hat{X}_1^2-\tilde{X}_1^2}{2}\right|\Ind_{\{\hat{X}\in\bar{F}_d, \tilde{X}\in\bar{F}_d\}}\right]=\bigO(d^{-1/2}dh^2),
\]
and
\*[
\sup_{x\in F_d}\EE\left[(\hat{X}_1-x_1)^2 \left| \frac{\tilde{X}_1^2-\EE[\tilde{X}_1^2]}{2}\right|\Ind_{\{\hat{X}\in\bar{F}_d, \tilde{X}\in\bar{F}_d\}}\right]=\bigO(d^{-1/2}dh^2).
\]
Next, we consider the conditional expectation over $\tilde{X}$ of the term $\sum_{i=2}^d\log\left(\frac{f(\tilde{X}_i)}{f(x_i)}\right)$. Since $\sup_{x\in F_d}\EE[|\sum_{i=2}^d (\tilde{X}_i-x_i)^3|^2\Ind_{\{\tilde{X}\in\bar{F}_d\}}]=o(d^{-1})$, we can {approximate $\sum_{i=2}^d\log\left(\frac{f(\tilde{X}_i)}{f(x_i)}\right)$ by the first two terms of its Taylor expansion}. Using Taylor expansion and mean value theorem, we have
\begin{equation}\label{temp1}
	    \begin{split}
&\sup_{x\in F_d}\EE_{\tilde{X}\mid x}\left[\left(\sum_{i=2}^d\log\left(\frac{f(\tilde{X}_i)}{f(x_i)}\right){-\textrm{(Term I + Term II)} }\right)^2\Ind_{\{\tilde{X}\in\bar{F}_d\}}\right]\\
&=\bigO\left(\sup_{x\in F_d}\EE_{\tilde{X}\mid x}\left[\left|\sum_{i=2}^d(\tilde{X}_i-x_i)^3\right|^2\Ind_{\{\tilde{X}\in\bar{F}_d\}}\right]\right)=o(d^{-1}),
	    \end{split}
	\end{equation}
where 
\*[
\textrm{Term I + Term II}:=\sum_{i=2}^d\left[(\log f(x_i))'(\tilde{X}_i-x_i)+\frac{1}{2}(\log f(x_i))''(\tilde{X}_i-x_i)^2 \right].
\]
Furthermore,  following the Gaussian approximation arguments in the proof of \cref{lemma_CLT}, we have the following result:
\*[
\sup_{x\in F_d}\EE_{\tilde{X}\mid x}\left[\left(\textrm{(Term I + Term II)}+\log\frac{(R^2+\|\tilde{X}\|^2)^d}{(R^2+\|x\|^2)^d}-W_{\tilde{X}\mid x}\right)^2\Ind_{\{\tilde{X}\in\bar{F}_d\}}\right]=o(1)
\]
where $W_{\tilde{X}\mid x}\sim \mathcal{N}(\mu,\sigma^2)$ with
%\*[
%\hat{\mu}&= \frac{\ell^2}{2}\left\{1-\frac{1}{d}\sum_i\left((\log f(x_i))'\right)^2\right\},\quad
%\hat{\sigma}^2= \ell^2\left\{\frac{1}{d}\sum_i\left((\log f(x_i))'\right)^2-1\right\}.
%\]
%Note that, by the definition of $F_d$, we can find $W\sim \mathcal{N}(\mu,\sigma^2)$ with $\EE[(\hat{W}_d-W)^2]=\bigO(d^{-1/8})$, where
\*[
\mu&= \frac{\ell^2}{2}\left\{1-\EE_f\left[\left((\log f)'\right)^2\right]\right\},\quad
\sigma^2= \ell^2\left\{\EE_f\left[\left((\log f)'\right)^2\right]-1\right\}.
\]
 %Therefore, ignoring the $\bigO(d^{-1/8})$ term, the distribution of $W$ doesn't depend on $x$, as long as $x\in F_d$.

Therefore, after the above approximations, we have simplified \cref{temp345} to
\[\label{temp333}
&\sup_{x\in F_d}\EE_{(\hat{X}_1,\tilde{X})\mid x}\left\{ (\hat{X}_1-x_1)^2\Ind_{\{\hat{X}\in\bar{F}_d\}}\left[1\wedge\exp\left(\log\frac{f(\hat{X}_1)}{f(x_1)}+\frac{\hat{X}_1^2-\EE[\tilde{X}_1^2]}{2}+W_{\tilde{X}\mid x}\right)\right]\right\}.
\]
Note that the randomness of $\log\frac{f(\hat{X}_1)}{f(x_1)}+\frac{\hat{X}_1^2-\EE[\tilde{X}_1^2]}{2}$ only comes from $\hat{X}_1$, and the randomness of $W_{\tilde{X}\mid x}$ comes from $\tilde{X}$. However, $W_{\tilde{X}\mid x}$ is not independent with $\hat{X}_1$, because of the weak dependence of $\hat{X}_1$ with $\tilde{X}_{2:d}=\hat{X}_{2:d}$. Indeed, it can be argued that $W_{\tilde{X}\mid x}$ is asymptotically independent with $\hat{X}_1$. However, we wish $W_{\tilde{X}\mid x}$ to be independent with $\hat{X}_1$ in order to make further progress.

In \cref{construct_tilde_W}, we will construct another random variable $\tilde{W}$, which is identically distributed with $W_{\tilde{X}\mid x}$, but independent with $\hat{X}_1$. Furthermore, $\tilde{W}$ is ``physically close'' to $W_{\tilde{X}\mid x}$ such that 
\[\label{temp444}
\sup_{x\in F_d}\EE\left[\left(\tilde{W}-W_{\tilde{X}\mid x}\right)^2\right]=o(1).
\]
Note that the existence of $\tilde{W}$ is very intuitive: if $W_{\tilde{X}\mid x}$ is asymptotically independent with $\hat{X}_1$, then there should be an identically distributed random variable $\tilde{W}$, which is independent with $\hat{X}_1$ and dependent with $W_{\tilde{X}\mid x}$. Furthermore, $W_{\tilde{X}\mid x}$ becomes ``physically closer and closer'' to $\tilde{W}$.

Using \cref{temp444} and the Lipschitz property of $1\wedge \exp(\cdot)$, we can simplify \cref{temp333} to
\begin{equation}\label{temp555}
    \begin{split}
&\sup_{x\in F_d}\EE_{(\hat{X}_1,\tilde{X})\mid x}\left\{ (\hat{X}_1-x_1)^2\Ind_{\{\hat{X}\in\bar{F}_d\}}\left[1\wedge\exp\left(\log\frac{f(\hat{X}_1)}{f(x_1)}+\frac{\hat{X}_1^2-\EE[\tilde{X}_1^2]}{2}+W_{\tilde{X}\mid x}\right)\right]\right\}\\
&\to \sup_{x\in F_d}\EE_{(\hat{X},\tilde{X})\mid x}\left\{ (\hat{X}_1-x_1)^2\Ind_{\{\hat{X}\in\bar{F}_d\}}\left[1\wedge\exp\left(\log\frac{f(\hat{X}_1)}{f(x_1)}+\frac{\hat{X}_1^2-\EE[\tilde{X}_1^2]}{2}+\tilde{W}\right)\right]\right\}\\
&\to \sup_{x\in F_d}\EE_{(\tilde{W},\tilde{X}_1)\mid x}\left\{ (\hat{X}_1-x_1)^2\left[1\wedge\exp\left(\log\frac{f(\hat{X}_1)}{f(x_1)}+\frac{\hat{X}_1^2-\EE[\tilde{X}_1^2]}{2}+\tilde{W}\right)\right]\right\}\\
&=\sup_{x\in F_d}\EE_{\hat{X}_1\mid x}\left\{ (\hat{X}_1-x_1)^2\EE_{\tilde{W}\mid\hat{X}_1,x}\left[1\wedge\exp\left(\log\frac{f(\hat{X}_1)}{f(x_1)}+\frac{\hat{X}_1^2-\EE[\tilde{X}_1^2]}{2}+\tilde{W}\right)\right]\right\}.
    \end{split}
\end{equation}

\begin{lemma}{\citeps[Proposition 2.4]{Roberts1997}}\label{lemma_Gaussian}
	If $W\sim \mathcal{N}(\mu,\sigma^2)$ then
	\*[
	\EE_W[1\wedge \exp(W)]=\Phi(\mu/\sigma)+\exp(\mu+\sigma^2/2)\Phi(-\sigma-\mu/\sigma)
	\]
	where $\Phi(\cdot)$ is the standard normal cumulative distribution function.
\end{lemma}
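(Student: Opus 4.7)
The plan is to split the expectation according to the sign of $W$, since $1 \wedge \exp(W) = \exp(W)$ on $\{W \leq 0\}$ and equals $1$ on $\{W > 0\}$. This gives the decomposition $\EE[1 \wedge \exp(W)] = \Pr(W > 0) + \EE[\exp(W)\Ind_{\{W \leq 0\}}]$. The first term is immediate by standardisation: writing $W = \mu + \sigma Z$ with $Z \sim \mathcal{N}(0,1)$ yields $\Pr(W > 0) = \Pr(Z > -\mu/\sigma) = \Phi(\mu/\sigma)$, which is precisely the first summand on the right-hand side of the claim.

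The second term is handled by the standard exponential-tilting identity for Gaussians. Completing the square in the exponent of the Gaussian density, one sees that if $\varphi_{m,\sigma^2}$ denotes the $\mathcal{N}(m,\sigma^2)$ density on $\Reals$, then $e^{w}\varphi_{\mu,\sigma^2}(w) = \exp(\mu + \sigma^2/2)\,\varphi_{\mu+\sigma^2,\sigma^2}(w)$. Integrating this identity over $(-\infty,0]$ gives $\EE[\exp(W)\Ind_{\{W \leq 0\}}] = \exp(\mu + \sigma^2/2)\,\Phi(-(\mu+\sigma^2)/\sigma) = \exp(\mu+\sigma^2/2)\,\Phi(-\sigma - \mu/\sigma)$. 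Adding the two contributions produces the stated formula.

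There is essentially no obstacle: the lemma is a one-line consequence of Gaussian exponential tilting together with the definition of $\Phi$. The real interest is downstream, where this closed form, combined with the Gaussian CLT approximation in \cref{lemma_CLT}, converts the expected acceptance probability into an explicit asymptotic expression of the form $2\Phi(-\sigma/2)$; maximising $2\ell^2 \Phi(-\sigma(\ell)/2)$ over $\ell$ in \cref{thm_ESJD} then yields the universal $0.234$ optimal acceptance rate.
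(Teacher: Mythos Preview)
Your proof is correct. The paper does not actually prove this lemma itself---it is stated with attribution to \citep[Proposition 2.4]{Roberts1997} and used as a black box---so there is no in-paper argument to compare against; your sign-splitting plus Gaussian exponential tilting is the standard derivation and is exactly what the cited reference does.
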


By \cref{lemma_Gaussian}, the inside expectation of \cref{temp555} is
\*[
M(\hat{X}_1):&=\EE_{\tilde{W}\mid\hat{X}_1,x}\left[1\wedge\exp\left(\log\frac{f(\hat{X}_1)}{f(x_1)}+\frac{\hat{X}_1^2-\EE[\tilde{X}_1^2]}{2}+\tilde{W}\right)\right]\\
&=\Phi\left(-\frac{\sigma}{2}+\frac{\log\left(\frac{f(\hat{X}_1)}{f(x_1)}\right)+\frac{\hat{X}_1^2-\EE[\tilde{X}_1^2]}{2}}{\sigma}\right)\\ &\quad+\exp\left(\log\left(\frac{f(\hat{X}_1)}{f(x_1)}\right)+\frac{\hat{X}_1^2-\EE[\tilde{X}_1^2]}{2}\right)\cdot\Phi\left(-\frac{\sigma}{2}-\frac{\log\left(\frac{f(\hat{X}_1)}{f(x_1)}\right)+\frac{\hat{X}_1^2-\EE[\tilde{X}_1^2]}{2}}{\sigma}\right).
    \]
Note that $\EE[\tilde{X}_1^2]=x_1^2+\bigO(d^{-1})$. Then it can be easily verified that
\*[
M(x_1)\to 2\Phi\left(-\frac{\sigma}{2}\right).
\]
Therefore, substituting $M(\hat{X}_1)$ to \cref{temp555} and by Taylor expansion, as $d\to\infty$, we have
	\*[
\ESJD&\to d\cdot \left\{\EE[(\hat{X}_1-x_1)^2]\cdot 2\Phi\left(-\frac{\sigma}{2}\right)\right\}\\
&\to 2\ell^2\cdot\Phi\left(-\frac{\ell}{2}\sqrt{\EE_f\left[\left((\log f)'\right)^2\right]-1}\right),
    \]
which completes the proof. Therefore, it suffices to show the construction of $\tilde{W}$.

\subsubsection{Construction of $\tilde{W}$}\label{construct_tilde_W}

For simplicity of notations, we will write $W_{\tilde{X}\mid x}$ as $W$. %We will also omit $\Ind_{\{\tilde{X}\in\bar{F}_d\}}$ when we reuse the proof steps in \cref{proof-lemma_CLT}.
Recall the following definitions in \cref{proof-lemma-lati}: all the randomness comes from a standard Gaussian $\mathcal{N}(0,I_d)$. We can write it as $d$ independent standard $\mathcal{N}(0,1)$ as $V_1,\dots,V_d$. We have also defined $U_1,\dots,U_{d+1}$ which are marginal Gaussian random variables with $U_i\sim\mathcal{N}(0,1)$. Recall that $U_1,\dots,U_{d+1}$ are usually not independent. However, in the stationary phase when $R=d^{1/2}$, $\{U_i\}$ are ``almost independent'' since their correlations are very small. Particularly, if $z_i\to 0$ then $U_i$ is asymptotically independent with all $\{U_j: j\neq i\}$.

Recall that in \cref{proof-lemma-lati}, we have derived that the proposed $\hat{z}_i$ can be written as
\*[
\hat{z}_i=\frac{\sqrt{1+h^2U_i^2}}{\sqrt{1+h^2\left(\sum_{j=1}^d V_j^2\right)}}\left(\frac{z_i}{\sqrt{1+h^2U_i^2}}+\frac{\sqrt{1-z_i^2}hU_i}{\sqrt{1+h^2U_i^2}}\right),\quad i=1,\dots,d+1.
\]
Also,
$\hat{X}_i=d^{1/2}\frac{\hat{z}_i}{1-\hat{z}_{d+1}}$ as $R=d^{1/2}$.
To emphasize the sources of randomness, we write
\*[
\hat{z}_i=\mathcal{F}_{\hat{z}_i}\left(\sum_{j=1}^dV_j^2,U_i\right),\quad i=1,\dots,d+1,
\]
to denote that the randomness of $\hat{z}_i$ comes from $\sum_j V_j^2$ and $U_i$. Similarly, we can write $\hat{X}_i$ as
\*[ \hat{X}_i=\mathcal{F}_{\hat{X}_i}\left(\sum_{j=1}^dV_j^2,U_i,U_{d+1}\right), \quad i=1,\dots,d
\]
to denote the randomness of $\hat{X}_i$ comes from $\sum_j V_j^2$, $U_i$, and $U_{d+1}$.

Now recall that $W$ is a function of $\tilde{X}$, where $\tilde{X}_1$ is conditional independent with $\hat{X}_1$, and $\tilde{X}_i=\hat{X}_i$ for all $i=2,\dots,d$. We can therefore write $W$ as a random function of $\hat{X}_{2:d}$. Therefore, we can write the following:
\*[
W= \mathcal{F}_{W}\left(\sum_{j=1}^dV_j^2,U_{2:d+1}\right),\quad \hat{X}_1=\mathcal{F}_{\hat{X}_1}\left(\sum_{j=1}^dV_j^2,U_1,U_{d+1}\right).
\]

Now, it is clear that the dependence of $W$ on $\hat{X}_1$ has three sources:
\begin{enumerate}
	\item Shared source of randomness, $\sum_{j=1}^d V_j^2$, which a chi-squared random variable with degree of freedom $d$;
	\item The weak dependence between $U_{2:d}$ and $U_1$. Note that $U_1,U_{2:d}$ are correlated and joint Gaussian distributed. When $z_1\to 0$, then $U_1$ becomes (asymptotically) independent with $U_{2:d}$.
	\item Shared source of randomness $U_{d+1}$, which becomes (asymptotically) independent with $U_{1:d}$ when $z_{d+1}\to 0$.
\end{enumerate}
In order to replace $W$ by an identical distributed $\tilde{W}$, that is also independent with $\hat{X}_1$,
we do three coupling arguments as follows:
\begin{enumerate}
	\item First coupling argument: we first {replace $\sum_{j=1}^d V_j^2$ by a constant $d$} to define
	\*[
	W':=\mathcal{F}_W(d,U_{2:d+1})
	\]
	Following the proof of \cref{lemma_CLT}, it's very easy to check that $W'$ has the same distribution as $W$. The key is to show $W'$ and $W$ are ``physically close'' so that
	\*[
	\sup_{x\in F_d}\EE[(W'-W)^2]=o(1).
	\]
	Actually, we can show a better rate that
	\*[
	\sup_{x\in F_d}\EE[(W'-W)^2]=\bigO(d^{-1}).
	\]
	The proof is delayed to \cref{coupling1}.
	\item Second coupling argument: we remove the dependence of $U_{2:{d+1}}$ on $U_1$. Note that $U_{2:{d+1}}$ are ``almost independent'' with $U_1$. We ``orthogonalize'' these Gaussian random variables $U_{2:d+1}$ by the following decomposition:
	\*[
	U_i=c_{i}^{\parallel} U_1 + c_{i}^{\perp}U_{i}^{\perp}, \quad i=2,\dots,d+1
	\]
	where $c_{i}^{\parallel}$ is the component that $U_i$ is ``parallel'' with $U_1$, and $c_{i}^{\perp}$ is the ``orthogonal'' (that is, independent) component.
	Now we {replace the $U_1$ in the decomposition by an independent copy $\tilde{U}_1$}. That is, we define
	\*[
	U_i':=c_{i}^{\parallel} \tilde{U}_1 + c_{i}^{\perp}U_{i}^{\perp}, \quad i=2,\dots,d+1,
	\]
	where $\tilde{U}_1$ is independent standard Gaussian random variable.
	Note that $U_i'$ is highly correlated with $U_i$ since the component $c_{i}^{\perp}U_{i}^{\perp}$ remains.
	If $z_1=0$, for example, we recover $U_i'=U_i$. Next, we define
	\*[
	W'':=\mathcal{F}_W(d,U'_{2:d+1}).
	\]
	It is clear that $W''$ is identically distributed as $W'$. When $z_1=0$, we recover $|W''-W'|=0$. Intuitively, $W''$ is ``physically close'' to $W'$, when $z_1$ is small. Actually, our definition of $F_d$ guarantees that for all $x\in F_d$ we have $x_1=\bigO(d^{1/5})$ which implies $z_1=\bigO(d^{1/5-1/2})$ is indeed small.
	Then, our goal is to show
	\*[
	\sup_{x\in F_d}\EE[(W''-W')^2]=o(1).
	\]
	Actually, we can prove a slightly better rate, which is
	\*[
	\sup_{x\in F_d}\EE[(W'-W'')^2]=\bigO(z_1^2)=\bigO(d^{-3/5}).
	\]
	The proof is delayed to \cref{coupling2}.
	\item Third coupling argument: finally, we construct $\tilde{W}$ based on $W''$. Recall that $W''=\mathcal{F}_W(d,U_{2:d+1}')$. In order to construct $\tilde{W}$ which is independent with $\hat{X}_1$, it suffices to replace the source of $U_{d+1}'$ by an independent copy $\tilde{U}_{d+1}$. We first ``orthogonalize'' $U_{2:d}'$ using the same way as the previous coupling argument:
	\*[
U_i'=\tilde{c}_{i}^{\parallel} U_{d+1}' + \tilde{c}_{i}^{\perp}{U'_{i}}^{\perp}, \quad i=2,\dots,d.
\]
and then replace the component $U_{d+1}'$ by an independent copy $\tilde{U}_{d+1}$ to define $\tilde{U}_{2:d}$ by
	\*[
\tilde{U}_i:=\tilde{c}_{i}^{\parallel} \tilde{U}_{d+1} + \tilde{c}_{i}^{\perp}{U'_{i}}^{\perp}, \quad i=2,\dots,d.
\]
Finally, we define
	\*[
	\tilde{W}:=\mathcal{F}_W(d,\tilde{U}_{2:d+1}).
	\]
	It is clear that $\tilde{W}$ is identically distributed as $W''$ and it is now independent with $\hat{X}_1$. Then the goal is to show 
	\*[
	\sup_{x\in F_d}\EE[(\tilde{W}-W'')^2]=o(1).
	\]
	It can be seen that the ``orthogonalization'' part of the coupling argument is the same as the second coupling. We then expect this part cause a distance of order $\bigO_{\Pr}(z_{d+1})$, which is indeed the case.
	By the definition of $F_d$, for all $x\in F_d$ we have $\|x_i\|^2-d=o(d^{6/7})$ so $z_{d+1}=o(d^{6/7-1})=o(d^{-1/7})$. The other part of the coupling argument is to replace the direct dependence on $U_{d+1}'$ by $\tilde{U}_{d+1}$, which we can prove a distance of order $\bigO(d^{-1/8})$. Overall, we can show
	\*[
\sup_{x\in F_d}\EE[(\tilde{W}-W'')^2]=\bigO(d^{-1/4})+\bigO(z_{d+1}^2)=\bigO(d^{-1/4})+o(d^{-2/7}).
	\]
	The proof is delayed to \cref{coupling3}.
\end{enumerate}
Overall, we have shown the construction of $\tilde{W}$, which has the same distribution as $W$ and is independent with $\hat{X}_1$. Most importantly, we have shown
\*[
\sup_{x\in F_d}\EE[(\tilde{W}-W)^2]&\le \sup_{x\in F_d}\EE[(|\tilde{W}-W''|+|W''-W'|+|W'-W|)^2]\\
&=\bigO(d^{-1/4})+o(d^{-2/7})+\bigO(d^{-3/5})+\bigO(d^{-1})=o(1).
\]

\subsubsection{The first coupling argument}\label{coupling1}
We first define the replacement of $\hat{z}_i$ by
\*[
z_i':=\frac{\sqrt{1+h^2U_i^2}}{\sqrt{1+h^2d}}\left(\frac{z_i}{\sqrt{1+h^2U_i^2}}+\frac{\sqrt{1-z_i^2}hU_i}{\sqrt{1+h^2U_i^2}}\right),\quad i=1,\dots,d+1,
\]
in which we only replace $\sum_j V_j^2$ in $\hat{z}_i$ by a constant $d$. It is clear that $\hat{z}_i$ and $z_i'$ are highly correlated, we can write
\[\label{temp2}
\hat{z}_i=(1+\bigO_{\Pr}(h^2d^{1/2}))\frac{\sqrt{1+h^2U_i^2}}{\sqrt{1+h^2d}}\left(\frac{z_i}{\sqrt{1+h^2U_i^2}}+\frac{\sqrt{1-z_i^2}hU_i}{\sqrt{1+h^2U_i^2}}\right).
\]
Following the proof of \cref{lemma_CLT}, it is obvious that $W'$ has the same distribution as $W$, which means the mean of $W-W'$ is zero. Therefore, we only need to focus how fast the variance of $W-W'$ goes to zero.

Recall that in the proof of $W$ using \cref{lemma_CLT}, the variance of $W$ is dominated by the variance of the following inner product term
	\*[
&\frac{R}{1-z_{d+1}}\underbrace{\left(0,(\log f(x_2))',\dots,(\log f(x_d))',\sum_{i=2}^d \left((\log f(x_i))'z_i+\frac{1}{R}\right)\right) \cdot (\hat{z}-z)}_{\textrm{inner product}}
\]
When we prove the variance of $W-W'$, by replacing $\hat{z}-z$ to the difference of $\hat{z}-z$ and $z'-z$, which is $\hat{z}-z'$, following the same arguments as the proof of \cref{lemma_CLT}, the variance of $W-W'$ is then determined by
the variance of
	\*[
&\frac{R}{1-z_{d+1}}\left(0,(\log f(x_2))',\dots,(\log f(x_d))',\sum_{i=2}^d \left((\log f(x_i))'z_i+\frac{1}{R}\right)\right) \cdot (\hat{z}-z')
\]
where $\hat{z}-z'=(\hat{z}_1-z_1',\dots,\hat{z}_{d+1}-z_{d+1}')^T$. Substituting the definition of $\hat{z}$ and $z'$, using $\bigO(h^2d^{1/2})=\bigO(d^{-3/2})$, we get the order of variance as
$\bigO\left(\left(\sum_{i=1}^d (\log f(x_i))'x_i\right)^2\right)\bigO(d^{-3})$.
By the definition of $F_d$, we know $\sum_{i=1}^d \left(\log f(x_i))'x_i\right)=\bigO(d)$.
Therefore, we have
\*[
\sup_{x\in F_d}\EE[(W-W')^2]=\bigO(d^2)\bigO(d^{-3})=\bigO(d^{-1}).
\]

	\subsubsection{The second coupling argument}\label{coupling2}
	By the construction of $W''$, it is clear that $W''$ has the same distribution as $W'$.
	Define the replacement of $z_i'$ by
	\*[
	z_i'':=\frac{\sqrt{1+h^2(U_i')^2}}{\sqrt{1+h^2d}}\left(\frac{z_i}{\sqrt{1+h^2(U_i')^2}}+\frac{\sqrt{1-z_i^2}hU_i'}{\sqrt{1+h^2(U_i')^2}}\right),\quad i=1,\dots,d+1,
	\]
	Then we can again follow the proof of \cref{lemma_CLT} and focus on the variance of $W'-W''$ (since the mean of $W'-W''$ is clearly zero). Then the variance of $W'-W''$ is determined by the variance of
		\*[
	&\frac{R}{1-z_{d+1}}\left(0,(\log f(x_2))',\dots,(\log f(x_d))',\sum_{i=2}^d \left((\log f(x_i))'z_i+\frac{1}{R}\right)\right) \cdot (z''-z')
	\]
	where $z''-z'=(z''_1-z_1',\dots,z''_{d+1}-z_{d+1}')^T$.
	Using the fact that
	\*[
		U_i-U_i'=c_{i}^{\parallel} (U_1-\tilde{U}_1), \quad i=2,\dots,d+1,
	\]
	we can get that the variance of $W'-W''$ is bounded by
	\*[
	\left\{d\sum_{i=2}^d((\log f(x_i))')^2+ \left[\sum_{i=2}^d\left((\log f(x_i))'x_i+1\right)\right]^2\right\}\left\{2h^2\sum_{i=2}^d(c_i^{\parallel})^2+o(h^2)\right\}
	\]
	Note that, by the definition of $F_d$, we have
	\*[
	d\sum_{i=2}^d((\log f(x_i))')^2=\bigO(d^2),\quad \left[\sum_{i=2}^d\left((\log f(x_i))'x_i+1\right)\right]^2=o(d^2).
	\]
	Using some basic geometry to analyze the ``angle'' between $U_i$ and $U_1$, we know $c_i^{\parallel}=\bigO(z_iz_1)$. Therefore, using $h^2=\bigO(d^{-2})$, we have
	\*[
	\sup_{x\in F_d}\EE[(W'-W'')^2]=\bigO\left({\sum_{i=2}^d(c_i^{\parallel})^2}\right)=\bigO\left({\sum_{i=2}^dz_i^2}z_1\right)=\bigO(z_1^2).
	\]
	Finally, for all $x\in F_d$, we have $x_1=\bigO(d^{1/5})$ which implies that $z_1=\bigO(d^{1/5-1/2})=\bigO(d^{-3/10})$.

	\subsubsection{The third coupling argument}\label{coupling3}
	By the construction of $\tilde{W}$, it is clear that the distribution of $\tilde{W}$ is the same as the distribution of $W''$. Therefore, the mean of $\tilde{W}-W''$ is zero.
	
	We define the replacement of $z_i''$ by
	\*[
	\tilde{z}_i:=\frac{\sqrt{1+h^2\tilde{U}_i^2}}{\sqrt{1+h^2d}}\left(\frac{z_i}{\sqrt{1+h^2\tilde{U}_i^2}}+\frac{\sqrt{1-z_i^2}h\tilde{U}_i}{\sqrt{1+h^2\tilde{U}_i^2}}\right),\quad i=1,\dots,d+1.
	\]
	Again, following the proof of \cref{lemma_CLT}, the variance of $\tilde{W}-W''$ is determined by the variance of
	\*[
	&\frac{R}{1-z_{d+1}}\left(0,(\log f(x_2))',\dots,(\log f(x_d))',\sum_{i=2}^d \left((\log f(x_i))'z_i+\frac{1}{R}\right)\right) \cdot (\tilde{z}-z'')\\
	&=	\underbrace{\frac{R}{1-z_{d+1}}\left(0,(\log f(x_2))',\dots,(\log f(x_d))',0\right) \cdot (\tilde{z}-z'')}_{\textrm{the first term}}\\
	&\quad +\underbrace{\frac{R}{1-z_{d+1}}\left(0,,\dots,0,\sum_{i=2}^d \left((\log f(x_i))'z_i+\frac{1}{R}\right)\right) \cdot (\tilde{z}-z'')}_{\textrm{the second term}}
	    \]
	where $\tilde{z}-z''=(\tilde{z}_1-z_1'', \dots,\tilde{z}_{d+1}-z_{d+1}'')$. Note that we have decomposed the inner product to the sum of two terms. We compute the variances of the two terms separately.
	
	Similar to the previous coupling argument, the variance of the first term is bounded by
	\*[
	\left[d\sum_{i=2}^d((\log f(x_i))')^2\right]\left[2h^2\sum_{i=2}^d(\tilde{c}_i^{\parallel})^2+o(h^2)\right]=\bigO(z_{d+1}^2).
	\]
	The second term can be written as
	\*[
	\sum_{i=2}^d\left[(\log f(x_i))'x_i+1\right](1+\bigO_{\Pr}(z_{d+1}))(\tilde{z}_{d+1}-z''_{d+1}).
	\]
	By the definition of $F_d$, we know 
	\*[
		\sum_{i=2}^d\left[(\log f(x_i))'x_i+1\right]=d\bigO(d^{-1/8}),\quad z_{d+1}=o(d^{-1/7})
	\]
	Furthermore, by the definition of $\tilde{z}_{d+1}$ and $z_{d+1}''$, both have finite exponential moments. Moreover, we have
	\*[
\EE\left[\left|\tilde{z}_{d+1}-z_{d+1}''-\sqrt{1-z_{d+1}^2}h(\tilde{U}_{d+1}-U_{d+1}')\right|\right]=\bigO(d^{-2}).
	\]
	Therefore, we have the second term
		\*[
	&\sum_{i=2}^d\left[(\log f(x_i))'x_i+1\right](1+\bigO_{\Pr}(z_{d+1}))(\tilde{z}_{d+1}-z''_{d+1})\\
	&=d\bigO(d^{-1/8})(1+o(d^{-1/7}))h\bigO_{\Pr}(1)=\bigO_{\Pr}(d^{-1/8}),
	    \]
	and its variance is of order $\bigO(d^{-1/4})$.
	Overall, adding the variance of the second term to the variance of the first term yields
	\*[
	\sup_{x\in F_d}\EE[(\tilde{W}-W'')^2]=\bigO(z_{d+1}^2)+\bigO(d^{-1/4})=o(d^{-2/7})+\bigO(d^{-1/4}).
	\]
	This completes the proof.
\end{proof}

\subsection{Proof of \cref{thm_diffusion}}\label{proof-thm_diffusion}

\begin{proof}
We follow the framework of \citeps{Roberts1997} using the generator approach \citeps{Ethier1986}. Define the (discrete-time) generator of $x$ by
\*[
(G_d V)(x):=d \EE_{\hat{X}}\left\{ [V(\hat{X})-V(x)]\left(1\wedge\frac{\pi(\hat{X})(R^2+\|\hat{X}\|^2)^d}{\pi(x)(R^2+\|x\|^2)^d}\right)\right\}
\]
for any function $V$ for which this definition makes sense. In the Skorokhod topology, it doesn't cause any problem to treat $G_d$ as a continuous time generator. We shall restrict attention to test functions such that $V(x)=V(x_1)$.
We show uniform convergence of $G_d$ to $G$, the generator of the limiting (one-dimensional) Langevin diffusion, for a suitable large class of real-valued functions $V$, where, for some fixed function $h(\ell)$,
\*[
(GV)(x_1):=h(\ell)\left\{\frac{1}{2}V''(x_1)+\frac{1}{2}\left[(\log f)'(x_1)\right]V'(x_1)\right\}.
\]
Since $f'/f$ is Lipschitz, by \citeps[Thm 2.1 in Ch.8]{Ethier1986}, a core for the generator has domain $C_c^{\infty}$, which is the class of continuous functions with compact support such that all orders of derivatives exist. This enable us to restrict attentions to functions $V\in C_c^{\infty}$ such that $V(x)=V(x_1)$.

Define the sequence of sets $\{F_d\}$ by
\begin{equation}\label{def_Fd}
	    \begin{split}
F_d:=&\left\{x\in \Reals^d: \left|\frac{1}{d-1}\sum_{i=2}^d\left[(\log f(x_i))'\right]^2-\EE_f\left[((\log f)')^2\right]\right|<d^{-1/8} \right\}\\
& \cap \left\{x\in \Reals^d: \left|\frac{1}{d-1}\sum_{i=2}^d\left[(\log f(x_i))''\right]-\EE_f\left[((\log f)'')\right]\right|<d^{-1/8}\right\} \\
&\cap \left\{x\in \Reals^d: \left|\frac{1}{d-1}\sum_{i=2}^dx_i(\log f(x_i))' -\EE_f\left[X(\log f)' \right]\right|<d^{-1/8}\right\}\\
&\cap 
\left\{x\in \Reals^d: \left|\frac{1}{d}\sum_{i=1}^d x_i^2-\EE_f(X^2)\right|<d^{-1/6}\log(d) \right\}\\
&\cap \left\{x\in \Reals^d: |x_1|<d^{1/5}\right\}
    \end{split}
	\end{equation}
Then, for fixed $t$, by the union bound, Markov inequality, and the assumptions \cref{assump1}, we have
\*[
&\Pr\left(X^d(\lfloor ds\rfloor)\notin F_d,\exists 0\le s\le t\right)\le td\Pr_{\pi}(X\notin F_d)\\
&=\bigO\left(
td\left(\frac{d^{1/8}}{(d-1)^{1/2}}\right)^4+td\left(\frac{d^{1/6}}{\log(d)d^{1/2}}\right)^3+td\left(\frac{1}{d^{1/5}}\right)^6\right)=\bigO(t(\log(d))^{-3}).    
\]
Therefore, for any fixed $t$, if $d\to\infty$, the probability of all $\{X^d(\lfloor ds\rfloor), 0\le s\le t\}$ are in $F_d$ goes to $1$. It then suffices to consider only $x\in F_d$.

Next, we decompose the expectation over $\hat{X}=(\hat{X}_1,\dots,\hat{X}_d)$ into expectations over $\hat{X}_1$ and $\hat{X}_{2:d}:=(\hat{X}_2,\dots,\hat{X}_{d})$. Recall that $\hat{X}_1=\hat{X}'_1$, $\hat{X}_{2:d}=\hat{X}''_{2:d}$, and $\hat{X}'$ is independent with $\hat{X}''$. Therefore, $\hat{X}_1$ is independent with $\hat{X}_{2:d}$. Then, we have
\*[
(G_d V)(x)=&d \EE_{\hat{X}_{1}}\left\{ [V(\hat{X}_1)-V(x_1)]\EE_{{\hat{X}_{2:d}}}\left[1\wedge\frac{\pi(\hat{X})(R^2+\|\hat{X}\|^2)^d}{\pi(x)(R^2+\|x\|^2)^d}\right]\right\}\\
=&d \EE_{\hat{X}_{1}}\left\{ [V(\hat{X}_1)-V(x_1)]\EE_{\hat{X}''_1\,|\,\hat{X}_{2:d}}\EE_{{\hat{X}_{2:d}}}\left[1\wedge\frac{\pi(\hat{X})(R^2+\|\hat{X}\|^2)^d}{\pi(x)(R^2+\|x\|^2)^d}\right]\right\}\\
=&d \EE_{\hat{X}_{1}}\left\{ [V(\hat{X}_1)-V(x_1)]\EE_{{\hat{X}''}}\left[1\wedge\frac{\pi(\hat{X})(R^2+\|\hat{X}\|^2)^d}{\pi(x)(R^2+\|x\|^2)^d}\right]\right\}.
    \]
Now we focus on the inner expectation
\*[
&\EE_{{\hat{X}''}}\left[1\wedge\frac{\pi(\hat{X})(R^2+\|\hat{X}\|^2)^d}{\pi(x)(R^2+\|x\|^2)^d}\right]\\
&=\EE_{{\hat{X}''}}\left[1\wedge \exp\left\{d\log\left(\frac{R^2+\|\hat{X}\|^2}{R^2+\|\hat{X}''\|^2}\right)+{\sum_{i=1}^d\log\left(\frac{f(\hat{X}_i)}{f(x_i)}\right)+d\log\left(\frac{R^2+\|\hat{X}''\|^2}{R^2+\|x\|^2}\right)}\right\}\right].
    \]
 Following the proof of \cref{thm_ESJD}, for any $x\in F_d$, we can replace $\sum_{i=2}^d\log\left(\frac{f(\hat{X}_i)}{f(x_i)}\right)+d\log\left(\frac{R^2+\|\hat{X}''\|^2}{R^2+\|x\|^2}\right)$ by 
$W\sim \mathcal{N}(\mu,\sigma^2)$ is a Gaussian random variable with
\*[
\mu&= \frac{\ell^2}{2}\left\{1-\EE_f\left[\left((\log f)'\right)^2\right]\right\},\quad
{\sigma}^2= \ell^2\left\{\EE_f\left[\left((\log f)'\right)^2\right]-1\right\}.
\]
%Note that $\EE_{\hat{X}''}\left[\left|d\log\left(\frac{R^2+\|\hat{X}\|^2}{R^2+\|\hat{X}''\|^2}\right)-\frac{\hat{X}_1^2-\EE[(\hat{X}''_1)^2]}{2}\right|\right]=o(d^{-1/2})$.
To keep the dependence on $\hat{X}_1$, we denote $r(\hat{X}_1):=d\EE_{\hat{X}''}\left[\log\left(\frac{R^2+\|\hat{X}\|^2}{R^2+\|\hat{X}''\|^2}\right)\right]$. Using Taylor expansion we can verify that $r(x_1)=\bigO(d^{-1})$ and $r'(x_1)=x_1+o(d^{-1/2})$. Therefore, it suggests that we can approximate
\*[
d\log\left(\frac{R^2+\|\hat{X}\|^2}{R^2+\|\hat{X}''\|^2}\right)+{\sum_{i=1}^d\log\left(\frac{f(\hat{X}_i)}{f(x_i)}\right)+d\log\left(\frac{R^2+\|\hat{X}''\|^2}{R^2+\|x\|^2}\right)}
\]
by $\log\left(\frac{f(\hat{X}_1)}{f(x_1)}\right)+r(\hat{X}_1)+W$.
Hence, we define 
\[\label{eq_temp2}
(\tilde{G}_d V)(x):=&d \EE_{\hat{X}_{1}}\left\{[V(\hat{X}_1)-V(x_1)]\cdot\EE_{\hat{X}''}\left[1\wedge \exp\left\{\log\left(\frac{f(\hat{X}_1)}{f(x_1)}\right)+r(\hat{X}_1)+W\right\}\right]\right\}.
\]
Since $V\in C_c^{\infty}$, we have for some $Z_1\in (x_1,\hat{X}_1)$ or $(\hat{X}_1,x_1)$ that
\*[
&V(\hat{X}_1)-V(x_1)=V'(x_1)(\hat{X}_1-x_1)+\frac{1}{2}V''(Z_1)(\hat{X}_1-x_1)^2.
\]
This implies that $\EE_{\hat{X}_{1}}\left[	V(\hat{X}_1)-V(x_1)\right]=\bigO(x_1d^{-1})=o(d^{-1/2})$ and $\EE\left[	\left|V(\hat{X}_1)-V(x_1)\right|\right]=\bigO(d^{-1/2})$, since $x_1=\bigO(d^{1/5})$ for $x\in F_d$. Using the fact that the function $1\wedge \exp(x)$ is Lipschitz \citeps[Proposition 2.2]{Roberts1997}, we can have 
\*[
\sup_{x\in F_d}\left|(G_dV)(x)-(\tilde{G}_dV)(x)\right|&=d\sup_{x\in F_d}\EE_{\hat{X}_{1}}\left[V(\hat{X}_1)-V(x_1)\right]\bigO(d^{-1/2})\\
&\quad +d\sup_{x\in F_d}\EE_{\hat{X}_{1}}\left[\left|V(\hat{X}_1)-V(x_1)\right|\left(o(d^{-1/2})\right)\right]=o(1).
    \]
Therefore, we can now concentrate on $(\tilde{G}_d V)(x)$ defined in \cref{eq_temp2} for $x\in F_d$. 

Note that conditional on $\hat{X}_1$, the term inside the inner expectation of \cref{eq_temp2} is Gaussian distributed, since $\log\left(\frac{f(\hat{X}_1)}{f(x_1)}\right)+r(\hat{X}_1)+W\sim \mathcal{N}(\mu',\sigma^2)$ where $\mu':=\mu+\log\left(\frac{f(\hat{X}_1)}{f(x_1)}\right)+r(\hat{X}_1)$. Therefore, by \cref{lemma_Gaussian}, we have
\*[
M(\hat{X}_1):&=\EE_{\hat{X}''}\left[1\wedge \exp\left\{\log\left(\frac{f(\hat{X}_1)}{f(x_1)}\right)+r(\hat{X}_1)+W\right\}\right]\\
&=\Phi\left(-\frac{\sigma}{2}+\frac{\log\left(\frac{f(\hat{X}_1)}{f(x_1)}\right)+r(\hat{X}_1)}{\sigma}\right)\\ &\quad+\exp\left(\log\left(\frac{f(\hat{X}_1)}{f(x_1)}\right)+r(\hat{X}_1)\right)\cdot\Phi\left(-\frac{\sigma}{2}-\frac{\log\left(\frac{f(\hat{X}_1)}{f(x_1)}\right)+r(\hat{X}_1)}{\sigma}\right).
    \]
Since $r(x_1)=\bigO(d^{-1})$ and $r'(x_1)=x_1+o(d^{-1/2})$, it can be verified that 
\[\label{eq_temp4}
M(x_1)&\to 2\Phi\left(-\frac{\sigma}{2}\right),\quad
M'(x_1)\to \Phi\left(-\frac{\sigma}{2}\right)\left[(\log f(x_1))'{+x_1+o(d^{-1/2})}\right].
\]
Furthermore, one can verify $M''(x_1)$ is bounded since $\Phi(\cdot)$, $\Phi'(\cdot)$, and $\Phi''(\cdot)$ are all bounded functions. 

Therefore, by mean value theorem, there exist constants $K_1$ and $K_2$ such that
\*[
&d[V(\hat{X}_1)-V(x_1)]M(\hat{X}_1)\\
&=d\left[V'(x_1)(\hat{X}_1-x_1)+\frac{1}{2}V''(x_1)(\hat{X}_1-x_1)^2+K_1(\hat{X}_1-x_1)^3\right]\\
&\qquad \cdot\left[M(x_1)+M'(x_1)(\hat{X}_1-x_1)+K_2(\hat{X}_1-x_1)^2\right]\\
&=dV'(x_1)M(x_1)(\hat{X}_1-x_1)\\
&\qquad+d\left[\frac{1}{2}V''(x_1)M(x_1)+V'(x_1)M'(x_1)\right](\hat{X}_1-x_1)^2 + \bigO(|d(\hat{X}_1-x_1)^3|).
    \]
Taking expectations over $\hat{X}_1$, using \cref{eq_temp4}, as $d\to\infty$, we have
\*[
&(\tilde{G}_dV)(x)=\EE_{\hat{X}_{1}}\left[d[V(\hat{X}_1)-V(x_1)]M(\hat{X}_1)\right]\\
&\to {dV'(x_1)M(x_1)\EE[\hat{X}_1-x_1]}+d\left[\frac{1}{2}V''(x_1)M(x_1)+V'(x_1)M'(x_1)\right]\EE\left[(\hat{X}_1-x_1)^2 \right]\\
&\to-{\frac{\ell^2}{2}x_1} V'(x_1)M(x_1)+{\ell^2}\left[\frac{1}{2}V''(x_1)M(x_1)+V'(x_1)M'(x_1)\right]\\
&\to\Phi\left(-\frac{\sigma}{2}\right)\left[-\ell^2{x_1}V'(x_1)+\ell^2V''(x_1)+\ell^2V'(x_1)[(\log f(x_1))'{+x_1+o(d^{-1/2})}]\right]\\
&{\to} 2\ell^2\Phi\left(-\frac{\sigma}{2}\right)\left[\frac{1}{2}V''(x_1)+\frac{1}{2}[(\log f)'(x_1)]V'(x_1)\right]\\
&\to 2\ell^2\Phi\left(-\frac{\ell\sqrt{\EE_f\left[\left((\log f)'\right)^2\right]-1}}{2}\right)\left[\frac{1}{2}V''(x_1)+\frac{1}{2}[(\log f)'(x_1)]V'(x_1)\right].
    \]
This completes the proof.
\end{proof}

\subsection{Comments on \cref{rmk_heavy_tail}}\label{comment_rmk_heavy_tail}
To illustrate the issue that for heavy tail targets, the SPS might stuck if starting from the South Pole, we consider multivariate student's $t$ targets. For these targets, one can easily derive that the ``equator'' is always a stationary point of the density (which is maximum if $\nu<d$ and minimum if $\nu>d$). 

We first consider the case that $k=\nu/d<1$ is a small constant, from \cref{lemma-lati} and \cref{fig-gk}, one can see that in the \emph{transient phase}, the acceptance rate goes to $0$ exponentially fast as $d\to\infty$. For example, when $k=0.01$ if $z=-1$ and $\hat{z}=0$, then the acceptance rate is roughly $\exp(-1.7d)\approx 1/5^d$. 

Next, we consider an approximation for the cases $\nu=\bigO(1)$ and $d\to\infty$. Consider current location $x=0$ which corresponding to the south pole and the ``typical'' proposal $\tilde{x}$ satisfying $\|\tilde{x}\|^d\approx d$. For multivariate student's $t$ target with DoF $\nu=\bigO(1)$, we have
	\*[
	\log\frac{\pi(\tilde{x})}{\pi(x)}\approx -\frac{\nu+d}{2}\log(1+\frac{d}{\nu-2})=\bigO(d\log(d))
	\]
	and $\log\frac{(R^2+\|\tilde{x}\|^2)^d}{(R^2+\|x\|^2)^d}=\bigO(d)$. Therefore, the acceptance probability for a ``typical'' proposal starting from the South Pole is
	\*[
	\min\left\{1,\frac{\pi(\tilde{x})}{\pi(x)}\frac{(R^2+\|\tilde{x}\|^2)^d}{(R^2+\|x\|^2)^d}\right\}\approx 1/(d^d),
	\]
	which implies for multivariate student's $t$ targets with DoF $\nu=\bigO(1)$, the SPS chain initialized at the South Pole can stuck when the proposal variance is large.
	
\subsection{Proof of the Jacobian determinant \cref{eq_jacobian}}\label{proof_jacobian}
The Jacobian determinant of the stereographic projection is well-known. For example, see \citeps[Lemma 3.2.1]{gehring2017introduction}. % and \citeps[Example 6.1]{janson2015riemannian}. 
One way to derive it is to calculate the Jacobian by comparing the ratio of volumes.
Let $x=\SP(z)$ and $y=\SP(w)$, one can get
\*[
\|z-w\|^2&=\sum_{i=1}^d\|z_i-w_i\|^2 + \|z_{d+1}-w_{d+1}\|^2\\
&=\sum_{i=1}^d 4\left(\frac{Rx_i}{R^2+\|x\|^2}-\frac{Ry_i}{R^2+\|y\|^2}\right)^2 + 4\left(\frac{R^2}{R^2+\|x\|^2}-\frac{R^2}{R^2+\|y\|^2}\right)^2\\
&=\frac{4R^2}{(R^2+\|x\|^2)(R^2+\|y\|^2)}\|x-y\|^2,
    \]
    where the last equality is obtained using
    \*[
    \sum_i \left[x_i(R^2+\|y\|^2)-y_i(R^2+\|x\|^2)\right]^2=-R^2(\|y\|^2-\|x\|^2)^2+\|x-y\|^2(R^2+\|x\|^2)(R^2+\|y\|^2)
    \]
Then we know the Euclidean distance from $\Reals^d$ to $\mathbb{S}^d$ is scaled by $\frac{2R}{\sqrt{(R^2+\|x\|^2)(R^2+\|y\|^2)}}$. Therefore, comparing ratio of volume elements on $\Reals^d$ and $\mathbb{S}^d$, the Jacobian determinant is proportional to $(R^2+\|x\|^2)^d$.

\section{Additional Simulations}\label{sec_more_simu}

We have included some numerical examples in \cref{section_simu}.
In this section, we give additional simulations for SPS and SBPS. \BLUE{We use ACF to denote the {\em sample autocorrelation function}, which is an estimate of the autocorrelation function $\text{acf}(k):=\frac{\EE_{\pi}[(X_i-\EE[X_i])(X_{i+k}-\EE[X_{i+k}])]}{\sqrt{\var_{\pi}(X_i)}\sqrt{\var_{\pi}(X_{i+k})}}$.}

\subsection{Some simulations using the result from \cref{lemma-lati}}\label{sec-fig-latitute-approximation-simulation}

We plot in \cref{fig-prop} for some simulations using the result from \cref{lemma-lati}.

		\begin{figure}
		\centering
		\includegraphics[width=\textwidth]{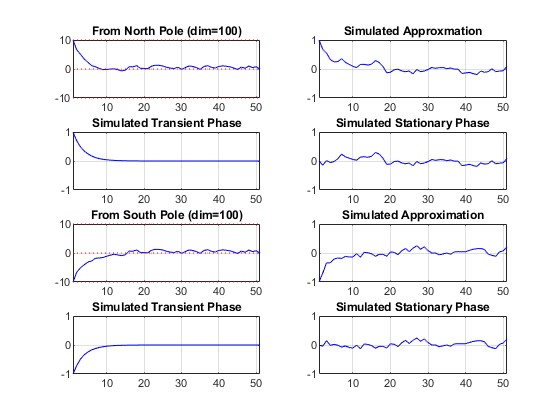}
		\caption{ We consider $h=0.1$ and $d=100$ and two cases for the initial state: from the North Pole (up) and the South Pole (down). For each case, the first subplot is the traceplot of the proposal latitudes. The other three are approximations using \cref{eq_approx_whole}, \cref{eq_approx_transient}, and \cref{eq_approx_stationary} from \cref{lemma-lati}.
		%\RED{(matlab file: RWM\_Approximation.m)}
		}
		\label{fig-prop}
	\end{figure}

\subsection{The function $g_k(\cdot)$ used in \cref{subsec-iso}}\label{sec-fig-gk}
	We plot the function $g_k$ in \cref{fig-gk} for different values of $k$.
	\begin{figure}[h]
	\centering
	\includegraphics[width=0.9\textwidth]{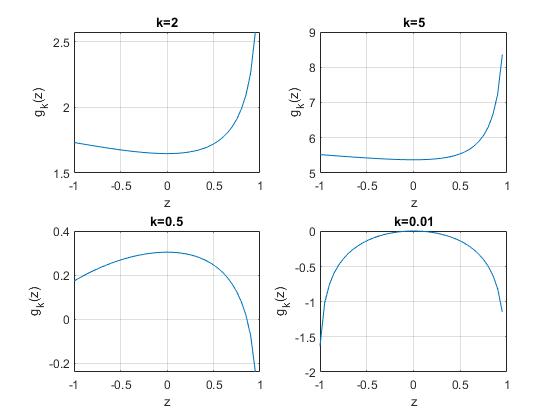}
	\caption{The function $g_{k}(z)$ used in \cref{subsec-iso} for different values of $k$.
	%\RED{(matlab file: RWM\_function\_g\_k\_z.m)}
	}
	\label{fig-gk}
\end{figure}

\subsection{SPS: traceplot and ACF}

In this example, \cref{fig-SPS-trace} shows traceplots and ACFs of SPS (first column) and RWM (the last three columns). SPS starts from the North Pole and RWM starts from different initial states $(c,c,\dots,c)$ where $c=10,20,50$, respectively (see the last three columns of \cref{fig-SPS-trace}). For SPS starting from the South Pole compared with RWM, we refer to \cref{subsec-SPS-burin-south}. We plot the traceplots and ACFs of the $1$-st coordinate and negative log-target density, as well as traceplots of the first two coordinates. The target is standard Gaussian in $d=100$ dimensions. The proposal variance for RWM is tuned such that the acceptance rate is about $0.234$, which is known to be the optimal acceptance rate \citeps{Roberts1997}. For SPS, the acceptance rate is roughly $0.78$, which is the lowest acceptance rate possible.

According to \cref{fig-SPS-trace}, the SPS mixes almost immediately. Indeed, the transient phase of SPS in $d=100$ dimensions is less than $10$ iterations. Compared with SPS, the mixing time of RWM relies on the initial value. For example, in the last column of \cref{fig-SPS-trace}, the RWM hasn't mixed after $5000$ iterations. In the stationary phase, with an acceptance rate of $0.78$, SPS generates almost uncorrelated samples according to the ACFs. Compared with SPS, the samples from RWM with an optimal acceptance rate of $0.234$ are highly correlated in high dimensions.

    \begin{figure}[h]
		\centering
		\includegraphics[width=\textwidth]{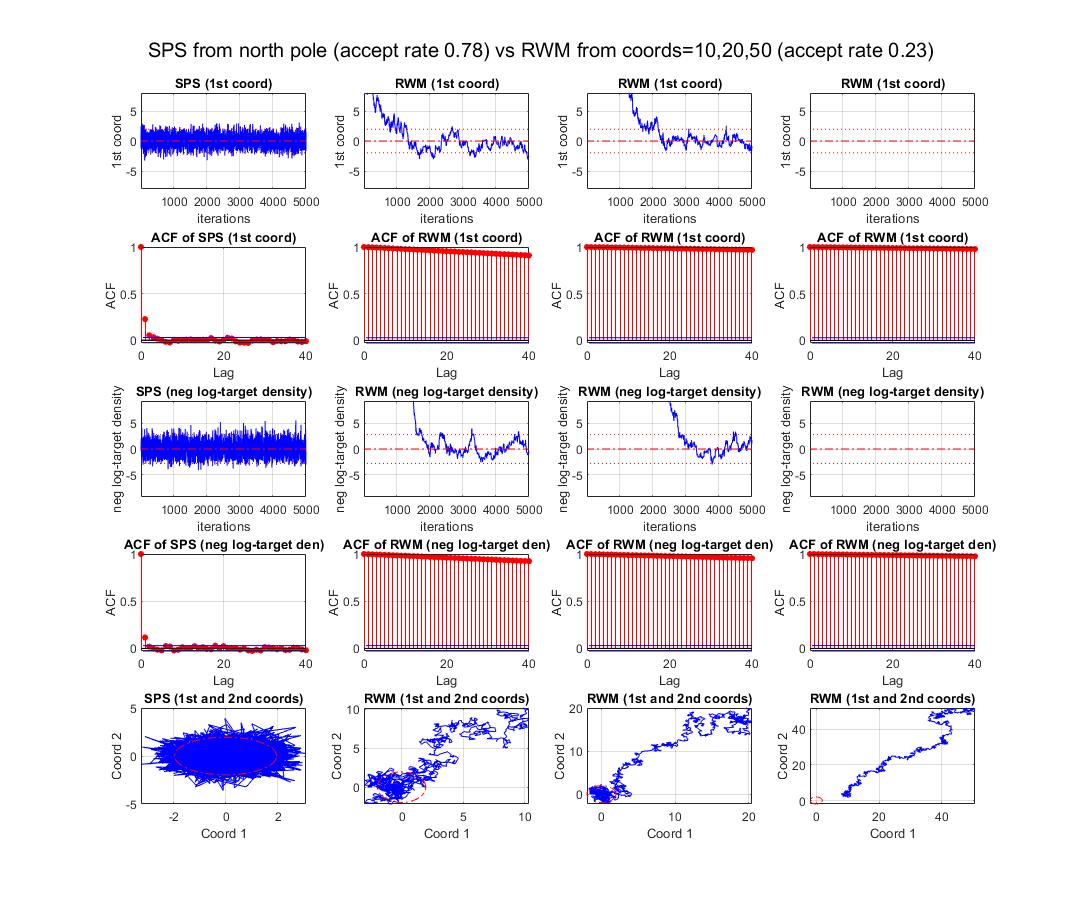}
		\caption{Traceplots and ACFs of the $1$-st coordinate, negative log-target density, and the first two coordinates, for standard Gaussian target in $100$ dimensions: SPS starts from the north pole (first column) vs RWM starts from different initial states $(c,c,\dots,c)$ where $c=10,20,50$ (the last three columns). 
		%\RED{(matlab file RWM\_Example\_Trace.m)}
		}
		\label{fig-SPS-trace}
	\end{figure}

\subsection{SPS: burn-in starting from the South Pole} \label{subsec-SPS-burin-south}
We have shown the traceplots and ACFs for starting from the North pole in \cref{fig-SPS-trace}. In this example,  \cref{fig-SPS-trace-2} shows traceplots and ACFs of SPS (the first column) starting from the South Pole and RWM (the second column) starting from the target mode. We plot the traceplots and ACFs of the $1$-st coordinate and negative log-target density, as well as traceplots of the first two coordinates. The target is standard Gaussian in $d=100$ dimensions. The proposal variance for RWM is tuned such that the acceptance rate is about $0.234$, which is known to be the optimal acceptance rate \citeps{Roberts1997}. For SPS, the acceptance rate is roughly $0.78$, which is the lowest acceptance rate possible.

According to \cref{fig-SPS-trace-2}, the SPS mixes almost immediately. Indeed, the transient phase of SPS in $d=100$ dimensions is less than $10$ iterations. Compared with SPS, even starting from the target mode, the first $100$ iterations are the transient phase of RWM. In the stationary phase, with an acceptance rate of $0.78$, SPS generates almost uncorrelated samples according to the ACFs. Compared with SPS, the samples from RWM with an optimal acceptance rate of $0.234$ are highly correlated in high dimensions.
	 \begin{figure}[h]
		\centering
		\includegraphics[width=\textwidth]{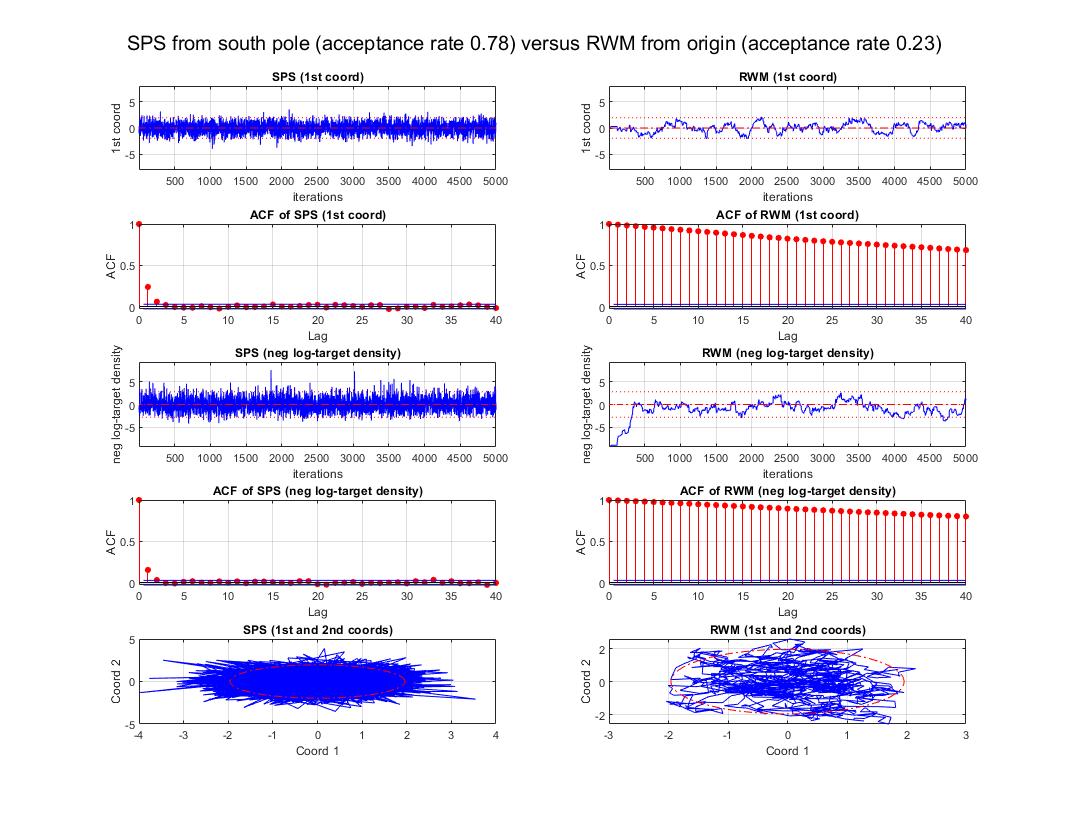}
		\caption{Traceplots and ACFs of the $1$-st coordinate, negative log-target density, and the first two coordinates, for standard Gaussian target in dimension $100$: SPS from south pole (the first column) vs RWM from the origin (the second column).  
		%\RED{(matlab file RWM\_Example\_Trace.m)}
		}
		\label{fig-SPS-trace-2}
	\end{figure}

\subsection{SPS versus GSPS}

In this example, we study SPS when the covariance matrix of the target distribution does not equal the identity matrix. We choose $R=\sqrt{d}$ and the target is multivariate student's $t$ with covariance matrix $\Sigma$ that satisfies the sum of eigenvalues of $\Sigma$ is $d$, i.e., $\lambda_1+\dots+\lambda_d=d$. We compare the decay of the performance of SPS with the performance of GSPS which uses the covariance matrix $\Sigma$.
In the simulations, we consider the cases that a sequence of covariance matrices $\Sigma_1,\Sigma_2,\dots,\Sigma_{50}$ are all block-diagonal and we add different numbers of diagonal sub-blocks which equal to a particular $2\times 2$ matrix. For example, for the case of one sub-block, denoted by $\Sigma_1$:
	\*[
	\Sigma_1:=\begin{pmatrix}
	1 & 0.8 & 0 & \cdots & 0\\
	0.8 & 1 & 0 & \cdots & 0\\
	0 & 0 & 1 & \cdots & 0\\
	\vdots  & \vdots  & \cdots & \ddots & \vdots  \\
	0 & 0 & 0 & \cdots & 1
	\end{pmatrix}
	\]
	One can easily verify the eigenvalues are
	\*[
	\Lambda_1=(0.2,1.8,1,1,\dots,1)^T
	\]
	and the first two coordinates have rotation with angle $3\pi/4$, since
	\*[
	\begin{pmatrix}
	1 & 0.8\\0.8 & 1
	\end{pmatrix}
	=\begin{pmatrix}
	\cos\frac{3\pi}{4} &-\sin\frac{3\pi}{4}\\ \sin\frac{3\pi}{4} & \cos\frac{3\pi}{4}
	\end{pmatrix}
	\begin{pmatrix}
	0.2 & 0\\ 0 & 1.8
	\end{pmatrix}
	\begin{pmatrix}
	\cos\frac{3\pi}{4} &-\sin\frac{3\pi}{4}\\ \sin\frac{3\pi}{4} & \cos\frac{3\pi}{4}
	\end{pmatrix}^T
	\]
	Thus, we have the corresponding rotation matrix used by GSPS:
	\*[
	Q_1:=\begin{pmatrix}
	\cos\frac{3\pi}{4} &-\sin\frac{3\pi}{4} & 0 & \cdots & 0\\
	\sin\frac{3\pi}{4} & \cos\frac{3\pi}{4} & 0 & \cdots & 0\\
	0 & 0 & 1 & \cdots & 0\\
	\vdots  & \vdots  & \cdots & \ddots & \vdots  \\
	0 & 0 & 0 & \cdots & 1.
	\end{pmatrix}
	\]
	Similarly, we can replace the diagonal terms by $2$ sub-blocks to define $\Sigma_2$ and the corresponding rotation matrix:
		\*[
			\Sigma_2:=\begin{pmatrix}
	1 & 0.8 & 0 & 0 & 0 & \cdots & 0\\
	0.8 & 1 & 0 & 0 & 0 & \cdots & 0\\
		0 & 0 & 1 &  0.8 &   0 & \cdots & 0\\
	0 & 0 & 0.8 & 1 & 0  & \cdots & 0\\
	0 & 0 & 0 & 0 & 1 & \cdots & 0\\
	\vdots  & \vdots  & \cdots & \ddots & \vdots  \\
	0 & 0 & 0 & 0 & 0 & \cdots & 1.
	\end{pmatrix},\quad
	Q_2:=\begin{pmatrix}
	\cos\frac{3\pi}{4} &-\sin\frac{3\pi}{4} & 0 & 0 & 0 & \cdots & 0\\
	\sin\frac{3\pi}{4} & \cos\frac{3\pi}{4} & 0 & 0 & 0 & \cdots & 0\\
		0 & 0 & \cos\frac{3\pi}{4} &-\sin\frac{3\pi}{4} &   0 & \cdots & 0\\
	0 & 0 & \sin\frac{3\pi}{4} & \cos\frac{3\pi}{4} & 0  & \cdots & 0\\
	0 & 0 & 0 & 0 & 1 & \cdots & 0\\
	\vdots  & \vdots  & \cdots & \ddots & \vdots  \\
	0 & 0 & 0 & 0 & 0 & \cdots & 1
	\end{pmatrix}.
	\]
	We can define $\Sigma_3,\dots,\Sigma_{50}$ and the corresponding rotation matrices used for GSPS in the same way.
	
	We run simulations for SPS for certain targets without using the true covariance matrices and we tune the acceptance rate of SPS to be $0.234$ (or the closest possible). For comparison with the GSPS, we assume GSPS uses the correct covariance matrices of the targets, but the proposal variances $h^2$ are chosen to be \emph{the same} as those used for SPS. We then consider the performance of GSPS as the \emph{benchmark}, since GSPS provides the optimal performance for SPS by using the information of the true target covariance matrix with the same proposal variance. We consider ACFs of two cases: the ``latitude'' and the $1$-st coordinate. In \cref{fig-ellipse-latitude}, the first and third columns show the ACFs of the latitude and first coordinate of SPS for $d=100$ dimensional multivariate student's $t$ targets with different covariance matrices indexed by the number of sub-blocks (the proposal variances are tuned to target the $0.234$ acceptance rate). For the same proposal variances, the second and fourth columns show the performance of the GSPS (the ``benchmark'') using the true target covariance matrices. From \cref{fig-ellipse-latitude}, one can see that the SPS performs well when the number of sub-blocks is small. For the first and second rows, the acceptance rates of SPS cannot be as low as $0.234$ and they equal $0.48$ and $0.30$, respectively. The ACF decreases slower as the lag when the number of sub-blocks increases. From the ``benchmark'' in the second and fourth columns, we know the proposal variance $h^2$ becomes smaller. However, even in the case of $\Sigma_{50}$ (the last row), the ACF still decreases much faster than RWM \cref{fig-SPS-trace} for Gaussian targets. Note that in this example, all the targets are heavy-tailed and we know RWM is not even geometrically ergodic so the performance of RWM would be much worse than in \cref{fig-SPS-trace}. Overall, we can conclude that SPS is much better than RWM even when the covariance matrix is $\Sigma_{50}$. The performance can be significantly improved by GSPS even without changing the proposal variance. This suggests that adaptively tuning the parameters of GSPS (in the adaptive MCMC framework) is very promising.

		\begin{figure}[h]
		\centering
		\includegraphics[width=\textwidth]{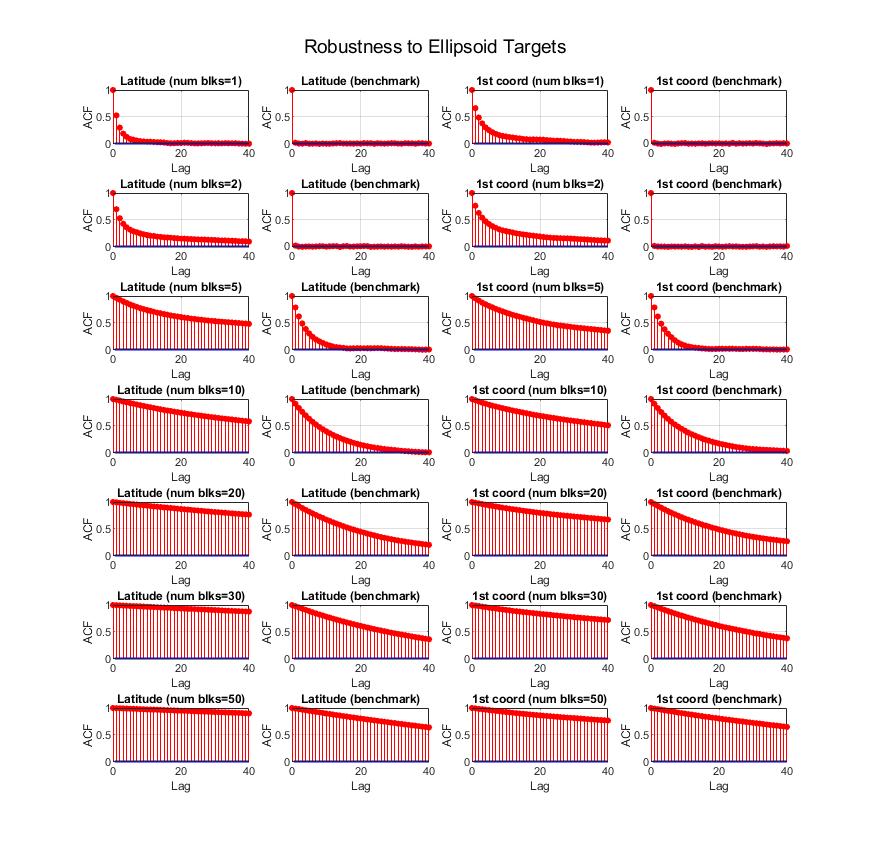}
		\caption{SPS versus GSPS: ACFs of latitude (first two columns) and $1$-st coordinate (last two columns) for SPS (first and third columns) and GSPS (second and fourth columns). The proposal variances of SPS are chosen to target the $0.234$ acceptance rate (for the first and second rows, the acceptance rates are $0.48$ and $0.30$, respectively). The proposal variances of GSPS are chosen to be the same as those for SPS to provide the ``benchmark'' performance. Target is $100$-dim ellipsoid targets using $2\times 2$ blocks; GSPS (benchmark) uses the true covariance matrices. %\RED{(matlab file: RWM\_Example\_Ellipsoid.m)}
		}
		\label{fig-ellipse-latitude}
	\end{figure}

%		\begin{figure}[b]
%		\centering
	%	\includegraphics[width=0.8\textwidth]{2020-04-12/ellipse-latitude}
	%	\includegraphics[width=0.6\textwidth]{} % cheaper version
	%	\includegraphics[width=0.6\textwidth]{}
%		\caption{\RED{OUTDATED} ACF of latitudes ellipsoid targets using eigenvalue perturbation; $1$-st subplot is by ideal transformation. Proposal var $\sigma^2=1$ (up) and $\sigma^2=100$ (down) \emph{(matlab file: RWM\_Example\_PerturbEigenvalues.m)}}
%		\label{fig-perturb-latitude}
%	\end{figure}

\subsection{The definition of Effective Sample Size (ESS) for SBPS}\label{subsec-def-ESS}

We study the PDMP algorithms using the notion of Effective Sample Size (ESS) per Switch. We first recall the definition of ESS in \cites[supplemental material]{bierkens2019zig}. Consider a function $g:\Reals^d\to \Reals$, for a continuous process $Y(t)$, by CLT
\*[
\frac{1}{\sqrt{t}}\int_0^t \left[g(Y(s))-\pi(h)\right]\dee s \to \sigma^2_g
\]
where $\sigma^2_h$ is called \emph{asymptotic variance}.

The asymptotic variance can be estimated using batches of length $T/B$
\*[
\widehat{\sigma_g^2}=\frac{1}{B-1}\sum_{i=1}^B(X_i-\bar{X})^2,
\]
where $\bar{X}=\frac{1}{B}\sum_i X_i$ and
\*[
X_i:=\sqrt{\frac{B}{T}}\int_{(i-1)T/B}^{iT/B}g(Y(s))\dee s.
\]
We also estimate the mean and variance under $\pi$ by
\*[
\widehat{\pi(g)}:=\frac{1}{T}\int_0^T g(Y(s))\dee s,\quad \widehat{\var_{\pi}(g)}:=\frac{1}{T}\int_0^T g(Y(s))^2\dee s- \left(\widehat{\pi(g)}\right)^2.
\]
Then the ESS is estimated by
\*[
\widehat{\textrm{ESS}}:=\frac{T\, \widehat{\var_{\pi}(g)}}{\widehat{\sigma_g^2}}
\]
The ESS per Switch is estimated by
\*[
\textrm{ESS per Switch}=\frac{\widehat{\textrm{ESS}}}{\textrm{Number of Events Simulated}}
\]

We consider three cases for $g$, one is the first coordinate, the second one is the {\emph{negative log-target density}:
	\*[
	g(t)=\sqrt{d}\left(\frac{\|Y(t)\|^2}{d}-1\right)\sim \mathcal{N}(0,2),
	\]
	where the distribution holds for standard Gaussian targets.
} The third one is the square of the first coordinate.

\begin{remark}
If in a simulation of an event, there's only one evaluation of the full likelihood, then ESS per Switch is the same as ESS per epoch.
\end{remark}

\subsection{SBPS: traceplot and ACF}
In this example, we study SBPS via the traceplots and ACFs for the $1$-st coordinate, the negative log density, and the squared $1$-st coordinate, respectively. In \cref{fig-pdmp-trace-d}, the target is multivariate student's $t$ distribution with $d=100$ degrees of freedom. Since SBPS is a continuous-time process, for the traceplot and ACF, we further discretize every unit period into $5$ samples. $N=1000$ events are simulated with a low refresh rate of $0.2$.
For other settings such as high refresh rate, light-tailed targets, and different covariance matrices, we will study in later subsections for additional simulations for SBPS.

There are several interesting observations from \cref{fig-pdmp-trace-d}. The ACFs for all three cases have certain periodic behaviors and can take negative values for the first two cases. We also include the ESS per Switch for the three cases. 
%The definition of ESS is in \cref{subsec-def-ESS} which is the same as in \citep[supplemental material]{bierkens2019zig}. 
As a result of negative ACFs, the ESS per Switch is larger than $1$ in the first two cases. This suggests asymptotic variance for estimating the $1$-st coordinate or the negative log density using SBPS is even smaller than the variance using $N$ independent samples from the target.

	\begin{figure}[h]
		\centering
		\includegraphics[width=\textwidth]{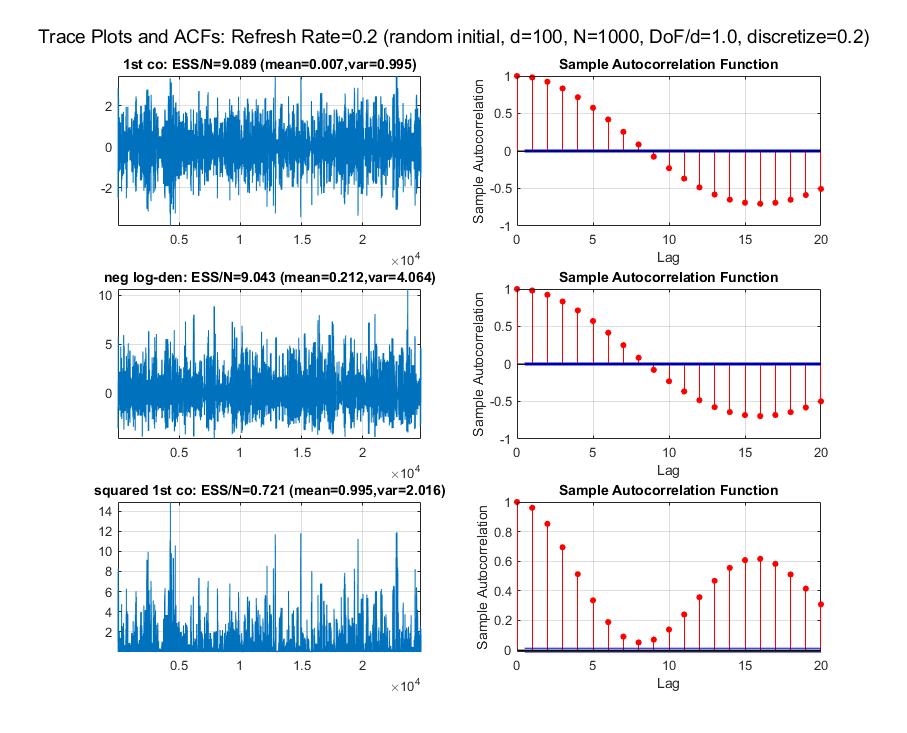}
		\caption{Trace plots and ACFs for the $1$-st coordinate, the negative log-density, and the squared $1$-st coordinate. Target distribution is multivariate student's $t$ distribution with DoF$=d=100$. Every unit period is discretized into $5$ samples. $N=1000$ events are simulated.  Low refresh rate $=0.2$. The ESS per switch is larger than $1$ and the ACF can be negative in the first two cases. %\RED{(matlab file: PDMP\_Example\_Trace\_Tail.m)}
		}
		\label{fig-pdmp-trace-d}
	\end{figure}

\subsection{SBPS: traceplots and ACFs (cont.)} 

In \cref{fig-pdmp-trace-d}, we have studied SBPS by the traceplots and ACFs for the multivariate student's $t$ target with DoF$=d=100$ for low refresh rate. In this example, we study other settings. \cref{fig-pdmp-trace-d-2} shows the traceplots and ACFs for multivariate student's $t$ target with large refresh rate), \cref{fig-pdmp-trace-inf} and \cref{fig-pdmp-trace-inf-2} are those for standard Gaussian target (with $d=100$) using small and large refresh rates, respectively. We consider three cases: the $1$-st coordinate, the negative log density, and the squared $1$-st coordinate, respectively. Since SBPS is a continuous time process, for the traceplot and ACF, we further discretize every unit of time into $5$ samples. $N=1000$ events are simulated. A low refresh rate corresponds to $0.2$ and a high refresh rate corresponds to $2$.

From \cref{fig-pdmp-trace-inf}, there are several similar interesting observations as in \cref{fig-pdmp-trace-d}. The ACFs for all three cases have certain periodic behaviors and can take negative values for the first two cases. As a result of negative ACFs, the ESS per Switch is larger than $1$ in the first two cases. For high refresh rate cases such as in \cref{fig-pdmp-trace-d-2} and \cref{fig-pdmp-trace-inf-2}, the periodic behavior and negativity of ACFs disappear and the corresponding ESS per Switch is smaller than $1$, (note that the performance for SBPS is always significantly better than the traditional BPS in the simulations). This suggests that a larger proportion of bounce events are helpful since bounce events use the information of the gradients of log-target density. As the refreshment events for SBPS are very efficient in the transient phase, the refresh rate can be chosen to be very low. This is not the case for BPS. The refreshments in the transient phase for BPS are inefficient in high dimensions.

		\begin{figure}[h]
		\centering
		\includegraphics[width=\textwidth]{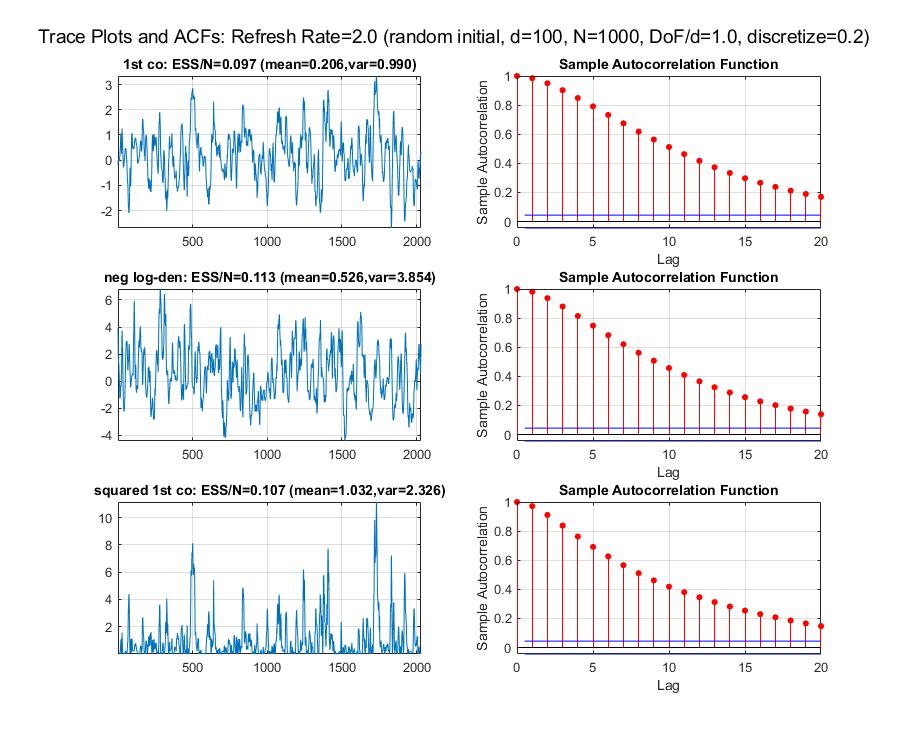} % cheaper version
		\caption{Trace plots and ACFs for the $1$-st coordinate, the negative log-density, and the squared $1$-st coordinate. Target distribution is multivariate student's $t$ with DoF$=d=100$. Every unit period is discretized into $5$ samples. $N=1000$ events are simulated. High refresh rate $=2$. %\RED{(matlab file: PDMP\_Example\_Trace\_Tail.m)}
		}
		\label{fig-pdmp-trace-d-2}
	\end{figure}
	
	\begin{figure}[h]
		\centering
		\includegraphics[width=\textwidth]{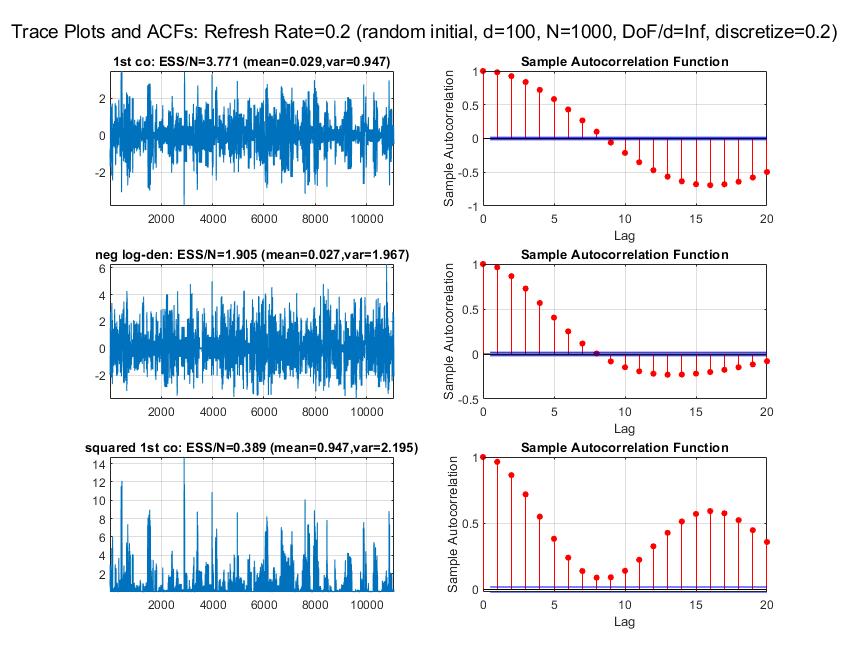}
		\caption{Trace plots and ACFs for the $1$-st coordinate, the negative log-density, and the squared $1$-st coordinate. The target distribution is standard Gaussian with $d=100$. Every unit period is discretized into $5$ samples. $N=1000$ events are simulated. Low refresh rate $=0.2$. %\RED{(matlab file: PDMP\_Example\_Trace\_Tail.m)}
		}
		\label{fig-pdmp-trace-inf}
	\end{figure}
		\begin{figure}[h]
		\centering
		\includegraphics[width=\textwidth]{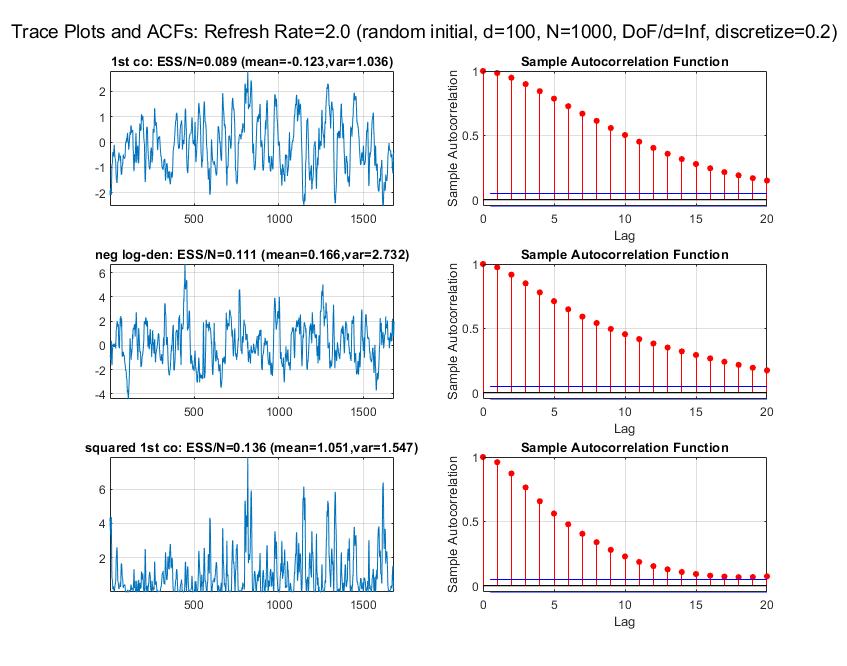} % cheaper version
		\caption{Trace plots and ACFs for the $1$-st coordinate, the negative log-density, and the squared $1$-st coordinate. The target distribution is standard Gaussian with $d=100$. Every unit period is discretized into $5$ samples. $N=1000$ events are simulated. High refresh rate $=2$. %\RED{(matlab file: PDMP\_Example\_Trace\_Tail.m)}
		}
		\label{fig-pdmp-trace-inf-2}
	\end{figure}

\subsection{SBPS: ESS per Switch for multivariate student's $t$ target} 

 We have studied the efficiency curves of SBPS and BPS in terms of ESS per Switch versus the refresh rate for the Gaussian target in \cref{fig-pdmp-ESSpS}. In this example, we consider the case of the multivariate student's $t$ target with DoF equals $d$. The first subplot of \cref{fig-pdmp-ESSpS-2} contains the proportion of refreshments in all the $N$ events for varying refresh rates. Note that there are no bounce events for the SBPS so all the events for SBPS are refreshments. In the other three subplots of \cref{fig-pdmp-ESSpS-2}, we plot the logarithm of ESS per Switch as a function of the refresh rate for three cases, the $1$-st coordinate, the negative log-density, and the squared $1$-st coordinate. For each efficiency curve, $N=1000$ events are simulated, random initial value for SBPS and BPS starts from stationarity. As SBPS and BPS are continuous-time processes, each unit time is discretized into $5$ samples. 

According to \cref{fig-pdmp-ESSpS-2}, the ESS per Switch of SBPS is much larger than the ESS per Switch of BPS for all cases (actually the gap is larger than the cases for Gaussian target, and the gap becomes larger in higher dimensions). For all three cases, the ESS per Switch of SBPS can be larger than $1$ if the refresh rate is relatively low, this is also better than the cases for the Gaussian target. For BPS, however, even starting from stationarity, the ESS per Switch is always much smaller than $1$.

		\begin{figure}[h]
		\centering
		\includegraphics[width=0.9\textwidth]{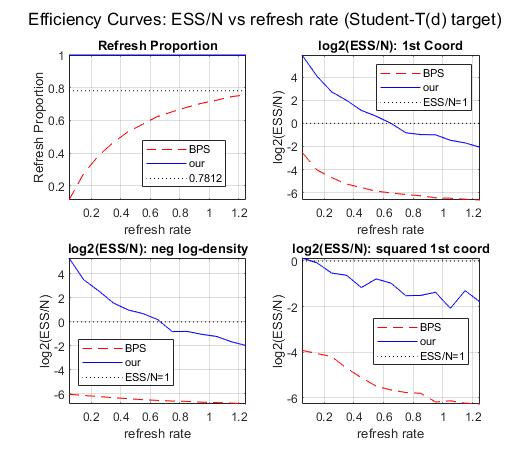} % cheaper version
		\caption{Efficiency: ESS per Switch for SBPS and BPS for varying refresh rate.  $N=1000$ events are simulated, Random initial values for SBPS and BPS start from stationarity. Each unit time is discretized to $5$ samples. Target distribution: Multivariate student's $t$ with DoF$=d=100$. The first subplot is the proportion of refreshment events in all $N$ events. The other three subplots are ESS for the $1$-st coordinate, the negative log density, and the squared $1$-st coordinate, respectively.
		%\RED{(matlab file: PDMP\_Example\_ESSpS\_Refresh.m)}
		}
		\label{fig-pdmp-ESSpS-2}
	\end{figure}

\subsection{SBPS: robustness to target variance}

Previously, in \cref{fig-pdmp-trace-inf} and \cref{fig-pdmp-trace-inf-2}, we have studied the traceplots and ACFs of SBPS for standard Gaussian target with $d=100$ using small and large refresh rates, respectively. In this example, we repeat the same numerical experiments except that, instead of considering standard Gaussian targets, we consider 
Gaussian targets with larger or smaller variances.
Since we still choose $R=\sqrt{d}$, this is equivalent to studying the robustness of SBPS to the choice of the radius of the sphere. Other than the target variances, numerical experiment settings are the same as in \cref{fig-pdmp-trace-inf} and \cref{fig-pdmp-trace-inf-2}.

We consider two cases, Gaussian target $\mathcal{N}(0,1.5 I_d)$ (see \cref{fig-pdmp-trace-large-var} for small refresh rate and \cref{fig-pdmp-trace-large-var-2} for large refresh rate) and Gaussian target $\mathcal{N}(0,0.7I_d)$ (see \cref{fig-pdmp-trace-small-var} for small refresh rate and \cref{fig-pdmp-trace-small-var-2} for large refresh rate). Comparing with the cases when the variance is $1$ in  \cref{fig-pdmp-trace-inf} and \cref{fig-pdmp-trace-inf-2}, we can see that the traceplots and ACFs for the first coordinate and squared first coordinate are not affected too much. For the negative log-target density, the traceplots and ACFs behave quite differently. When the target covariance matrix is $1.5I_d$, more bounce events occur when the SBPS is moving to the equator. As a result, the SBPS stays almost all the time in the ``Northern Hemisphere''. When the target covariance matrix is $0.7I_d$, the situation is exactly the opposite: the SBPS stays almost all the time in the ``Southern Hemisphere''. The traceplots and ACFs in all four figures show that the performance of SBPS is quite robust to the target variance.

	\begin{figure}[h]
		\centering
		\includegraphics[width=\textwidth]{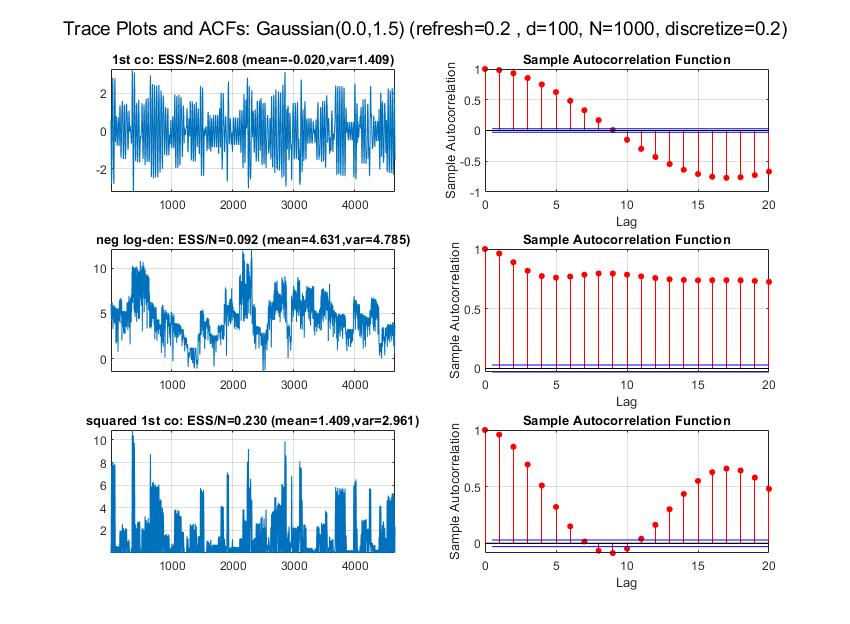}
		\caption{Trace plots and ACFs for the $1$-st coordinate, the negative log-density, and the squared $1$-st coordinate. Target distribution is $\mathcal{N}(0,1.5 I_d)$ with $d=100$ and we choose $R=d^{1/2}$. Every unit period is discretized into $5$ samples. $N=1000$ events are simulated. Low refresh rate $=0.2$.
		%\RED{(matlab file: PDMP\_Example\_Trace\_Mean\_Variance.m)}
		}
		\label{fig-pdmp-trace-large-var}
	\end{figure}
		\begin{figure}[h]
		\centering
		\includegraphics[width=\textwidth]{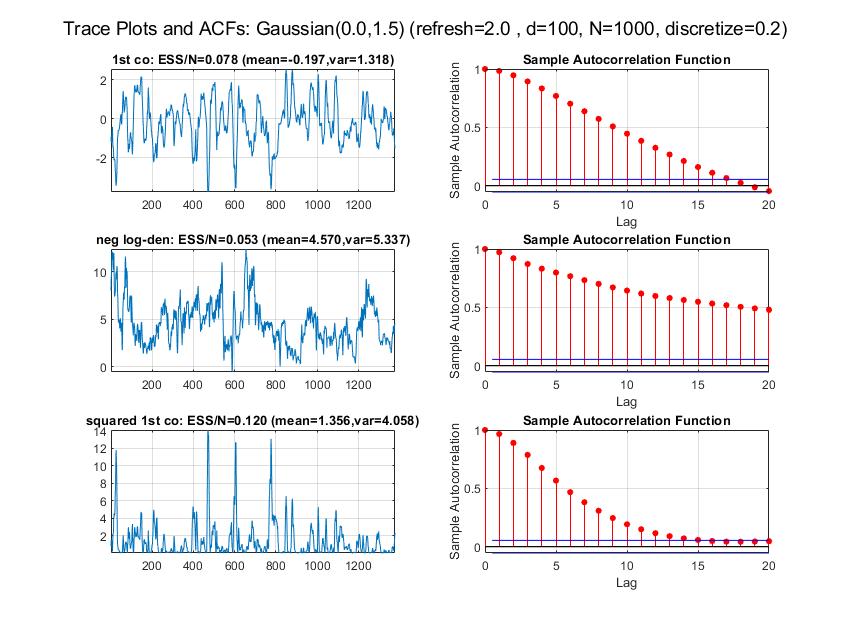} % cheaper version
		\caption{Trace plots and ACFs for the $1$-st coordinate, the negative log-density, and the squared $1$-st coordinate. Target distribution is $\mathcal{N}(0,1.5 I_d)$ with $d=100$ and we choose $R=d^{1/2}$. Every unit time is discretized into $5$ samples. $N=1000$ events are simulated. High refresh rate $=2$.
		%\RED{(matlab file: PDMP\_Example\_Trace\_Mean\_Variance.m)}
		}
		\label{fig-pdmp-trace-large-var-2}
	\end{figure}
		\begin{figure}[h]
		\centering
		\includegraphics[width=\textwidth]{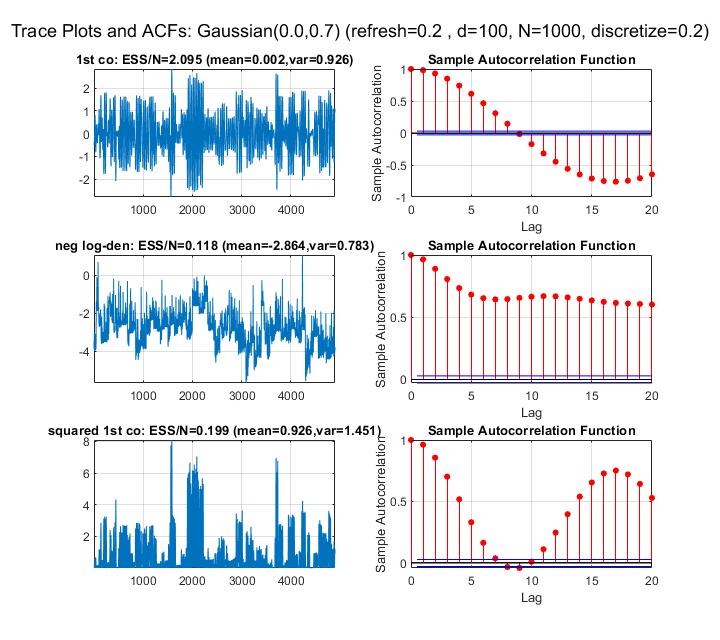}
		\caption{Trace plots and ACFs for the $1$-st coordinate, the negative log-density, and the squared $1$-st coordinate. Target distribution is $\mathcal{N}(0, 0.7 I_d)$ with $d=100$ and we choose $R=d^{1/2}$. Every unit of time is discretized into $5$ samples. $N=1000$ events are simulated. Low refresh rate $=0.2$.  %\RED{(matlab file: PDMP\_Example\_Trace\_Mean\_Variance.m)}
		}
		\label{fig-pdmp-trace-small-var}
	\end{figure}
			\begin{figure}[h]
		\centering
		\includegraphics[width=\textwidth]{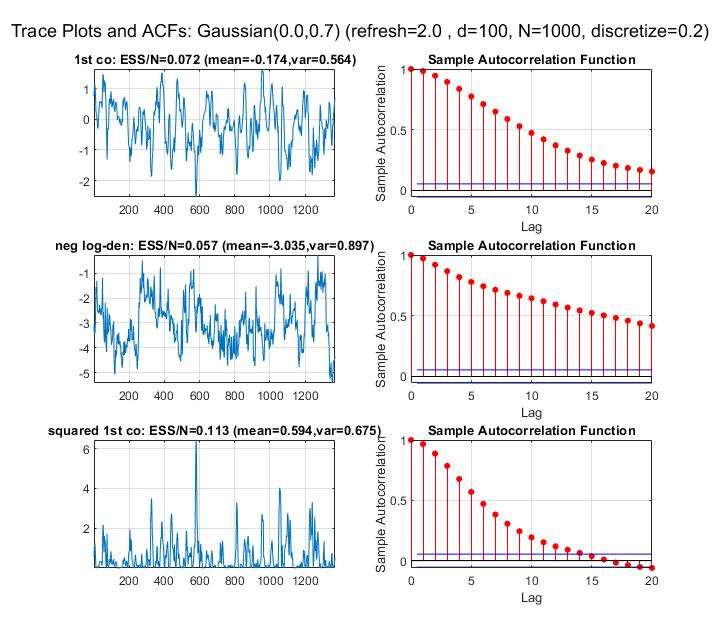} % cheaper version
		\caption{Trace plots and ACFs for the $1$-st coordinate, the negative log-density, and the squared $1$-st coordinate. Target distribution is $\mathcal{N}(0, 0.7 I_d)$ with $d=100$ and we choose $R=d^{1/2}$. Every unit period is discretized into $5$ samples. $N=1000$ events are simulated. High refresh rate $=2$.  %\RED{(matlab file: PDMP\_Example\_Trace\_Mean\_Variance.m)}
		}
		\label{fig-pdmp-trace-small-var-2}
	\end{figure}

\clearpage
\bibliographystyle{plainnat}
\bibliography{mcmc2.bib}
\end{document}